\documentclass[11pt,journal,draftcls,onecolumn,peerreviewca]{IEEEtran}
\usepackage{amsfonts}
\usepackage{amsmath}
\usepackage{amssymb}
\usepackage{bm}
\usepackage{color}
\usepackage{float}
\usepackage{graphics}
\usepackage{graphicx}
\usepackage{hyperref}
\usepackage{indentfirst}
\usepackage{mathrsfs}
\usepackage[numbers,sort&compress]{natbib}
\usepackage[amsmath,thmmarks]{ntheorem}
\usepackage{ntheorem}
\usepackage{stfloats}
\usepackage{stmaryrd}
\usepackage{soul}
\usepackage{subfigure}
\usepackage{theorem}
\usepackage{url}

\newtheorem{corollary}{Corollary}
\newtheorem{definition}{Definition}
\newtheorem{lemma}{Lemma}
\newtheorem{property}{Property}
\newtheorem{proposition}{Proposition}

\newtheorem{theorem}{Theorem}

\theoremheaderfont{\sc}\theorembodyfont{\upshape}
\theoremstyle{nonumberplain}
\theoremseparator{}
\theoremsymbol{\rule{1ex}{1ex}}
\newtheorem{proof}{Proof}

\newcommand{\RomanNumeralCaps}[1]{\MakeUppercase{\romannumeral #1}}

\hyphenation{op-tical newhichonduc-tor}
\par\setlength{\parindent}{1em}

\begin{document}
\title{On the Identifiability from Modulo Measurements under DFT Sensing Matrix}
\author{Qi Zhang, Jiang Zhu, Fengzhong Qu, Zheng Zhu, and De Wen Soh}
\maketitle
\begin{abstract}
Modulo sampling (MS) has been recently introduced to enhance the dynamic range of conventional ADCs by applying a modulo operator before sampling. This paper examines the identifiability of a measurement model where measurements are taken using a discrete Fourier transform (DFT) sensing matrix, followed by a modulo operator (modulo-DFT). Firstly, we derive a necessary and sufficient condition for the unique identification of the modulo-DFT sensing model based on the number of measurements and the indices of zero elements in the original signal. Then, we conduct a deeper analysis of three specific cases: when the number of measurements is a power of $2$, a prime number, and twice a prime number. Additionally, we investigate the identifiability of periodic bandlimited (PBL) signals under MS, which can be considered as the modulo-DFT sensing model with additional symmetric and conjugate constraints on the original signal. We also provide a necessary and sufficient condition based solely on the number of samples in one period for the unique identification of the PBL signal under MS, though with an ambiguity in the direct current (DC) component. Furthermore, we show that when the oversampling factor exceeds $3(1+1/P)$, the PBL signal can be uniquely identified with an ambiguity in the DC component, where $P$ is the number of harmonics, including the fundamental component, in the positive frequency part. Finally, we also present a recovery algorithm that estimates the original signal by solving integer linear equations, and we conduct simulations to validate our conclusions.
\end{abstract}

\begin{IEEEkeywords}
Modulo operation, DFT sensing matrix, periodic bandlimited signal, identifiability
\end{IEEEkeywords}

\section{Introduction}
In modern electronic systems, digital acquisition of analog signals is an essential operation facilitated by digital signal processors (DSPs). Analog-to-digital converters (ADCs) play a pivotal role in this process, sampling the analog signal uniformly and outputting the digital formats. For conventional ADCs, the dynamic range, defined as the ratio of the largest measurable signal to the smallest measurable signal, is $6B+1.73$ (dB), where $B$ is the bit-depth of the ADC. Due to the exponential growth in power consumption with an increase in the number of bits, the dynamic range of conventional ADCs is frequently inadequate in numerous practical scenarios, as illustrated below:
\begin{itemize}
    \item In massive multiple-input multiple-output (MIMO) communication systems \cite{zhang2018low}, low-bit ADCs are required due to the limited power supply, which introduces severe quantization noise when conventional ADCs are used. 
    \item In scenarios where strong and weak targets coexist \cite{zhang2019range}, unless a high-bit ADC is utilized, which is power-consuming, the sensor faces a dilemma. It can either prioritize the strong signal, resulting in the weak signal being submerged in quantization noise, or prioritize the weak signal, leading to clipping or saturation of the strong signal.
\end{itemize}
Recently, the unlimited sampling (US) framework has been proposed, which employs modular arithmetic before sampling to increase the dynamic range of the conventional ADC \cite{bhandari2017unlimited, bhandari2020unlimited}. The hardware implementation of US-ADC is introduced in \cite{bhandari2021unlimited}, and non-ideal folding and hysteresis issues arising from US-ADC have also been considered \cite{bhandari2021unlimited,florescu2022surprising}. 
Another line of work that uses the modulo operator to address the limitations of conventional ADCs is named mod-ADC, in which the impact of quantization noise is analyzed in detail from a rate-distortion perspective \cite{ordentlich2018modulo}. 

In modulo sampling (MS), the modulo operator is first applied to compress the dynamic range of the original signal by calculating the remainder when the signal is divided by the full-scale range of a conventional ADC, and this modulo signal is then sampled by the conventional ADC. Since the modulo operation is a many-to-one mapping and thus non-invertible, directly recovering the original signal from the modulo samples is generally impossible for arbitrary signals. This underscores the necessity and importance of studying the identifiability of models using MS. Recently, several works have studied the identifiability of the MS-based measurement models and developed recovery algorithms with guarantees. These studies can be categorized into two main types: the first involves applying MS to signals with specific properties, and the second involves preprocessing the signal to enhance its correlation before applying MS to obtain the observations. 
For bandlimited finite-energy signals, the US algorithm (USAlg) was first proposed, providing a recovery guarantee when the oversampling rate is above $2\pi e$ times the Nyquist rate \cite{bhandari2017unlimited, bhandari2020unlimited}. Additionally, \cite{bhandari2019identifiability} demonstrates that when the sampling rate exceeds the Nyquist rate, there is a one-to-one mapping between modulo samples and the original bandlimited finite-energy signal. This conclusion is also proven in \cite{romanov2019above} using an alternative method. Furthermore, \cite{romanov2019above} proposes a recovery algorithm based on linear prediction (LP) for bandlimited finite-energy signals under MS, providing a recovery guarantee when the oversampling rate is above the Nyquist rate, given the knowledge of the autocorrelation function of the signal. For sparse signals, the identifiability of the modulo compressed sensing (modulo-CS) model is studied, where a linear transform is applied first, considering both underdetermined and overdetermined setups, followed by a modulo operator \cite{prasanna2020identifiability}. The linear transform serves as a preprocessing step for the sparse signal before applying MS, offering two benefits: it enhances the correlation of the sparse signal and reduces the number of measurements needed in the underdetermined setup. In detail, the minimum number of measurements required to uniquely recover any $s$-sparse signal from modulo-CS measurements is $2s+1$. In addition, for any $s$-sparse vector, a measurement matrix with $2s+1$ rows and entries drawn independently from any continuous distribution satisfies the identifiability conditions with high probability. For periodic bandlimited (PBL) signals, the Fourier-Prony approach is proposed to handle non-ideal folding instants in practical hardware, with a recovery guarantee provided when the number of folds is known \cite{bhandari2021unlimited}. For bandpass signals, \cite{shtendel2023unlimited} demonstrates that the original signal can be recovered from the modulo samples up to an unknown constant when the sampling period is less than $\frac{1}{2\Omega_{U,L}e}$, where $\Omega_{U,L}$ is the bandwidth of the signal, achieving sub-Nyquist sampling rates, and proposes both time and frequency domain-based algorithms. In \cite{mulleti2024modulo}, the finite-rate-of-innovation (FRI) signal with $L$ pulses is first preprocessed by a sum-of-sincs (SoS) kernel, then the MS is applied. Note that the output of the SoS kernel to be sampled via MS is an $L$-order PBL signal. The authors proved that if the number of samples per period denoted as $N$ is a prime number and $N> 2L+1$, then the $L$-order PBL signal can be uniquely identified up to a constant factor from the modulo samples. Furthermore, the FRI parameters are derived from PBL samples.

 Some other works related to MS have also been studied in depth. Several algorithms on MS via rate-distortion theory have been investigated \cite{ordentlich2016integer, ordentlich2018modulo, romanov2021blind, weiss2022blind, weiss2022blind2}. Specifically, for random Gaussian distributed vectors with a known covariance matrix, the integer-forcing decoder is presented \cite{ordentlich2016integer, ordentlich2018modulo}, and a blind version is also proposed \cite{romanov2021blind}. For a stationary Gaussian stochastic process with a known auto-covariance function, the LP method is introduced and extended to a blind version \cite{ordentlich2018modulo, weiss2022blind}. For $K$ discrete-time jointly Gaussian stationary random processes with known autocorrelation functions, the LP method and integer-forcing decoder are combined to handle this case \cite{ordentlich2018modulo}, and a blind version is also introduced \cite{weiss2022blind2}. Additionally, to retrieve the original clean signal from noisy modulo samples, two-stage algorithms have been proposed, and the corresponding analysis has been studied \cite{cucuringu2020provably, fanuel2021recovering, fanuel2022denoising, tyagi2022error}. These algorithms first denoise the modulo samples before proceeding with the recovery process. Under the oversampling setup, a dynamic programming (DP) based algorithm followed by the orthogonal matching pursuit (OMP) is proposed to reconstruct the line spectral signal from noisy modulo samples, and the effectiveness of this approach is validated through real data collected by the mmWave radar \cite{zhang2024line}. The Cram\'er-Rao bounds (CRBs) are derived with known folding-count for both quantized and unquantized cases, serving as performance benchmarks \cite{cheng2023crb}. Additionally, the unlimited one-bit (UNO) sampling framework is proposed, combining the benefits of both US and dithered one-bit quantization for bandlimited signals, along with a two-stage algorithm \cite{eamaz2024uno}.

The partial discrete Fourier transform (DFT) sensing matrix, composed of randomly selected rows from a standard DFT matrix, has been widely used to reduce the number of measurements for sparse signals in the context of CS \cite{candes2006robust, ma2014turbo}. In this paper, we utilize the standard DFT matrix to preprocess general signals with amplitudes that exceed the threshold of a conventional ADC, enhancing their correlation. Subsequently, MS is applied to the preprocessed signal to ensure the amplitude of the signal before sampling remains within the ADC's threshold. This sensing model, termed the modulo-DFT sensing model, is illustrated in Fig. \ref{moduloDFTdiag}. Here, a complex-valued signal first undergoes the delay and sample-and-hold circuits followed by the DFT sensing circuit and the mod-ADC. It should be noted that applying MS on PBL signals can be viewed as a modulo-DFT sensing model with additional symmetric and conjugate constrains on the original signal when the sampling rate is greater than the Nyquist rate. For the modulo-DFT sensing model, several questions arises naturally: Is the modulo-DFT sensing model identifiable? Provided that the model is identifiable, is there any efficient recovery algorithm? These questions serve as the motivation for this work.

The contributions of this work are summarized as follows: First, we derive a necessary and sufficient condition based on the number of measurements and the indices of zero elements of original signal for the unique recovery of the original signal from the modulo measurements in the modulo-DFT sensing model. Then, we conduct a deeper analysis of three specific cases: when the number of measurements is a power of 2, a prime number, and twice a prime number. In detail, we show that if the number of measurements is a prime number and at least two elements (including the first one) of the original signal are zero, then the modulo-DFT sensing model is identifiable. Additionally, we study the identifiability of PBL signals under MS, which can be viewed as a modulo-DFT sensing model with additional symmetric and conjugate constraints on the original signal, when the sampling rate is greater than the Nyquist rate. We provide a necessary and sufficient condition with respect to the number of measurements for unique identifiability with an ambiguity in the direct current (DC) component. Moreover, we show that when the oversampling factor exceeds $3(1+1/P)$, then the PBL signal can be uniquely identified from modulo samples with an ambiguity in the DC component, where $P$ is the number of harmonics, including the fundamental component, in the positive frequency part. Given that the model is identifiable, joint recovery of the signal and folding counts can be achieved by solving linear equations involving continuous signal variables and integer folding counts. This complex problem can be equivalently transformed into integer-constrained linear equations, which reduces the computational complexity of the recovery algorithm. Finally, simulations are conducted to validate our conclusions.

\begin{figure*}[htb!]
\begin{center}
    \centering
    \includegraphics[width=0.8\textwidth]{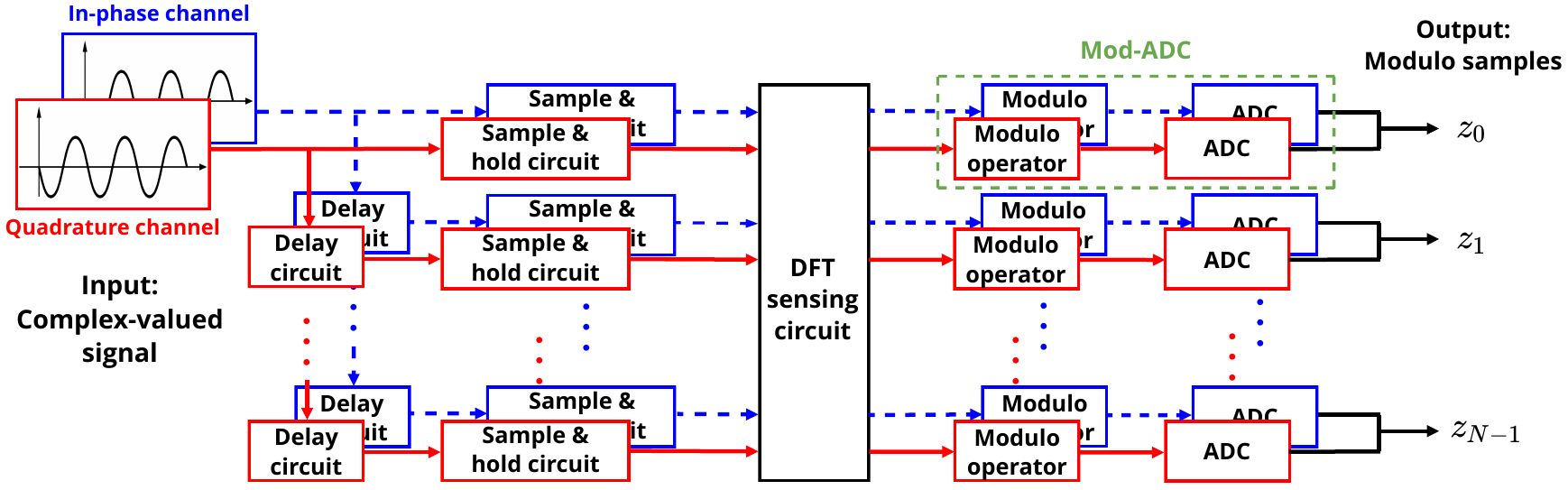}
    \caption{Schematic architecture for the modulo-DFT sensing model. A complex-valued signal undergoes the delay and sample-and-hold circuits followed by the DFT sensing circuit and the modulo circuits whose outputs are then sampled by $N$ parallel ADCs.}
    \label{moduloDFTdiag}
    \end{center}
\end{figure*}
\subsection{Notation}
Let $\mathcal{N} \triangleq \{0,1,\cdots,N-1\}$ and $\mathcal{V}$ be a subset of $\mathcal{N}$. For a complex matrix $\mathbf{F}\in\mathbb{C}^{N\times N}$, $\mathbf{F}_{\mathcal{V}}$ denotes the submatrix by choosing the columns of $\mathbf{F}$ indexed by $\mathcal{V}$\footnote{Indices for both matrices and vectors start from $0$ rather than $1$.}. Let $(\cdot)^{\rm H}$ be the Hermitian transpose operator and $\mathbf{F}^{\rm H}_{\mathcal{V}}$ denote the Hermitian transpose of $\mathbf{F}_{\mathcal{V}}$. For a complex vector $\mathbf{s}\in\mathbb{C}^{N}$, $\mathbf{s}_{\mathcal{V}}$ denotes the subvector by choosing the elements of $\mathbf{s}$ indexed by $\mathcal{V}$. We use $|\mathcal{V}|$ to denote the cardinality of $\mathcal{V}$. In addition, $\Re\{\mathbf{s}\}$ denotes the real part of $\mathbf{s}$ and $\Im\{\mathbf{s}\}$ denotes the imaginary part of $\mathbf{s}$. Let $\lfloor\cdot\rfloor$ be the floor operator and $\lceil\cdot\rceil$ be the ceiling operator. For two integers $a$ and $b$, $a \mid b$ denotes $a$ divides $b$, and $a\nmid b$ denotes $a$ does not divide $b$. Given $N$ positive integers $a_1,a_2,\cdots,a_N$, let ${\rm gcd}(a_1,a_2,\cdots,a_N)$ denote the greatest common divisor of $a_1,a_2,\cdots,a_N$. Let $f(x)$ be a polynomial and ${\rm deg}(f(x))$ denote the degree of $f(x)$. We use $\Phi_n(t)$ to denote the $n$th cyclotomic polynomial. Let $\mathbb{F}$ be a number field and $\mathbb{F}[x]\triangleq \{\sum_{i=0}^{n}a_ix^i\mid a_i\in\mathbb{F},n\in\mathbb{Z}^{+}\}$ be the ring of polynomials over $\mathbb{F}$. A Gaussian integer is a complex number such that its real and imaginary parts are both integers, and let $\mathbb{Z}[\rm {j}]\triangleq \{a+b{\rm j} \mid a, b \in \mathbb{Z}\}$ be the ring of Gaussian integers. Similarly, a Gaussian rational number is a complex number such that its real and imaginary parts are both rational numbers, and let $\mathbb{Q}[\rm {j}]\triangleq \{a+b{\rm j} \mid a, b \in \mathbb{Q}\}$ be the field of Gaussian rational numbers.

\section{Problem Setup}\label{ProbSet}
Let $\mathbf{F}\in\mathbb{C}^{N\times N}$ be the DFT matrix with the $(n_1,n_2)$th element being $\frac{1}{\sqrt{N}}{\rm e}^{-{\rm j}\frac{2\pi n_1n_2}{N}}$, where $n_1 = 0,1,\cdots,N-1$ and $n_2 = 0,1,\cdots,N-1$. The linear sensing model with the DFT sensing matrix $\mathbf{F}$ is 
\begin{align}\label{measurements}
    \mathbf{y} = \mathbf{F}\mathbf{s},
\end{align}
where $\mathbf{s}\in\mathbb{C}^{N}$ is the original complex-valued signal and $N$ is the number of measurements. Let $\mathscr{M}(\cdot): \mathbb{R}^N \mapsto [-\frac{1}{2},\frac{1}{2})^N$ be the centered modulo mapping defined as\footnote{For simplicity, we set the dynamic range of the modulo measurements as $[-\frac{1}{2},\frac{1}{2})$ in this paper. The conclusions derived in this paper are also applicable to general cases.}
\begin{align}\label{modulooperator}
    \mathscr{M}\left(\mathbf{t}\right)\triangleq \mathbf{t} - \left\lfloor\mathbf{t} + \frac{1}{2}\right\rfloor,
\end{align}
where $\mathbf{t}\in\mathbb{R}^N$ is a real vector of length $N$ and $\lfloor\cdot\rfloor$ is the element-wise floor operator. In addition, let $\mathscr{C}(\cdot)$ denote the complex centered modulo mapping, which applies the modulo operator $\mathscr{M}(\cdot)$ to both the real and imaginary parts of a complex vector, i.e., 
\begin{align}\label{commoduloope}
    \mathscr{C}(\mathbf{u})\triangleq\mathscr{M}\left(\Re{\{\mathbf{u}\}}\right) + {\rm j}\mathscr{M}\left(\Im{\{\mathbf{u}\}}\right), 
\end{align}
where $\mathbf{u}\in\mathbb{C}^N$ is a complex vector, and $\Re\{\mathbf{u}\}$ and $\Im\{\mathbf{u}\}$ return the real part and imaginary part of $\mathbf{u}$ respectively. The modulo-DFT sensing model, where the original signal is preprocessed by the DFT sensing matrix (\ref{measurements}) followed by the MS (\ref{commoduloope}), is described as 
\begin{align}\label{modulomeasurements}
    \mathbf{z} = \mathscr{C}(\mathbf{y}) = \mathscr{C}(\mathbf{F}\mathbf{s}).
\end{align}
Let $\mathbb{Z}[{\rm j}] \triangleq \{a+b{\rm j}\mid a,b\in\mathbb{Z}\}$ be the set of all Gaussian integers. It can be obtained that the modulo measurements $\mathbf{z}$ can be represented as a sum of unfolded measurements $\mathbf{y}$ and a Gaussian integer vector $\boldsymbol{\epsilon}\in\mathbb{Z}[{\rm j}]^{N}$ \cite{zhang2024line}, i.e.,
\begin{align}\label{decomsimple}
    \mathbf{z} = \mathbf{y} + \boldsymbol{\epsilon} = \mathbf{F}\mathbf{s} + \boldsymbol{\epsilon}.
\end{align}

We first demonstrate that if there is no prior information about the original signal $\mathbf{s}$, i.e., $\mathbf{s}$ can be any vector in $\mathbb{C}^N$, $\mathbf{s}$ cannot be uniquely identified from the modulo measurements $\mathbf{z}$ in the modulo-DFT sensing model (\ref{decomsimple}). In the following, we show that there exists a signal in $\mathbb{C}^N$ distinct from $\mathbf{s}$ that generates the same modulo measurements as $\mathbf{s}$. Let $\boldsymbol{\epsilon}^{\prime} \in \mathbb{Z}[{\rm j}]^{N}$ be another Gaussian integer vector distinct from $\boldsymbol{\epsilon}$ (\ref{decomsimple}), and define $\mathbf{s}^{\prime} = \mathbf{F}^{\rm H}(\mathbf{z} - \boldsymbol{\epsilon}^{\prime})$, where $(\cdot)^{\rm H}$ is the
Hermitian transpose operator. It can be verified that $\mathbf{s}^{\prime}$ produces the same modulo measurements as $\mathbf{s}$ in the modulo-DFT sensing model (\ref{modulomeasurements}). Furthermore, we have $\mathbf{s} = \mathbf{F}^{\rm H}(\mathbf{z} - \boldsymbol{\epsilon})$ according to (\ref{decomsimple}). Since $\mathbf{F}^{\rm H}$ is invertible and $\boldsymbol{\epsilon}^{\prime}$ differs from $\boldsymbol{\epsilon}$, $\mathbf{s}^{\prime}$ differs from $\mathbf{s}$. Indeed, due to the existence of infinitely many Gaussian integer vectors distinct from $\boldsymbol{\epsilon}$, for any signal $\mathbf{s}$, there exist infinitely many signals in $\mathbb{C}^{N}$ distinct from $\mathbf{s}$ that yield the same modulo measurements as $\mathbf{s}$. 

In this paper, we study the identifiability of the modulo-DFT sensing model (\ref{decomsimple}) under the condition that $\mathbf{s}$ belongs to the set $\mathbb{C}^N_{\mathcal{V}}$ which is a subspace of $\mathbb{C}^N$. Here, $\mathcal{V}$ is a subset of $\mathcal{N} \triangleq \{0, 1, \ldots, N-1\}$, and $\mathbb{C}^N_{\mathcal{V}}$ denotes the set of all complex vectors of length $N$ with elements indexed by $\mathcal{V}$ equal to zero. Since the elements of $\mathbf{s}$ indexed by $\mathcal{V}$ are $0$, $\mathbf{F}\mathbf{s}$ is equal to $\mathbf{F}_{\bar{\mathcal{V}}}\mathbf{s}_{\bar{\mathcal{V}}}$, where $\bar{\mathcal{V}}\triangleq \mathcal{N}\backslash \mathcal{V}$ is the complement of the set $\mathcal{V}$ in $\mathcal{N}$, $\mathbf{F}_{\bar{\mathcal{V}}}$ denotes the submatrix by choosing the columns of $\mathbf{F}$ indexed by $\bar{\mathcal{V}}$, and $\mathbf{s}_{\bar{\mathcal{V}}}$ denotes the subvector by choosing the elements of $\mathbf{s}$ indexed by $\bar{\mathcal{V}}$. Thus, when $\mathbf{s} \in \mathbb{C}^N_{\mathcal{V}}$, the measurement model (\ref{decomsimple}) can be written as 
\begin{align}\label{moduloequation}
    \mathbf{z}=\mathbf{y}+\boldsymbol{\epsilon}=\mathbf{F}_{\bar{\mathcal{V}}}\mathbf{s}_{\bar{\mathcal{V}}}+\boldsymbol{\epsilon}.
\end{align}

In Sec. \ref{SecIdenti}, a necessary and sufficient condition regarding the number of measurements $N$ and the set $\mathcal{V}$, which records the indices of the zero elements of $\mathbf{s}$, for uniquely recovering the original signal $\mathbf{s}$ from modulo measurements $\mathbf{z}$ in the modulo-DFT sensing model is introduced. Here, we first introduce Proposition \ref{EquiCondi} which shows that the modulo-DFT sensing model (\ref{decomsimple}) is identifiable with the constraint $\mathbf{s}\in\mathbb{C}^N_{\mathcal V}$ can be transformed into a proposition related to polynomials with Gaussian integer coefficients. In Sec. \ref{proofTheo1}, Proposition \ref{EquiCondi} is used to prove the final theorem.
\begin{proposition}\label{EquiCondi}
    The modulo-DFT sensing model (\ref{decomsimple}) is identifiable with the constraint $\mathbf{s}\in\mathbb{C}^N_{\mathcal V}$ if and only if there does not exist a nonzero Gaussian integer polynomial $f(x)$ of maximum degree $N-1$ such that $f({\rm e}^{{\rm j}\frac{2\pi n}{N}}) = 0$ for all $n\in\mathcal{V}$.
\end{proposition}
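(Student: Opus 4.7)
The plan is to translate the identifiability question into a linear-algebraic condition first, and only then turn that into the polynomial statement. Identifiability fails exactly when two distinct signals $\mathbf{s},\mathbf{s}'\in\mathbb{C}^N_{\mathcal V}$ yield the same $\mathbf{z}$, i.e.\ when their difference $\mathbf{d}\triangleq \mathbf{s}-\mathbf{s}'$ is a nonzero vector in $\mathbb{C}^N_{\mathcal V}$ whose DFT $\mathbf{F}\mathbf{d}$ equals $\boldsymbol{\epsilon}'-\boldsymbol{\epsilon}\in\mathbb{Z}[\mathrm j]^N$. Thus I would first prove the intermediate equivalence: the model is identifiable iff there is no nonzero $\mathbf{d}\in\mathbb{C}^N_{\mathcal V}$ with $\mathbf{F}\mathbf{d}\in\mathbb{Z}[\mathrm j]^N$. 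The forward direction uses the fact that $\mathscr{C}(\cdot)$ is invariant under addition of Gaussian integer vectors, so that if $\mathbf{F}\mathbf{d}=\boldsymbol{\eta}\in\mathbb{Z}[\mathrm j]^N$ then $\mathbf{s}$ and $\mathbf{s}-\mathbf{d}$ produce the same modulo measurements; the reverse direction was essentially already observed in the text.

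Next I would dualize via the unitarity of $\mathbf{F}$. Writing $\boldsymbol{\eta}=\mathbf{F}\mathbf{d}$ and inverting gives $\mathbf{d}=\mathbf{F}^{\rm H}\boldsymbol{\eta}$, so the constraint $\mathbf{d}_{\mathcal V}=\mathbf{0}$ is equivalent to
\begin{equation*}
  \sum_{k=0}^{N-1}\eta_k\,\mathrm e^{\mathrm j\frac{2\pi n k}{N}}=0\qquad\text{for every }n\in\mathcal V,
\end{equation*}
and $\mathbf{d}\ne\mathbf{0}$ is equivalent to $\boldsymbol{\eta}\ne\mathbf{0}$ because $\mathbf{F}^{\rm H}$ is invertible. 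Defining the polynomial $f(x)\triangleq\sum_{k=0}^{N-1}\eta_k x^k$, whose coefficients are Gaussian integers and whose degree is at most $N-1$, the displayed sum is precisely $f(\mathrm e^{\mathrm j 2\pi n/N})$. Hence the existence of a nonzero $\mathbf{d}\in\mathbb{C}^N_{\mathcal V}$ with Gaussian integer DFT is the same as the existence of a nonzero Gaussian integer polynomial of degree at most $N-1$ vanishing at every $N$-th root of unity indexed by $\mathcal V$.

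Combining the two equivalences gives the proposition. I expect no genuinely hard step: the modulo-invariance under $\mathbb{Z}[\mathrm j]^N$ translations and the unitarity of $\mathbf{F}$ do all the work, and the polynomial reformulation is just a notational bijection between coefficient vectors $\boldsymbol{\eta}\in\mathbb{Z}[\mathrm j]^N$ and polynomials in $\mathbb{Z}[\mathrm j][x]$ of degree $<N$. The only point that needs a little care is making sure the ``nonzero'' qualifier matches on both sides of the equivalence, which follows from the invertibility of $\mathbf{F}^{\rm H}$ so that $\mathbf{d}=\mathbf{0}\iff\boldsymbol{\eta}=\mathbf{0}\iff f\equiv 0$.
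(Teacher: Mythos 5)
Your proposal is correct and follows essentially the same route as the paper's proof: both reduce identifiability to the existence of a nonzero difference vector in $\mathbb{C}^N_{\mathcal V}$ whose image under $\mathbf{F}$ is a Gaussian integer vector, then use the unitarity of $\mathbf{F}$ to recast this as a nonzero $\boldsymbol{\eta}\in\mathbb{Z}[{\rm j}]^N$ with $\mathbf{F}^{\rm H}_{\mathcal V}\boldsymbol{\eta}=\mathbf{0}$, which is exactly the polynomial vanishing condition. The only cosmetic difference is that you state the intermediate linear-algebraic equivalence explicitly before the coefficient-vector-to-polynomial dictionary, whereas the paper folds the two steps together in each direction of the contrapositive.
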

\begin{proof}
    We first prove the necessary part by proving the contrapositive, i.e., if there exists a nonzero Gaussian integer polynomial $f(x)$ of maximum degree $N-1$ such that $f({\rm e}^{{\rm j}\frac{2\pi n}{N}}) = 0$ for all $n\in\mathcal{V}$, we have that the modulo-DFT sensing model is not identifiable with the constraint $\mathbf{s}\in\mathbb{C}^N_{\mathcal V}$. According to the definition of the DFT sensing matrix $\mathbf{F}$, the existence of a nonzero Gaussian integer polynomial $f(x)$ of maximum degree $N-1$ such that $f({\rm e}^{{\rm j}\frac{2\pi n}{N}}) = 0$ for all $n\in\mathcal{V}$ is equivalent to the existence of a nonzero Gaussian integer vector $\bar{\boldsymbol{\epsilon}}$ such that $\mathbf{F}_{\mathcal{V}}^{\rm H}\bar{\boldsymbol{\epsilon}} = \mathbf{0}$, where $\mathbf{F}_{\mathcal{V}}^{\rm H}$ is the
    Hermitian transpose of $\mathbf{F}_{\mathcal{V}}$. Let $\mathbf{s'} = \mathbf{s} - \mathbf{F}^{\rm H}\bar{\boldsymbol{\epsilon}}$ which is distinct from $\mathbf{s}$ since $\bar{\boldsymbol{\epsilon}}$ is a nonzero Gaussian integer vector. In addition, since $\mathbb{C}^N_{\mathcal{V}}$ is closed under addition and $\mathbf{F}_{\mathcal{V}}^{\rm H}\bar{\boldsymbol{\epsilon}} = \mathbf{0}$, it follows that $\mathbf{s'}\in\mathbb{C}^N_{\mathcal{V}}$. Furthermore, the modulo measurements generated by $\mathbf{s'}$ are 
    \begin{align}
        \mathscr{C}(\mathbf{F}\mathbf{s}') = \mathscr{C}(\mathbf{F}(\mathbf{s}-\mathbf{F}^{\rm H}\bar{\boldsymbol{\epsilon}})) = \mathscr{C}(\mathbf{F}\mathbf{s} - \bar{\boldsymbol{\epsilon}}) 
    \end{align}
    which equal to those generated by $\mathbf{s}$, i.e., $\mathscr{C}(\mathbf{F}\mathbf{s})$, since $\mathscr{C}(\cdot)$ is ambiguous for inputs that differ by a Gaussian integer vector. Thus, the modulo-DFT sensing model is not identifiable.

    Now, we prove the sufficient part by proving the contrapositive, i.e., if the modulo-DFT sensing model is not identifiable with the constraint $\mathbf{s}\in\mathbb{C}^N_{\mathcal V}$, then there exists a nonzero Gaussian integer polynomial $f(x)$ of maximum degree $N-1$ such that $f({\rm e}^{{\rm j}\frac{2\pi n}{N}}) = 0$ for all $n\in\mathcal{V}$. Let $\mathbf{s}'$ be a vector in $\mathbb{C}^N_{\mathcal V}$ distinct from $\mathbf{s}$ that generates the same modulo measurements as $\mathbf{s}$, i.e., $\mathscr{C}(\mathbf{F}\mathbf{s}) = \mathscr{C}(\mathbf{F}\mathbf{s}')$. Based the definition of the modulo operator $\mathscr{C}(\cdot)$ (\ref{commoduloope}) and the invertibility of $\mathbf{F}$, we have 
    \begin{align}\label{equationdiff}
        \mathbf{F}\mathbf{s} - \mathbf{F}\mathbf{s}' = \bar{\boldsymbol{\epsilon}},
    \end{align}
    where $\bar{\boldsymbol{\epsilon}}$ is a nonzero Gaussian integer vector. Furthermore, multiplying equation (\ref{equationdiff}) by $\mathbf{F}^{\rm H}$ yields $\mathbf{s} - \mathbf{s}' =  \mathbf{F}^{\rm H}\bar{\boldsymbol{\epsilon}}$. Since both $\mathbf{s}$ and $\mathbf{s}'$ are in $\mathbb{C}^N_{\mathcal V}$, we have $\mathbf{F}_{\mathcal{V}}^{\rm H}\bar{\boldsymbol{\epsilon}} = \mathbf{0}$, which is equivalent to the existence of a nonzero Gaussian integer polynomial $f(x)$ of maximum degree $N-1$ such that $f({\rm e}^{{\rm j}\frac{2\pi n}{N}}) = 0$ for all $n\in\mathcal{V}$ as previously illustrated.
\end{proof}

\section{Preliminaries}\label{SecPreli}
The identifiability of the modulo-DFT sensing model (\ref{decomsimple}) is closely related to Gaussian integer polynomials as indicated by Proposition \ref{EquiCondi}. This section introduces some properties and propositions of polynomials in number theory, which is beneficial in establishing the necessary and sufficient condition for uniquely identifying the modulo-DFT sensing model and the PBL signal via MS in the ensuing sections. Let $p$ be a prime number and $f(x) = x^{p-1} + x^{p-2} + \cdots + 1$, Property \ref{LemGauRat} shows that $f(x)$ is irreducible over the field of Gaussian rationals $\mathbb{Q}[{\rm j}]$, where $\mathbb{Q}[\rm {j}] \triangleq \{a+b{\rm j} \mid a, b \in \mathbb{Q}\}$. In addition, Property \ref{LemPowGau} shows that $f(x) = x^{2^m} + {\rm j}$ or $f(x) = x^{2^m} - {\rm j}$ is also irreducible over the field of Gaussian rationals $\mathbb{Q}[{\rm j}]$, where $m$ is a nonnegative integer.
\begin{property}\label{LemGauRat}
    Let $p$ be a prime number. The polynomial $f(x) = x^{p-1}+x^{p-2}+\cdots+1$ is irreducible over the field of Gaussian rationals $\mathbb{Q}[{\rm j}]$. In other words, $f(x)$ can not be represented as the factor of two nonconstant polynomials with Gaussian rational coefficients.
\end{property}
\begin{proof}
    The proof is postponed to Appendix \ref{AppenA}.
\end{proof}
\begin{property}\label{LemPowGau}
    Let $m$ be a nonnegative integer. The polynomial $f(x) = x^{2^m} - {\rm j}$ or $f(x) = x^{2^m} + {\rm j}$ is irreducible over the field of Gaussian rationals $\mathbb{Q}[{\rm j}]$.
\end{property}
\begin{proof}
    The proof is postponed to Appendix \ref{AppenB}.
\end{proof}

Let $f(x)$ be a polynomial over a number field $\mathbb{F}$. Proposition \ref{UniFacPoly} shows the uniqueness of factorization of $f(x)$ in $\mathbb{F}[x]$, where $\mathbb{F}[x] \triangleq \{\sum_{i=0}^{n}a_ix^i\mid a_i\in\mathbb{F},n\in\mathbb{N}\}$ is the ring of polynomials over $\mathbb{F}$. In addition, Proposition \ref{polyReTh} shows the polynomial remainder theorem. In detail, let $f(x)$ and $g(x)$ be two nonzero polynomials in $\mathbb{F}[x]$, then there exist unique polynomials $q(x)$ and $r(x)$ such that $f(x) = g(x)q(x)+ r(x)$ with ${\rm deg}(r(x))<{\rm deg}(g(x))$, where ${\rm deg}(\cdot)$ returns the degree of a polynomial. Let $\xi\in\mathbb{C}$ be any root of $f(x)$. Proposition \ref{CycPriP} shows that if $f(x)$ is irreducible over $\mathbb{F}$, there is no polynomial in $\mathbb{F}[x]$ of degree less than ${\rm deg}(f(x))$ having a root at $\xi$.

\begin{proposition}\cite{gardner}\label{UniFacPoly}
   Every nonconstant polynomial $f(x) \in \mathbb{F}[x]$ can be factored in $\mathbb{F}[x]$ into a product of irreducible polynomials, the irreducible polynomials being unique except for order and for unit (that is, nonzero constant) factors in $\mathbb{F}$.
\end{proposition}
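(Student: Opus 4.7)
The plan is to establish the two halves — existence and uniqueness of the irreducible factorization — by exploiting the fact that $\mathbb{F}[x]$ is a Euclidean domain with respect to the degree function, which is precisely the content of Proposition \ref{polyReTh}. Existence I would dispatch by strong induction on $n = \deg(f)$: the base case $n = 1$ is immediate, since a degree-one polynomial over a field is necessarily irreducible, and for $n > 1$ either $f$ itself is irreducible, or $f(x) = g(x)h(x)$ with $1 \le \deg(g), \deg(h) < n$, in which case the inductive hypothesis produces irreducible factorizations of $g$ and $h$ whose concatenation factors $f$.

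For uniqueness, the pivotal ingredient is the Euclid-type lemma: if $p(x) \in \mathbb{F}[x]$ is irreducible and $p(x) \mid a(x)b(x)$, then $p(x) \mid a(x)$ or $p(x) \mid b(x)$. To prove it I would argue that if $p \nmid a$, then the only common divisors of $p$ and $a$ (up to units) are constants, so $\gcd(p, a) = 1$. Iterating Proposition \ref{polyReTh} (the Euclidean algorithm in $\mathbb{F}[x]$) yields a Bezout identity $u(x)p(x) + v(x)a(x) = 1$; multiplying through by $b(x)$ gives $b(x) = u(x)p(x)b(x) + v(x)\bigl(a(x)b(x)\bigr)$, and both summands on the right are divisible by $p(x)$, hence $p(x) \mid b(x)$.

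With this lemma in hand, uniqueness itself is a routine induction. Given two irreducible factorizations $p_1(x)\cdots p_r(x) = q_1(x)\cdots q_s(x)$ of $f(x)$, repeated application of the Euclid lemma shows $p_1 \mid q_j$ for some index $j$; irreducibility of $q_j$ then forces $q_j = c\,p_1$ for some unit $c \in \mathbb{F}^{\times}$. Cancelling $p_1$ in the integral domain $\mathbb{F}[x]$ and inducting on $r$ yields $r = s$ together with a pairing between the $p_i$ and the $q_j$ that matches them up to units, as claimed. The only conceptually nontrivial step — and the one I expect to be the main obstacle — is the Euclid lemma, i.e.\ the passage from \emph{irreducible} to \emph{prime} in $\mathbb{F}[x]$; once that is secured via Bezout's identity, everything else reduces to bookkeeping on the degree and on the length of the factorization.
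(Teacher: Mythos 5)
The paper does not prove this proposition at all: it is quoted as a standard fact with a citation to \cite{gardner}, so there is no in-paper argument to compare against. Your proof is the standard and correct one — existence by strong induction on the degree, and uniqueness by upgrading \emph{irreducible} to \emph{prime} via Euclid's lemma, which you correctly obtain from a Bezout identity produced by iterating the division algorithm of Proposition \ref{polyReTh}; the cancellation step at the end is legitimate because $\mathbb{F}[x]$ is an integral domain. I see no gaps.
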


\begin{proposition}\cite[Theorem $16.2$]{gallian2021contemporary}(Polynomial remainder theorem)\label{polyReTh}
   Let $f(x)$ and $g(x)$ be two nonzero polynomials in $\mathbb{F}[x]$. Then there exist unique polynomials $q(x)$ and $r(x)$ in $\mathbb{F}[x]$ such that $f(x) = g(x)q(x) + r(x)$ with $\operatorname{deg}(r(x))<\operatorname{deg} (g(x))$. 
\end{proposition}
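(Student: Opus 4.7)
\textbf{Proof proposal for Proposition~\ref{polyReTh}.} The plan is to prove this classical polynomial division theorem by establishing existence via strong induction on $\deg(f(x))$ and uniqueness by a short degree-comparison argument. Both halves rely critically on $\mathbb{F}$ being a field, so that the leading coefficient of $g(x)$ is invertible.

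For existence, I would set $n = \deg(f(x))$ and $m = \deg(g(x))$. The base case $n < m$ is trivial: take $q(x) = 0$ and $r(x) = f(x)$. For the inductive step (with $n \geq m$), the idea is to cancel the leading term of $f(x)$ by forming $\tilde f(x) = f(x) - (a_n/b_m)\, x^{n-m}\, g(x)$, where $a_n$ and $b_m$ are the leading coefficients of $f$ and $g$; this construction is legal because $b_m \in \mathbb{F}^{\times}$. Since $\deg(\tilde f) < n$, the inductive hypothesis yields $\tilde f(x) = g(x)\tilde q(x) + r(x)$ with $\deg(r) < m$, and then $q(x) = (a_n/b_m)\,x^{n-m} + \tilde q(x)$ is the required quotient.

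For uniqueness, I would assume two decompositions $g(x)q_1(x) + r_1(x) = g(x)q_2(x) + r_2(x)$ with $\deg(r_i) < m$ and rearrange to $g(x)[q_1(x) - q_2(x)] = r_2(x) - r_1(x)$. If $q_1 \neq q_2$, then since $\mathbb{F}[x]$ is an integral domain the left side has degree at least $m$, contradicting the right side having degree strictly less than $m$. Hence $q_1 = q_2$, and consequently $r_1 = r_2$.

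The hard part is really only notational bookkeeping: one must adopt the convention $\deg(0) = -\infty$ so that the inequality $\deg(r) < \deg(g)$ is meaningful when $r(x) = 0$, and one should check that the invertibility of $b_m$ (ensured by the field hypothesis on $\mathbb{F}$) is the only algebraic input used. Since Proposition~\ref{polyReTh} is a textbook statement of the Euclidean division algorithm for polynomials, the paper simply cites it from \cite{gallian2021contemporary} rather than reproving it, and no new ideas beyond the standard argument sketched above are needed.
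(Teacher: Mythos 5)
Your proof is correct and is the standard Euclidean division argument (existence by induction on $\deg(f)$ after cancelling the leading term, uniqueness by a degree comparison in the integral domain $\mathbb{F}[x]$); as you note, the paper does not prove Proposition~\ref{polyReTh} but simply cites it from \cite{gallian2021contemporary}, where essentially this same argument appears. Nothing further is needed.
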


\begin{proposition}\label{CycPriP}
 Let $f(x)$ be a polynomial over the number field $\mathbb{F}$ and $\xi$ be any root of $f(x)$. If $f(x)$ is irreducible over $\mathbb{F}$, then there does not exist a polynomial in $\mathbb{F}[x]$ of degree less than ${\rm deg}(f(x))$ having a root at $\xi$.
\end{proposition}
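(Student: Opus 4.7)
The plan is to argue by contradiction, using the two basic tools already on the table: polynomial division (Proposition~\ref{polyReTh}) and unique factorization in $\mathbb{F}[x]$ (Proposition~\ref{UniFacPoly}). Suppose some nonzero $g(x) \in \mathbb{F}[x]$ satisfies $\deg(g) < \deg(f)$ and $g(\xi) = 0$. (The zero polynomial is implicitly excluded, since it vanishes identically and would make the statement trivially false; I would note this briefly at the start.) The approach is to isolate a polynomial of smallest positive degree in $\mathbb{F}[x]$ having $\xi$ as a root, show that it divides every polynomial vanishing at $\xi$, and then use irreducibility of $f$ to force a contradiction.

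Concretely, I would introduce the set $S \triangleq \{h(x) \in \mathbb{F}[x] : h \neq 0,\ h(\xi) = 0\}$ and observe that both $f$ and $g$ lie in $S$, so $S$ is nonempty. Pick $h_0 \in S$ of minimum degree. For any $h \in S$, apply Proposition~\ref{polyReTh} to produce $q, r \in \mathbb{F}[x]$ with $h = h_0 q + r$ and $\deg(r) < \deg(h_0)$. Evaluating at $\xi$ gives $r(\xi) = h(\xi) - h_0(\xi) q(\xi) = 0$, so minimality of $\deg(h_0)$ forces $r = 0$. Hence $h_0$ divides every element of $S$; in particular, $h_0 \mid f$.

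Write $f = h_0 q_0$ for some $q_0 \in \mathbb{F}[x]$. Because $f$ is irreducible over $\mathbb{F}$, the unique factorization statement (Proposition~\ref{UniFacPoly}) implies that in any factorization of $f$ in $\mathbb{F}[x]$ one factor must be a unit of $\mathbb{F}[x]$, i.e., a nonzero element of $\mathbb{F}$. Since $h_0(\xi) = 0$, $h_0$ is not a nonzero constant, so $q_0$ must be; hence $\deg(h_0) = \deg(f)$. But $g \in S$ with $\deg(g) < \deg(f) = \deg(h_0)$ contradicts the minimality of $\deg(h_0)$, completing the proof.

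The argument is essentially the standard "minimal polynomial" manipulation, so I do not anticipate any deep obstacle. The one step that requires a little care is the last paragraph: Proposition~\ref{UniFacPoly} is stated in terms of factorization into irreducibles, so one must translate it into the form actually used here, namely that an irreducible $f$ admits only trivial factorizations $f = h_0 q_0$ with one of $h_0, q_0$ a unit. This translation (together with the observation that units in $\mathbb{F}[x]$ are precisely the nonzero elements of $\mathbb{F}$, which relies on $\mathbb{F}$ being a field) is the only place where one must be meticulous.
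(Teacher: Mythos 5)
Your proposal is correct and follows essentially the same route as the paper: take a polynomial of minimal degree in $\mathbb{F}[x]$ vanishing at $\xi$, use the division algorithm (Proposition~\ref{polyReTh}) and minimality to show it divides $f(x)$, and then invoke irreducibility of $f(x)$ to force the cofactor to be a nonzero constant, so the minimal degree equals $\deg(f(x))$. Your version is merely a bit more explicit (introducing the set $S$, excluding the zero polynomial, and spelling out why an irreducible polynomial admits only trivial factorizations), but the argument is the same.
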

\begin{proof}
The proof is postponed to Appendix \ref{AppenC}.
\end{proof}

\section{Identifiability of Modulo-DFT Sensing Model}\label{SecIdenti}
In this section, we provide a necessary and sufficient condition concerning the number of measurements $N$ and the set $\mathcal{V}$, which records the indices of the zero elements of the original signal $\mathbf{s}$, for uniquely identifying $\mathbf{s}$ from the modulo measurements $\mathbf{z}$ in the modulo-DFT sensing model (\ref{decomsimple}). Before presenting the final conclusion, we introduce several definitions below.

We first consider the factorization of the polynomial $x^N-1$, which is crucial for discussing the identifiability of the modulo-DFT sensing model (\ref{decomsimple}). Let $C(x)\triangleq x^N - 1$. According to Proposition \ref{UniFacPoly}, $C(x)$ can be factored into a unique product of irreducible monic polynomials in $\mathbb{Q}[{\rm j}][x]$ without considering the order, where a monic polynomial is a non-zero univariate polynomial in which the leading coefficient is equal to $1$ and $\mathbb{Q}[{\rm j}][x] \triangleq \{\sum_{i=0}^{n}a_ix^i\mid a_i\in\mathbb{Q}[{\rm j}],n\in\mathbb{N}\}$ is the ring of polynomials over $\mathbb{Q}[{\rm j}]$. In detail, we have 
\begin{align}\label{factorCx}
    C(x) =x^N-1= c_1(x) \cdots c_K(x),
\end{align}
where $c_1(x),\cdots,c_K(x)$ are all monic polynomials in $\mathbb{Q}[{\rm j}][x]$ and are all irreducible over $\mathbb{Q}[{\rm j}]$. It is well known that $C(x)$ has $N$ distinct roots $\{{\rm e}^{{\rm j} \frac{2\pi n}{N}}\}_{n=0}^{N-1}$, called $N$ $N$th roots of unity. Let $\mathcal{R}$ be the set of all $N$th roots of unity, that is,
\begin{align}\label{unitroots}
    \mathcal{R}\triangleq\left\{1, {\rm e}^{{\rm j} \frac{2\pi\times 1}{N}}, \cdots, {\rm e}^{{\rm j}\frac{2\pi \times (N-1)}{N}}\right\}.
\end{align}
In addition, let $\mathcal{C}_k$ be the set of roots of $c_k(x)$, i.e.,
\begin{align}\label{unitrootsPar}
    \mathcal{C}_k \triangleq \{x\mid c_k(x)=0\}, k=1,\cdots,K.
\end{align}
Given that $C(x) = x^N - 1$ possesses $N$ distinct roots as indicated in (\ref{unitroots}), $C(x)$ does not have any repeated roots. Consequently, the roots of $c_1(x), \cdots, c_K(x)$ constitute a partition of the roots of $C(x)$ since $C(x) = c_1(x) \cdots c_K(x)$ (\ref{factorCx}), implying that $\mathcal{C}_1, \cdots, \mathcal{C}_K$ form a partition of $\mathcal{R}$. In detail, $\mathcal{R} = \bigcup_{k=1}^K\mathcal{C}_k$ and $\mathcal{C}_1,\cdots,\mathcal{C}_K$ are pairwise disjoint, i.e., $\mathcal{C}_i\cap\mathcal{C}_j=\emptyset$ for any $1\leq i<j\leq K$. Note that each element in $\mathcal{R}$ is the $n$th power of ${\rm e}^{{\rm j}\frac{2\pi}{N}}$, where $n$ is an integer in the set $\mathcal{N}=\{0,1,\cdots,N-1\}$. Let $\mathcal{N}_k$ be the set that records the power of ${\rm e}^{{\rm j}\frac{2\pi}{N}}$ in $\mathcal{C}_k$, which is
\begin{align}\label{defni}
    \mathcal{N}_k \triangleq \left\{n \mid {\rm e}^{{\rm j}\frac{2\pi n}{N}}\in\mathcal{C}_k, n\in\mathcal{N}\right\},k=1,\cdots,K.
\end{align}
Since $\mathcal{C}_1,\cdots,\mathcal{C}_K$ form a partition of $\mathcal{R}$, $\mathcal{N}_1,\cdots,\mathcal{N}_K$ also form a partition of $\mathcal{N}$. Here, we provide a specific example to aid understanding. For $N=4$, we have $C(x) = x^4-1 = (x-1)(x+1)(x-{\rm j})(x+{\rm j})$. Hence, the number of factors is $K=4$ and the factors are $c_1 = x-1$, $c_2 = x+1$, $c_3=x-{\rm j}$, and $c_4=x+{\rm j}$. Furthermore, the sets of roots for different factors are $\mathcal{C}_1 = \{1\}$, $\mathcal{C}_2 = \{-1\}$, $\mathcal{C}_3 = \{{\rm j}\}$, and $\mathcal{C}_4 = \{-{\rm j}\}$, and the corresponding sets that show the power of ${\rm e}^{{\rm j}\frac{2\pi}{4}}$ in $\mathcal{C}_k,k=1,2,3,4$ are $\mathcal{N}_1 = \{0\}$, $\mathcal{N}_2 = \{2\}$, $\mathcal{N}_3 = \{1\}$, and $\mathcal{N}_4 = \{3\}$.

Based on the definitions above, a necessary and sufficient condition for uniquely recovering $\mathbf{s}$ from $\mathbf{z}$ is proposed in Theorem \ref{TheoComplex}, and the proof of Theorem \ref{TheoComplex} is provided at the end of this section, i.e., Sec. \ref{proofTheo1}.
\begin{theorem}\label{TheoComplex}(A necessary and sufficient condition)
    In the modulo-DFT sensing model (\ref{decomsimple}), the signal $\mathbf{s}\in\mathbb{C}^N_{\mathcal{V}}$ can be uniquely recovered from the modulo measurements $\mathbf{z}$ if and only if $\forall k\in\{1,\cdots,K\}$, $\mathcal{N}_k\cap\mathcal{V}\neq \emptyset$.
\end{theorem}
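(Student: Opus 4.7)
The plan is to apply Proposition~\ref{EquiCondi} and translate the identifiability question into the nonexistence of a nonzero $f(x)\in\mathbb{Z}[{\rm j}][x]$ of degree at most $N-1$ satisfying $f({\rm e}^{{\rm j}2\pi n/N})=0$ for every $n\in\mathcal{V}$. It then remains to prove that this nonexistence is equivalent to the hypothesis $\mathcal{N}_k\cap\mathcal{V}\neq\emptyset$ for every $k\in\{1,\ldots,K\}$, which I intend to do by exploiting the unique factorization $x^N-1=c_1(x)\cdots c_K(x)$ into monic polynomials irreducible over $\mathbb{Q}[{\rm j}]$, together with the partition $\mathcal{R}=\bigsqcup_k\mathcal{C}_k$ and its index version $\mathcal{N}=\bigsqcup_k\mathcal{N}_k$.

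For the sufficiency (``if'') direction, I assume every $\mathcal{N}_k$ meets $\mathcal{V}$ and let $f\in\mathbb{Z}[{\rm j}][x]\subseteq\mathbb{Q}[{\rm j}][x]$ vanish at every ${\rm e}^{{\rm j}2\pi n/N}$ with $n\in\mathcal{V}$. Picking any $n_k\in\mathcal{N}_k\cap\mathcal{V}$ gives a common root ${\rm e}^{{\rm j}2\pi n_k/N}\in\mathcal{C}_k$ of $f$ and $c_k$. I would then divide $f$ by $c_k$ in $\mathbb{Q}[{\rm j}][x]$ using Proposition~\ref{polyReTh} to write $f=c_kq+r$ with $\deg r<\deg c_k$; substituting ${\rm e}^{{\rm j}2\pi n_k/N}$ yields $r({\rm e}^{{\rm j}2\pi n_k/N})=0$, and Proposition~\ref{CycPriP} applied to the irreducible $c_k$ forces $r\equiv 0$, hence $c_k\mid f$ in $\mathbb{Q}[{\rm j}][x]$. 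Because the $c_k$ are distinct monic irreducibles, they are pairwise coprime in the UFD $\mathbb{Q}[{\rm j}][x]$, so their product $x^N-1$ divides $f$. A nonzero multiple of $x^N-1$ has degree at least $N$, contradicting $\deg f\leq N-1$, so $f\equiv 0$; this is exactly the condition Proposition~\ref{EquiCondi} requires for identifiability.

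For the necessity (``only if'') direction, I argue the contrapositive: if some $k_0$ satisfies $\mathcal{N}_{k_0}\cap\mathcal{V}=\emptyset$, I would construct a witness polynomial $g(x)\triangleq(x^N-1)/c_{k_0}(x)=\prod_{k\neq k_0}c_k(x)\in\mathbb{Q}[{\rm j}][x]$, which is nonzero of degree $N-\deg c_{k_0}\leq N-1$. Its root set is exactly $\{{\rm e}^{{\rm j}2\pi n/N}:n\in\mathcal{N}\setminus\mathcal{N}_{k_0}\}$, and the hypothesis $\mathcal{V}\subseteq\mathcal{N}\setminus\mathcal{N}_{k_0}$ guarantees that $g$ vanishes on the required points. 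To promote $g$ to a Gaussian integer polynomial as Proposition~\ref{EquiCondi} demands, I clear denominators by multiplying by a common Gaussian integer multiple of its coefficients, producing a nonzero $f\in\mathbb{Z}[{\rm j}][x]$ of the same degree and the same zero set; alternatively, Gauss's lemma applied in the UFD $\mathbb{Z}[{\rm j}]$ shows each monic factor $c_k$ already lies in $\mathbb{Z}[{\rm j}][x]$, so $g$ itself qualifies.

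The main technical hurdle is the deduction ``$f$ and $c_k$ share a root $\Rightarrow c_k\mid f$ over $\mathbb{Q}[{\rm j}]$.'' This is precisely the step where irreducibility of $c_k$ over the fraction field (not merely over $\mathbb{Z}[{\rm j}]$) is indispensable, and where I must carefully invoke Propositions~\ref{polyReTh} and~\ref{CycPriP}; the rest of the argument is bookkeeping on the partition $\{\mathcal{N}_k\}$ of $\mathcal{N}$ and a single degree count against $x^N-1$.
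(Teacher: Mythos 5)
Your proof is correct and follows essentially the same route as the paper: Proposition~\ref{EquiCondi} to reduce identifiability to the polynomial nonexistence statement, the witness $\prod_{k\neq k_0}c_k(x)$ (the paper's $M(x)$ in (\ref{defMx})) with denominators cleared for the necessity direction, and iterated division by the irreducible factors $c_k$ via Propositions~\ref{polyReTh} and~\ref{CycPriP} for sufficiency. The only cosmetic difference is in the endgame of sufficiency: the paper routes the divisibility argument through Lemma~\ref{lemmaT2_1} applied to $g(x)=x^{N-\deg(f(x))}f(x)$ and contradicts $g(0)=0$, whereas you conclude $(x^N-1)\mid f$ directly from pairwise coprimality of the $c_k$ in $\mathbb{Q}[{\rm j}][x]$ and finish with a degree count --- a marginally more direct packaging of the same idea.
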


Note that for any $N$, $C(x)$ has a factor $x-1$ since $C(x) = x^N-1 = (x-1)(x^{N-1}+\cdots+1)$. Without loss of generality, let $c_1(x) =x-1$, we have $\mathcal{C}_1 = \{1={\rm e}^{{\rm j}\frac{2\pi\times 0}{N}}\}$ and $\mathcal{N}_1 = \{0\}$ according to the definitions (\ref{unitrootsPar}) and (\ref{defni}). As the necessary and sufficient condition in Theorem \ref{TheoComplex} is $\forall k\in\{1,\cdots,K\}$, $\mathcal{N}_k\cap\mathcal{V}\neq \emptyset$, we have $\mathcal{N}_1\cap\mathcal{V}=\{0\}\cap\mathcal{V} \neq \emptyset$. Thus, for $\mathbf{s}$ to be uniquely recovered from $\mathbf{z}$ for any $N$, it is necessary for $\mathcal{V}$ to contain $0$. This conclusion can also be obtained by directly observing the modulo-DFT sensing model (\ref{modulomeasurements}). Since the first column of $\mathbf{F}$ is $\frac{1}{\sqrt{N}}[1, 1, \cdots, 1]^{\rm T}$, which is a constant vector, and the output of the modulo operator (\ref{commoduloope}) remains unchanged when the input has a constant integer offset, $\mathbf{s}$ and $\mathbf{s} + \sqrt{N}\mathbf{e}_0$ generate the same modulo measurements. Here, $\mathbf{e}_0$ denotes the one-hot vector with the $0$th element being $1$ and the others being $0$. Therefore, the $0$th element of $\mathbf{s}$, i.e., $s_0$, is unidentifiable. However, upon additional examination, we observe that the index $0$ solely impacts the identiﬁability of $s_0$. This implies that ${s}_1,{s}_2,\cdots,{s}_{N-1}$ can still be uniquely recovered from $\mathbf{z}$ if and only if $\forall k\in\{2,\cdots,K\}$, $\mathcal{N}_k\cap\mathcal{V}\neq \emptyset$. This conclusion is outlined in Corollary \ref{CoroTheorem}.
\begin{corollary}\label{CoroTheorem}
        Let $\mathbf{z}$ be the modulo measurements of signal $\mathbf{s}\in\mathbb{C}^N_{\mathcal{V}}$ in the modulo-DFT sensing model (\ref{decomsimple}). $\mathbf{s}_{1:N-1}$ can be uniquely recovered from $\mathbf{z}$ if and only if $\forall k\in\{2,\cdots,K\}$, $\mathcal{N}_k\cap\mathcal{V}\neq \emptyset$.
\end{corollary}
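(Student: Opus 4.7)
The plan is to reduce Corollary~\ref{CoroTheorem} to Theorem~\ref{TheoComplex} applied to the augmented zero-index set $\mathcal{V}'\triangleq\mathcal{V}\cup\{0\}$. As the paragraph preceding the corollary already explains, the coordinate $s_0$ is intrinsically unidentifiable in the modulo-DFT model because the $0$th column of $\mathbf{F}$ is a constant vector and $\mathscr{C}(\cdot)$ is invariant under Gaussian integer shifts. Intuitively, asking for unique recovery of $\mathbf{s}_{1:N-1}$ with $\mathbf{s}\in\mathbb{C}^{N}_{\mathcal{V}}$ is the same as asking for unique recovery of the whole vector $\mathbf{s}$ after artificially pinning $s_0$ to $0$, i.e., with the strengthened constraint $\mathbf{s}\in\mathbb{C}^{N}_{\mathcal{V}'}$.

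The central technical claim I would prove first is the equivalence: $\mathbf{s}_{1:N-1}$ is uniquely recoverable in $\mathbb{C}^{N}_{\mathcal{V}}$ if and only if $\mathbf{s}$ is uniquely recoverable in $\mathbb{C}^{N}_{\mathcal{V}'}$. The forward direction is immediate since any two candidates in $\mathbb{C}^{N}_{\mathcal{V}'}$ already share $s_0=0$, so agreement on the tail forces full agreement. For the reverse, I would argue the contrapositive through the polynomial characterization of Proposition~\ref{EquiCondi}: starting from any nonzero Gaussian integer polynomial $f(x)$ of degree at most $N-1$ that witnesses non-identifiability of $\mathbf{s}_{1:N-1}$ (so $f$ vanishes at ${\rm e}^{{\rm j}2\pi v/N}$ for all $v\in\mathcal{V}$ but is nonzero at ${\rm e}^{{\rm j}2\pi i/N}$ for some $i\geq 1$), I form the DC-killing modification
\begin{align*}
g(x)\triangleq N\,f(x)-f(1)\,(x^{N-1}+x^{N-2}+\cdots+1).
\end{align*}
A direct verification shows that $g$ has Gaussian integer coefficients, degree at most $N-1$, vanishes at $1$ (from the cancellation $Nf(1)-f(1)\cdot N=0$), and vanishes at every ${\rm e}^{{\rm j}2\pi v/N}$ with $v\in\mathcal{V}$ (using the geometric-sum identity that $x^{N-1}+\cdots+1$ vanishes at every nontrivial $N$th root of unity). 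Finally, $g({\rm e}^{{\rm j}2\pi i/N})=Nf({\rm e}^{{\rm j}2\pi i/N})\neq 0$ shows $g$ is nonzero, so Proposition~\ref{EquiCondi} yields non-identifiability of $\mathbf{s}$ in $\mathbb{C}^{N}_{\mathcal{V}'}$, establishing the contrapositive.

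Once this equivalence is in hand, I would finish by invoking Theorem~\ref{TheoComplex} with $\mathcal{V}$ replaced by $\mathcal{V}'$: identifiability holds if and only if $\mathcal{N}_k\cap\mathcal{V}'\neq\emptyset$ for every $k\in\{1,\ldots,K\}$. Since $c_1(x)=x-1$ is always a factor of $x^N-1$, we have $\mathcal{N}_1=\{0\}$, and the $k=1$ case is automatic because $0\in\mathcal{V}'$. Because $\mathcal{N}_1,\ldots,\mathcal{N}_K$ partition $\mathcal{N}$, the index $0$ lies only in $\mathcal{N}_1$, so for each $k\geq 2$ the intersection $\mathcal{N}_k\cap\mathcal{V}'$ coincides with $\mathcal{N}_k\cap\mathcal{V}$. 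The condition thus collapses to $\mathcal{N}_k\cap\mathcal{V}\neq\emptyset$ for every $k\geq 2$, which is exactly the statement of Corollary~\ref{CoroTheorem}.

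The only place I anticipate any real care is the construction of $g(x)$ in the reverse direction of the equivalence: one must check that $g$ stays in the Gaussian integer polynomial ring, that its degree does not exceed $N-1$, and that it remains nonzero. All three are routine from the explicit definition and from the fact that the cyclotomic-like factor $x^{N-1}+\cdots+1$ vanishes precisely at the $N-1$ nontrivial $N$th roots of unity, so I do not foresee a substantial obstacle beyond this bookkeeping.
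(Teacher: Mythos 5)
Your proof is correct, but it takes a genuinely different route from the paper's. The paper disposes of the case $0\in\mathcal{V}$ directly via Theorem \ref{TheoComplex}, and for $0\notin\mathcal{V}$ it introduces a modified polynomial characterization (Proposition \ref{EquiCondi2}, which excludes witnesses proportional to $\sum_{n=0}^{N-1}x^n$) and then reruns the proof strategy of Theorem \ref{TheoComplex} against that modified class. You instead reduce the corollary to Theorem \ref{TheoComplex} itself applied to the augmented index set $\mathcal{V}'=\mathcal{V}\cup\{0\}$; the only new ingredient is the equivalence between unique recovery of $\mathbf{s}_{1:N-1}$ over $\mathbb{C}^N_{\mathcal{V}}$ and unique recovery of $\mathbf{s}$ over $\mathbb{C}^N_{\mathcal{V}'}$, whose nontrivial direction is handled by the DC-killing construction $g(x)=Nf(x)-f(1)\sum_{n=0}^{N-1}x^n$, which converts any witness of non-identifiability of the tail into a witness (in the sense of Proposition \ref{EquiCondi}) for $\mathcal{V}'$. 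All the checks on $g$ (Gaussian integer coefficients, degree at most $N-1$, vanishing at $1$ and on the $\mathcal{V}$-roots, nonvanishing at the nontrivial root where $f$ does not vanish) go through, and your observation that $0$ lies only in $\mathcal{N}_1$ correctly collapses the condition to $k\in\{2,\dots,K\}$. What your route buys is economy and uniformity: you never repeat the factorization argument of Lemma \ref{lemmaT2_1} or re-prove Theorem \ref{TheoComplex} for a restricted polynomial class, and the cases $0\in\mathcal{V}$ and $0\notin\mathcal{V}$ are treated identically. The one step you assert rather than prove --- that non-identifiability of $\mathbf{s}_{1:N-1}$ yields a nonzero Gaussian integer polynomial $f$ of degree at most $N-1$ vanishing on the $\mathcal{V}$-roots but nonzero at some nontrivial $N$th root of unity --- is exactly one direction of Proposition \ref{EquiCondi2} and follows from the same $\mathbf{s}-\mathbf{s}'=\mathbf{F}^{\rm H}\bar{\boldsymbol{\epsilon}}$ computation used in Proposition \ref{EquiCondi}; since the paper itself leaves that step at the same level of detail, this is not a gap relative to the paper's own standard.
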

\begin{proof}
When $0\in\mathcal{V}$, this can be easily proved according to Theorem \ref{TheoComplex}. When $0 \notin \mathcal{V}$, similar to Proposition \ref{EquiCondi}, an equivalent condition for unique identifiability can be derived as shown in Proposition \ref{EquiCondi2}. Furthermore, using a proof strategy similar to that of Theorem \ref{TheoComplex}, Corollary \ref{CoroTheorem} can be proven.
\end{proof}
\begin{proposition}\label{EquiCondi2}
Let $\mathbf{z}$ be the modulo measurements of signal $\mathbf{s}\in\mathbb{C}^N_{\mathcal{V}}$ in the modulo-DFT sensing model (\ref{decomsimple}). $\mathbf{s}_{1:N-1}$ can be uniquely recovered from $\mathbf{z}$ if and only if there does not exist a Gaussian integer polynomial $f(x)$ of maximum degree $N-1$ not in $\{A\sum_{n=0}^{N-1}x^n\mid A\in\mathbb{Z}[{\rm j}]\}$ satisfying $f({\rm e}^{{\rm j}\frac{2\pi n}{N}}) = 0$ for all $n\in\mathcal{V}$. 
\end{proposition}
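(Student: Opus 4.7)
The plan is to mirror the contrapositive argument already used for Proposition \ref{EquiCondi}, with the single extra ingredient being a precise characterization of which Gaussian integer polynomials can produce ambiguities that are confined to the $0$th coordinate. As before, I would exploit the bijection between Gaussian integer polynomials $f(x)=\sum_{n=0}^{N-1}\epsilon_n x^n$ and Gaussian integer vectors $\boldsymbol{\epsilon}\in\mathbb{Z}[{\rm j}]^N$, under which the vanishing condition $f({\rm e}^{{\rm j} 2\pi n/N})=0$ for all $n\in\mathcal{V}$ is equivalent to $\mathbf{F}_{\mathcal{V}}^{\rm H}\boldsymbol{\epsilon}=\mathbf{0}$. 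The key auxiliary fact is the identity $\mathbf{F}^{\rm H}\mathbf{1}=\sqrt{N}\mathbf{e}_0$, from which one sees that if $\mathbf{F}^{\rm H}\boldsymbol{\epsilon}=\alpha\mathbf{e}_0$ then $\boldsymbol{\epsilon}=(\alpha/\sqrt{N})\mathbf{1}$, and Gaussian integrality of $\boldsymbol{\epsilon}$ forces $\alpha/\sqrt{N}=A\in\mathbb{Z}[{\rm j}]$. Thus the constant vectors $A\mathbf{1}$ are exactly the Gaussian integer vectors corresponding to the excluded family $\{A\sum_{n=0}^{N-1}x^n\mid A\in\mathbb{Z}[{\rm j}]\}$, and these are precisely the ambiguities that leave $\mathbf{s}_{1:N-1}$ untouched.

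For necessity (by contrapositive), assume such an $f(x)$ exists outside the excluded set, and let $\boldsymbol{\epsilon}$ be the associated vector. Define $\mathbf{s}'=\mathbf{s}-\mathbf{F}^{\rm H}\boldsymbol{\epsilon}$. Just as in Proposition \ref{EquiCondi}, $\mathbf{F}_{\mathcal{V}}^{\rm H}\boldsymbol{\epsilon}=\mathbf{0}$ ensures $\mathbf{s}'\in\mathbb{C}^N_{\mathcal{V}}$ and $\mathscr{C}(\mathbf{F}\mathbf{s}')=\mathscr{C}(\mathbf{F}\mathbf{s})$. Since $\boldsymbol{\epsilon}\neq A\mathbf{1}$ for any $A\in\mathbb{Z}[{\rm j}]$, the characterization above guarantees $\mathbf{F}^{\rm H}\boldsymbol{\epsilon}$ has at least one nonzero entry outside coordinate $0$, hence $\mathbf{s}'_{1:N-1}\neq\mathbf{s}_{1:N-1}$, so $\mathbf{s}_{1:N-1}$ is not uniquely recoverable. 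For sufficiency (also by contrapositive), suppose some $\mathbf{s}'\in\mathbb{C}^N_{\mathcal{V}}$ with $\mathbf{s}'_{1:N-1}\neq\mathbf{s}_{1:N-1}$ generates the same modulo measurements as $\mathbf{s}$. Then $\bar{\boldsymbol{\epsilon}}\triangleq\mathbf{F}(\mathbf{s}-\mathbf{s}')$ is a nonzero Gaussian integer vector with $\mathbf{F}_{\mathcal{V}}^{\rm H}\bar{\boldsymbol{\epsilon}}=\mathbf{0}$, giving a polynomial $f$ of degree at most $N-1$ that vanishes on all required roots. Because $\mathbf{s}-\mathbf{s}'=\mathbf{F}^{\rm H}\bar{\boldsymbol{\epsilon}}$ is not supported only at coordinate $0$, the characterization rules out $\bar{\boldsymbol{\epsilon}}=A\mathbf{1}$, so $f$ lies outside the excluded set.

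The main obstacle, which is really bookkeeping rather than a genuinely hard step, is making sure the $\sqrt{N}$ factor interacts correctly with the Gaussian integer constraint when isolating the ambiguities that only affect $s_0$; once that characterization is in hand, the two halves of the proof are direct adaptations of the arguments in Proposition \ref{EquiCondi}. A secondary but straightforward consistency check is that when $0\in\mathcal{V}$ the excluded set collapses to $\{0\}$ (since $A\sum_n 1=NA=0$ forces $A=0$), so the new condition reduces to the one in Proposition \ref{EquiCondi}, matching the first sentence of the proof sketch already given for Corollary \ref{CoroTheorem}.
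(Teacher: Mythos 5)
Your proposal is correct and follows exactly the route the paper intends: the paper omits a detailed proof of Proposition \ref{EquiCondi2}, remarking only that it is derived ``similar to Proposition \ref{EquiCondi}'', and your argument is precisely that adaptation, with the one genuinely new ingredient---the identification, via $\mathbf{F}^{\rm H}\mathbf{1}=\sqrt{N}\mathbf{e}_0$, of the constant Gaussian integer vectors $A\mathbf{1}$ (equivalently the polynomials $A\sum_{n=0}^{N-1}x^n$) as exactly the ambiguities confined to $s_0$---worked out correctly. No gaps.
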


In Theorem \ref{TheoComplex}, the necessary and sufficient condition for unique identifiability is that $\forall k\in\{1,\cdots,K\}$, $\mathcal{N}_k\cap\mathcal{V}\neq \emptyset$. Note that sets $\mathcal{N}_k,k=1,\cdots,K$ (\ref{defni}) are determined by the factorization of $C(x)$ over $\mathbb{Q}[{\rm j}]$, which depends on the value of $N$. Therefore, the necessary and sufficient condition is determined by the number of measurements $N$ and the set of indices of zero elements in $\mathbf{s}$, that is, $\mathcal{V}$. In fact, studying the factorization of $C(x)$ over $\mathbb{Q}[{\rm j}]$ in general is rather challenging. In the following, we conduct a deeper analysis of several specific cases.

\textbf{{Case \RomanNumeralCaps{1}:}} Let $M$ be a positive integer. We first consider the case when $N$ is the $M$th power of $2$, that is, $N = 2^M$. In the following, we first consider the factorization of $C(x)$ and then obtain the specific definition of $\mathcal{N}_k$. It can be verified that $C(x)$ is factored as 
\begin{align}\label{factorCx2N}
    C(x) &= x^N-1 = x^{2^M}-1 \notag\\
    &= (x-1)(x+1)\prod_{m=0}^{M-2}\left[(x^{2^m}-{\rm j})(x^{2^m}+{\rm j})\right],
\end{align}
over $\mathbb{Q}[{\rm j}]$ since $x^{2^m}-{\rm j}$ and $x^{2^m}+{\rm j}$ are all irreducible over $\mathbb{Q}[{\rm j}]$ for $m=0,1,\cdots,M-2$ according to Property \ref{LemPowGau}. Thus, the number of factors for $C(x)$ is $K=2M$, with the factors specified as $c_1(x) = x-1$, $c_2(x) = x+1$, $c_{2m+3}(x) = x^{2^m}-{\rm j}$, and $c_{2m+4}(x) = x^{2^m}+{\rm j}$ for $m=0,1,\cdots,M-2$. Next, we consider the set of roots of $c_{k}(x)$, that is, $\mathcal{C}_k$ (\ref{unitrootsPar}), for $k=1,2,\cdots,2M$. When $k=1$ and $k=2$, we can easily obtain $\mathcal{C}_1 = \{1\}$ and $\mathcal{C}_2 = \{-1\}$. In the following, we derive the sets $\mathcal{C}_{2m+3} \text{ and }\mathcal{C}_{2m+4}, m=0,1,\cdots,M-2$. First, we have $\mathcal{C}_{2m+3} = \{x\mid c_{2m+3}(x)=0\} = \{x\mid x^{2^m}={\rm j}\}$. Note that if $r$ is a root of $x^{2^m}=1$, then $r{\rm e}^{{\rm j}\frac{\pi}{2^{m+1}}}$ is a root of $x^{2^m}={\rm j}$. As the set of roots of $x^{2^m}=1$ is $\{{\rm e}^{{\rm j}\frac{2\pi i}{2^m}}\}_{i=0}^{2^m-1}$, we have 
\begin{align}\label{equC2m1}
    \mathcal{C}_{2m+3} = \left\{{\rm e}^{{\rm j}\left(\frac{2\pi i}{2^m}+\frac{\pi}{2^{m+1}}\right)}\right\}_{i=0}^{2^m-1}, 
\end{align}
where $m=0,1,\cdots,M-2$. Note that the elements of $\mathcal{C}_{2m+3}$ are evenly distributed on the unit circle in the complex plane.
Similarly, we have 
\begin{align}\label{equC2m2}
    \mathcal{C}_{2m+4} = \left\{{\rm e}^{{\rm j}\left(\frac{2\pi i}{2^m}+\frac{3\pi}{2^{m+1}}\right)}\right\}_{i=0}^{2^m-1},
\end{align}
where $m=0,1,\cdots,M-2$. Now we consider the sets $\mathcal{N}_k$ (\ref{defni}), which records the power of ${\rm e}^{{\rm j}\frac{2\pi}{2^M}}$ in $\mathcal{C}_k$, for $k=1,2,\cdots,2M$. First, it is obvious that $\mathcal{N}_1 = \{0\}$ and $\mathcal{N}_2=\{2^{M-1}\}$ as $\mathcal{C}_1 = \{1\}$ and $\mathcal{C}_2 = \{-1\}$. In addition, according to $\mathcal{C}_{2m+3}$ (\ref{equC2m1}), we have 
\begin{align}\label{defN2m3}
    \mathcal{N}_{2m+3} = \left\{i 2^{M-m} + 2^{M-m-2}\right\}_{i=0}^{2^m-1}, 
\end{align}
where $m=0,1,\cdots,M-2.$ Given $m$, $\mathcal{N}_{2m+3}$ is the set that contains $2^m$ elements evenly spaced within $\mathcal{N}=\{0,1,\cdots,2^M-1\}$, specifically starting from $2^{M-m-2}$ with distance $2^{M-m}$.
Similarly, according to $\mathcal{C}_{2m+4}$ (\ref{equC2m2}), we have 
\begin{align}\label{defN2m4}
    \mathcal{N}_{2m+4} = \left\{i 2^{M-m} + 3\times 2^{M-m-2}\right\}_{i=0}^{2^m-1},
\end{align}
where $m=0,1,\cdots,M-2$. In summary, we have $\mathcal{N}_{1} = \{0\}$, $\mathcal{N}_{2} = \{\frac{N}{2} = 2^{M-1}\}$, $\mathcal{N}_{3} = \left\{2^{M-2}\right\}$, $\mathcal{N}_{4} = \left\{3\times 2^{M-2}\right\}$ ,$\cdots$, $\mathcal{N}_{2m+3} = \left\{i 2^{M-m} + 2^{M-m-2}\right\}_{i=0}^{2^m-1}$, $\mathcal{N}_{2m+4} = \left\{i 2^{M-m} + 3\times 2^{M-m-2}\right\}_{i=0}^{2^m-1}$, $\cdots$, $\mathcal{N}_{2M-1} = \left\{i4 + 1\right\}_{i=0}^{2^{M-2}-1}$, $\mathcal{N}_{2M} = \left\{4i + 3\right\}_{i=0}^{2^{M-2}-1}$. The conclusion is summarized in Corollary \ref{Case12M}.
\begin{corollary}\label{Case12M}
    When $N=2^M$ where $M\in\mathbb{Z}^{+}$ is a positive integer, $\mathbf{s}\in\mathbb{C}^N_{\mathcal{V}}$ can be uniquely recovered from $\mathbf{z}$ if and only if $\forall k\in\{1,\cdots,2M\}$, $\mathcal{N}_k\cap\mathcal{V}\neq \emptyset$, where $\mathcal{N}_{1} = \{0\}$, $\mathcal{N}_{2} = \{2^{M-1}\}$, and $\mathcal{N}_{2m+3}$ and $\mathcal{N}_{2m+4}$ are defined in (\ref{defN2m3}) and (\ref{defN2m4}) for $m=0,1,\cdots,M-2$. 
\end{corollary}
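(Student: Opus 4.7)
The plan is to derive the corollary as a direct specialization of Theorem \ref{TheoComplex} to $N=2^M$. Since Theorem \ref{TheoComplex} already supplies the necessary and sufficient condition in terms of the partition $\{\mathcal{N}_k\}$ determined by the factorization of $C(x)=x^N-1$ over $\mathbb{Q}[{\rm j}]$, the only work left is to establish that factorization explicitly when $N$ is a power of two and then to read off the $\mathcal{N}_k$ as described.

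First I would produce the factorization stated in (\ref{factorCx2N}) by iterating the difference of squares. Over $\mathbb{Q}[{\rm j}]$ one has $a^2+b^2=(a-{\rm j}b)(a+{\rm j}b)$, so $x^{2^M}-1=(x^{2^{M-1}}-1)(x^{2^{M-1}}+1)$ and, for each $m\geq 1$, $x^{2^m}+1=(x^{2^{m-1}}-{\rm j})(x^{2^{m-1}}+{\rm j})$. Unrolling these two identities yields $C(x)=(x-1)(x+1)\prod_{m=0}^{M-2}(x^{2^m}-{\rm j})(x^{2^m}+{\rm j})$. Property \ref{LemPowGau} certifies that every factor $x^{2^m}\pm{\rm j}$ is irreducible in $\mathbb{Q}[{\rm j}][x]$, and $x\pm 1$ are trivially irreducible; by the uniqueness part of Proposition \ref{UniFacPoly}, this is \emph{the} factorization of $C(x)$ into monic irreducibles over $\mathbb{Q}[{\rm j}]$. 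Hence $K=2M$, and the $c_k(x)$ can be labeled as in the discussion preceding the corollary.

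Next I would compute $\mathcal{C}_k$ and translate it into $\mathcal{N}_k$. The factors $c_1(x)=x-1$ and $c_2(x)=x+1$ give $\mathcal{C}_1=\{1\}$, $\mathcal{C}_2=\{-1\}$, and in terms of powers of ${\rm e}^{{\rm j}2\pi/2^M}$ this is $\mathcal{N}_1=\{0\}$ and $\mathcal{N}_2=\{2^{M-1}\}$. For $c_{2m+3}(x)=x^{2^m}-{\rm j}$, I would note that the roots are the $2^m$-th roots of ${\rm j}={\rm e}^{{\rm j}\pi/2}$, namely ${\rm e}^{{\rm j}(\pi/2^{m+1}+2\pi i/2^m)}$ for $i=0,\dots,2^m-1$; rewriting the exponent as $2\pi n/2^M$ gives $n=2^{M-m-2}+i\,2^{M-m}$, which is the definition (\ref{defN2m3}). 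The same computation with $-{\rm j}={\rm e}^{{\rm j}3\pi/2}$ replacing ${\rm j}$ yields $n=3\cdot 2^{M-m-2}+i\,2^{M-m}$, giving $\mathcal{N}_{2m+4}$ as in (\ref{defN2m4}). Plugging these sets into Theorem \ref{TheoComplex} delivers exactly the stated criterion.

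I do not foresee a real obstacle; the content is a bookkeeping specialization of Theorem \ref{TheoComplex}. The only point that needs care is the justification that (\ref{factorCx2N}) is a factorization into \emph{irreducibles} over $\mathbb{Q}[{\rm j}]$ (so that the $\mathcal{C}_k$ of the theorem coincide with the root sets of these specific factors): this rests on Property \ref{LemPowGau} and on the uniqueness in Proposition \ref{UniFacPoly}. The remaining work is the purely computational identification of ${\rm e}^{{\rm j}(\pi/2^{m+1}+2\pi i/2^m)}$ with ${\rm e}^{{\rm j}2\pi n/2^M}$, which I would present as a one-line exponent comparison.
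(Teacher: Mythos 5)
Your proposal is correct and follows essentially the same route as the paper: establish the factorization (\ref{factorCx2N}) by iterated difference of squares, certify irreducibility of each factor $x^{2^m}\pm{\rm j}$ via Property \ref{LemPowGau} (with uniqueness from Proposition \ref{UniFacPoly}), compute the root sets $\mathcal{C}_k$ and the corresponding index sets $\mathcal{N}_k$ by exponent comparison, and then specialize Theorem \ref{TheoComplex}. The computations of $\mathcal{N}_{2m+3}$ and $\mathcal{N}_{2m+4}$ match (\ref{defN2m3}) and (\ref{defN2m4}) exactly.
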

According to Corollary \ref{Case12M}, the minimum number of zero elements in $\mathbf{s}$ is $2M$ to ensure that $\mathbf{s}$ can be uniquely recovered from modulo observations $\mathbf{z}$, and the positions of these zero elements are specifically required and are scattered throughout the entire set $\mathcal{N} = \{0,1,\cdots,2^M-1\}$. For a better understanding of the necessary and sufficient condition in Corollary \ref{Case12M}, in the following, we provide an example where $M=4$, that is, $N=16$. In this case, we have $\mathcal{N}_1 = \{0\}, \mathcal{N}_2 = \{8\},\mathcal{N}_3 = \{4\}, \mathcal{N}_4 = \{12\}, \mathcal{N}_5 = \{2,10\}, \mathcal{N}_6 = \{6,14\}, \mathcal{N}_7 = \{1,5,9,13\}, \mathcal{N}_8 = \{3,7,11,15\}$. It is required that $\mathcal{V}$ contains at least one element of each $\mathcal{N}_k$ for $k=1,\cdots,K$ to ensure $\mathbf{s}$ can be uniquely recovered. For example, the set $\mathcal{V} = \{0,1,2,3,4,6,8,12\}$ satisfies the requirement. In contrast, the set $\mathcal{V} = \{0,1,3,4,8,12\}$ does not satisfy the requirement. In particular, when $\mathcal{V} = \{0,1,3,4,8,12\}$, $\forall \mathbf{s}\in \mathbb{C}^N_{\mathcal{V}}$, $\mathbf{s}' = \mathbf{s} + 4\mathbf{e}_{2} + 4\mathbf{e}_{10}$, which also belongs to $\mathbb{C}^N_{\mathcal{V}}$, generates the same modulo measurements as $\mathbf{s}$ does. Here, $\mathbf{e}_i$ denotes the one-hot vector with the $i$th element being $1$ and the others being $0$.

\textbf{{Case \RomanNumeralCaps{2}:}} We consider another case when $N$ is a prime number. In this case, $C(x)$ is factored as $C(x) = (x-1)(x^{N-1} + x^{N-2} + \cdots+1)$ over $\mathbb{Q}[{\rm j}]$ since $x^{N-1} + x^{N-2} + \cdots + 1$ is irreducible over the field $\mathbb{Q}[{\rm j}]$ according to Property \ref{LemGauRat}. Thus, the number of factors of $C(x)$ is $K=2$, and the factors are $c_1(x) = x-1$, and $c_2(x) = x^{N-1} + x^{N-2} + \cdots+1$. In addition, it can be easily obtained that $\mathcal{N}_1 = \{0\}$ and $\mathcal{N}_2 = \{1,\cdots,N-1\}$ according to the definition of $\mathcal{N}_k$ (\ref{defni}). Therefore, according to Theorem \ref{TheoComplex}, $\mathbf{s}\in\mathbb{C}^N_{\mathcal{V}}$ can be uniquely recovered from $\mathbf{z}$ if and only if $0\in\mathcal{V}$ and $\exists n\in\{1,\cdots,N-1\}$ such that $n\in\mathcal{V}$ which is equivalent to $|\mathcal{V}|\geq 2$ and $0\in\mathcal{V}$. The conclusion is summarized in Corollary \ref{FirstCondiPri}. It is worth noting that when the number of measurements $N$ is a prime number, the condition for the modulo-DFT sensing model to be identifiable can be easily satisfied in practical scenarios by setting the first two original measurements ($s_0$ and $s_1$) to $0$, or by setting the first and last original measurements ($s_0$ and $s_{N-1}$) to $0$. In contrast, if we remove the DFT sensing matrix and directly perform the modulo operation on the original signal $\mathbf{s}$, there is no guarantee that $\mathbf{s}$ can be uniquely recovered from the modulo measurements. This demonstrates that performing DFT prepprocessing before MS benefits identifying the original signal.
\begin{corollary}\label{FirstCondiPri}
    When $N$ is a prime number, $\mathbf{s}\in\mathbb{C}^N_{\mathcal{V}}$ can be uniquely recovered from $\mathbf{z}$ if and only if $|\mathcal{V}|\geq 2$ and $0\in\mathcal{V}$.
\end{corollary}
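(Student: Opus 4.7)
The plan is to invoke Theorem \ref{TheoComplex} directly, reducing the claim to a routine computation of the factorization of $C(x)=x^N-1$ over $\mathbb{Q}[{\rm j}]$ and the corresponding index sets $\mathcal{N}_k$. First I would observe the standard algebraic factorization $C(x) = (x-1)(x^{N-1}+x^{N-2}+\cdots+1)$, valid for every $N$. The essential ingredient when $N$ is prime is Property \ref{LemGauRat}, which tells us that the second factor $x^{N-1}+x^{N-2}+\cdots+1$ is already irreducible over $\mathbb{Q}[{\rm j}]$. Together with the trivially irreducible linear factor $x-1$, the uniqueness of factorization (Proposition \ref{UniFacPoly}) then forces $K=2$ with $c_1(x)=x-1$ and $c_2(x)=x^{N-1}+\cdots+1$.

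Next I would read off the sets $\mathcal{C}_k$ and hence $\mathcal{N}_k$ from (\ref{unitrootsPar}) and (\ref{defni}). The only root of $c_1(x)$ is $1={\rm e}^{{\rm j}\frac{2\pi\cdot 0}{N}}$, so $\mathcal{N}_1=\{0\}$. Because $\{\mathcal{N}_1,\mathcal{N}_2\}$ partitions $\mathcal{N}=\{0,1,\ldots,N-1\}$, the remaining set must be $\mathcal{N}_2=\{1,2,\ldots,N-1\}$.

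Finally, I would apply Theorem \ref{TheoComplex}: identifiability is equivalent to $\mathcal{N}_k\cap\mathcal{V}\neq\emptyset$ for every $k\in\{1,2\}$. The first condition $\mathcal{N}_1\cap\mathcal{V}\neq\emptyset$ is precisely $0\in\mathcal{V}$; the second condition $\mathcal{N}_2\cap\mathcal{V}\neq\emptyset$ requires $\mathcal{V}$ to also contain some index in $\{1,\ldots,N-1\}$. These two conditions combined are manifestly equivalent to $0\in\mathcal{V}$ together with $|\mathcal{V}|\geq 2$, which is the stated corollary.

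In this corollary the nontrivial work has all been pushed into Property \ref{LemGauRat}: once the irreducibility of the cyclotomic-type polynomial $1+x+\cdots+x^{N-1}$ over $\mathbb{Q}[{\rm j}]$ is granted, the rest of the argument is a direct specialization of Theorem \ref{TheoComplex}, and there is no genuine additional obstacle. Hence the proof in the paper can be stated concisely by just collecting the factorization, identifying $\mathcal{N}_1$ and $\mathcal{N}_2$, and invoking the main theorem.
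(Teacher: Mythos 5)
Your proposal is correct and follows exactly the same route as the paper: factor $C(x)=(x-1)(x^{N-1}+\cdots+1)$, invoke Property \ref{LemGauRat} for irreducibility of the second factor over $\mathbb{Q}[{\rm j}]$ so that $K=2$ with $\mathcal{N}_1=\{0\}$ and $\mathcal{N}_2=\{1,\ldots,N-1\}$, and then specialize Theorem \ref{TheoComplex}. No differences worth noting.
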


\textbf{{Case \RomanNumeralCaps{3}:}} Finally, we consider the case that $N = 2p$ where $p$ is a prime number and $p\geq 3$.\footnote{When $p=2$, we have $N=2^2$, a case that has been examined in Case \RomanNumeralCaps{1}.} Note that $C(x)= x^N - 1 = x^{2p}-1$ equals to
\begin{align}\label{factorCx2p}
    C(x)  = (x-1)(x+1)\left(\sum_{i=0}^{p-1}x^{i}\right)\left(\sum_{i=0}^{p-1}(-x)^{i}\right),        
\end{align}
which can be verified directly. First, according to Property \ref{LemGauRat}, $\sum_{i=0}^{p-1}x^{i}$ is irreducible over $\mathbb{Q}[{\rm j}]$. Furthermore, $\sum_{j=0}^{p-1}(-x)^{i}$ is also irreducible over $\mathbb{Q}[{\rm j}]$ because if $f(x)$ is
irreducible over $\mathbb{Q}[{\rm j}]$ then so is $f(-x)$. Therefore, equation (\ref{factorCx2p}) is the unique factorization of $C(x) = x^{2p}-1$ over $\mathbb{Q}[{\rm j}]$. Thus, the number of factors of $C(x)$ is $K=4$, and the factors are $c_1(x) = x-1$, $c_2(x) = x+1$, $c_3(x) = \sum_{i=0}^{p-1}x^{i}$, and $c_4(x) = \sum_{i=0}^{p-1}(-x)^{i}$. In addition, the sets of roots of $c_{k}(x)$, $k=1,2,3,4$ are $\mathcal{C}_1 = \{1\}$, $\mathcal{C}_2 = \{-1\}$, $\mathcal{C}_3 = \{{\rm e}^{{\rm j}\frac{2\pi i}{p}}\}_{i=1}^{p-1}$ and $\mathcal{C}_4 = \{{\rm e}^{{\rm j}\frac{2\pi (2i+1)}{2p}}\}_{i=0,2i+1\neq p}^{p-1}$ with cardinalities of $1$, $1$, $p-1$, and $p-1$, respectively. Furthermore, we have $\mathcal{N}_1 = \{0\}$, $\mathcal{N}_2 = \{p\}$, $\mathcal{N}_3 = \{2,4,\cdots,2p-2\}$, and $\mathcal{N}_4 = \{1,3,\cdots,p-2,p+2,\cdots,2p-1\}$ which indicates the exponent of ${\rm e}^{{\rm j}\frac{2\pi}{2p}}$ in $\mathcal{C}_1$, $\mathcal{C}_2$, $\mathcal{C}_3$, and $\mathcal{C}_4$, respectively. Therefore, according to Theorem \ref{TheoComplex}, $\mathbf{s}\in\mathbb{C}^N_{\mathcal{V}}$ can be uniquely recovered from $\mathbf{z}$ if and only if $\{0,p, n_1,n_2\}\subseteq\mathcal{V}$ where $n_1\in\{2,4,\cdots,2p-2\}$ is an even number in $\mathcal{N} = \{0,1,\cdots,N-1\}$ other than $0$ and $n_2\in\{1,3,\cdots,p-2,p+2,\cdots,2p-1\}$ is an odd number in $\mathcal{N}$ other than $p$. The conclusion is summarized in Corollary \ref{FirstCondiPri2}.
\begin{corollary}\label{FirstCondiPri2}
    When $N=2p$ where $p\geq 3$ is a prime number, $\mathbf{s}\in\mathbb{C}^N_{\mathcal{V}}$ can be uniquely recovered from $\mathbf{z}$ if and only if $\{0,p, n_1,n_2\}\subseteq\mathcal{V}$ where $n_1\in\{2,4,\cdots,2p-2\}$ and $n_2\in\{1,3,\cdots,p-2,p+2,\cdots,2p-1\}$.
\end{corollary}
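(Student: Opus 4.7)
The plan is to reduce Corollary \ref{FirstCondiPri2} to Theorem \ref{TheoComplex} by explicitly determining the unique factorization of $C(x)=x^{2p}-1$ over $\mathbb{Q}[{\rm j}]$ and then reading off the sets $\mathcal{N}_1,\ldots,\mathcal{N}_K$. First I would verify by direct multiplication that
\begin{align*}
x^{2p}-1=(x-1)(x+1)\Bigl(\sum_{i=0}^{p-1}x^i\Bigr)\Bigl(\sum_{i=0}^{p-1}(-x)^i\Bigr),
\end{align*}
using $x^{2p}-1=(x^p-1)(x^p+1)$ together with the standard identities $x^p-1=(x-1)\sum_{i=0}^{p-1}x^i$ and $x^p+1=(x+1)\sum_{i=0}^{p-1}(-x)^i$ (the latter using that $p$ is odd).

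Next I would argue that each of the four factors is irreducible over $\mathbb{Q}[{\rm j}]$. The linear factors $x\pm 1$ are trivially irreducible; $\sum_{i=0}^{p-1}x^i$ is irreducible by Property \ref{LemGauRat}; and for $\sum_{i=0}^{p-1}(-x)^i=f(-x)$ with $f(x)=\sum_{i=0}^{p-1}x^i$, I would invoke the elementary observation that the substitution $x\mapsto -x$ is a ring automorphism of $\mathbb{Q}[{\rm j}][x]$, so any nontrivial factorization of $f(-x)$ pulls back to a nontrivial factorization of $f(x)$. Combined with Proposition \ref{UniFacPoly}, this shows $K=4$ with factors $c_1,c_2,c_3,c_4$ as claimed.

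I would then compute the corresponding root sets and read off $\mathcal{N}_k$ via definition (\ref{defni}). Clearly $\mathcal{C}_1=\{1\}$ and $\mathcal{C}_2=\{-1\}$ give $\mathcal{N}_1=\{0\}$ and $\mathcal{N}_2=\{p\}$. Since the roots of $x^p-1$ are the $p$th roots of unity and $1$ is already captured by $c_1$, the roots of $c_3$ form the set $\{{\rm e}^{{\rm j}2\pi i/p}\}_{i=1}^{p-1}$, which correspond to the even exponents $\mathcal{N}_3=\{2,4,\ldots,2p-2\}$ of ${\rm e}^{{\rm j}2\pi/(2p)}$. Analogously, the roots of $x^p+1$ other than $-1$ yield $\mathcal{N}_4=\{1,3,\ldots,p-2,p+2,\ldots,2p-1\}$, the odd exponents in $\mathcal{N}\setminus\{p\}$.

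Finally, applying Theorem \ref{TheoComplex} gives that unique recovery holds if and only if $\mathcal{V}$ intersects each of $\mathcal{N}_1,\mathcal{N}_2,\mathcal{N}_3,\mathcal{N}_4$. Since $\mathcal{N}_1$ and $\mathcal{N}_2$ are singletons, this forces $0,p\in\mathcal{V}$, while the remaining two intersection conditions are exactly the existence of $n_1\in\{2,4,\ldots,2p-2\}\cap\mathcal{V}$ and $n_2\in\{1,3,\ldots,p-2,p+2,\ldots,2p-1\}\cap\mathcal{V}$. The only mildly subtle step in the whole argument is securing irreducibility of $\sum_{i=0}^{p-1}(-x)^i$ over $\mathbb{Q}[{\rm j}]$ from Property \ref{LemGauRat}; everything else is a direct specialization of the general framework already built in Sec. \ref{SecIdenti}.
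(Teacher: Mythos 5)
Your proposal is correct and follows essentially the same route as the paper: factor $x^{2p}-1$ into $(x-1)(x+1)\bigl(\sum_{i=0}^{p-1}x^i\bigr)\bigl(\sum_{i=0}^{p-1}(-x)^i\bigr)$, establish irreducibility of the last two factors via Property \ref{LemGauRat} together with the invariance of irreducibility under $x\mapsto -x$, read off $\mathcal{N}_1,\ldots,\mathcal{N}_4$, and apply Theorem \ref{TheoComplex}. Your explicit justification that $x\mapsto -x$ is a ring automorphism is a slightly fuller version of the paper's one-line remark, but the argument is otherwise identical.
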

\subsection{Recovery algorithm}
In this subsection, a recovery algorithm that estimates the original signal $\mathbf{s}$ from modulo measurements $\mathbf{z}$ with known $\mathcal{V}$ is proposed. To estimate $\mathbf{s}$ from $\mathbf{z}$, according to equation (\ref{decomsimple}), this is equivalent to solving the following system of equations for $\mathbf{s}$ and $\boldsymbol{\epsilon}$:
\begin{align}\label{equationsolver1}
\mathbf{z} = \mathbf{F}\mathbf{s} + \boldsymbol{\epsilon},
\end{align}
where $\mathbf{s}$ is constrained to $\mathbb{C}_{\mathcal{V}}^{N}$ and $\boldsymbol{\epsilon}\in\mathbb{Z}[{\rm j}]^{N}$ is a Gaussian integer vector. Multiplying (\ref{equationsolver1}) by $\mathbf{F}^{\rm H}$ yields
\begin{align}\label{equationsolver2}
\mathbf{F}^{\rm H}\mathbf{z} = \mathbf{s} + \mathbf{F}^{\rm H}\boldsymbol{\epsilon},
\end{align}
which is equivalent to (\ref{equationsolver1}) due to the invertibility of  $\mathbf{F}$. Since $\mathbf{s}$ is any vector in $\mathbb{C}_{\mathcal{V}}^{N}$, given $\mathcal{V}$, solving (\ref{equationsolver2}) is equivalent to solving
\begin{align}\label{equationsolver3}
\mathbf{F}_{\mathcal{V}}^{\rm H}\mathbf{z} =\mathbf{F}_{\mathcal{V}}^{\rm H}\boldsymbol{\epsilon}.
\end{align}
Given the estimated Gaussian integer vector $\widehat{\boldsymbol{\epsilon}}$, the original signal $\mathbf{s}$ can be estimated as $\widehat{\mathbf{s}} = \mathbf{F}^{\rm H}(\mathbf{z} - \widehat{\boldsymbol{\epsilon}})$. Therefore, recovering the original signal $\mathbf{s}$ from the modulo measurements $\mathbf{z}$ with the known set $\mathcal{V}$ is equivalent to solving the integer linear equations given in (\ref{equationsolver3}). Practical solvers, such as SCIP \cite{bolusani2024scip} and Gurobi \cite{gurobi}, can be employed to solve these integer linear equations (\ref{equationsolver3}).

\subsection{Simulation}\label{Simulation1}
In this subsection, we conduct a simulation to validate our conclusions. We use the professional optimization software Gurobi to solve the integer linear equations (\ref{equationsolver3}) and then estimate the original signal $\mathbf{s}$. 

Both the real and imaginary parts of the nonzero elements of $\mathbf{s}$ are obtained from a uniform distribution over the interval $[-1, 1]$, and the dynamic range of the ADC is $[-0.5, 0.5]$. For the recovery algorithm, both the real and imaginary parts of each element of $\boldsymbol{\epsilon}$ are constrained to $\{-1,0,1\}$. We evaluate the probabilities of successful recovery of the original signal $\mathbf{s}$ using the proposed recovery algorithm over $300$ Monte Carlo trials for $N = 5, 6, 7, 8, 10, 11, 14, 16$. Note that $N = 8, 16$ are powers of $2$, corresponding to Case \RomanNumeralCaps{1}; $N = 5, 7, 11$ are prime numbers, corresponding to Case \RomanNumeralCaps{2}; and $N = 6, 10, 14$ are twice prime numbers, corresponding to Case \RomanNumeralCaps{3}. We establish two different scenarios, named Scenario 1 and Scenario 2, to generate $\mathcal{V}$, the set that records the indices of the zero elements of the original signal $\mathbf{s}$. For Scenario $1$, $\mathcal{V}$ satisfies the necessary and sufficient conditions outlined in Theorem \ref{TheoComplex}. Specifically, when $N = 8, 16$, $\mathcal{V}$ satisfies the necessary and sufficient condition in Corollary \ref{Case12M}. When $N = 5, 7, 11$, $\mathcal{V}$ satisfies the necessary and sufficient condition in Corollary \ref{FirstCondiPri}. When $N = 6, 10, 14$, $\mathcal{V}$ satisfies the necessary and sufficient condition in Corollary \ref{FirstCondiPri2}. For Scenario $2$, $\mathcal{V}$ is generated randomly with equal probability. In Scenario $2$, Table \ref{tabmy_label} lists the theoretical probabilities that the signal $\mathbf{s}$ can be uniquely recovered from modulo samples $\mathbf{z}$ for different values of $N$.
\begin{table}[htb!]
    \begin{center}
    \caption{Theoretical probabilities for unique recovery of $\mathbf{s}$ \\ in Scenario $2$.}\label{tabmy_label}
        \begin{tabular}{|c|c|c|c|c|c|c|c|c|c|}
            \hline
             N&5&6&7&8&10&11&14&16  \\ \hline
             Prob.&0.47&0.14&0.49&0.04&0.22&0.50&0.24&0.03\\ \hline
        \end{tabular}
    \end{center}
\end{table}
In Fig. \ref{Simu1}, the probabilities of successful recovery for Scenario $1$ and Scenario $2$ for different values of $N$ using the proposed recovery algorithm are presented. For Scenario $1$, the probabilities of successful recovery for all $N$ are equal to $1$. In contrast, Scenario $2$ shows a noticeable decrease in the probabilities of successful recovery compared to Scenario $1$. It is worth noting that the probabilities of successful recovery for Scenario $2$ are significantly higher than the theoretical results listed in Table \ref{tabmy_label}. The main reason for this discrepancy is that we constrained the dynamic range of the variables $\boldsymbol{\epsilon}$ (\ref{equationsolver3}) in the recovery algorithm to make the running time acceptable. Under these conditions, some models that are unidentifiable with an unlimited dynamic range of variables $\boldsymbol{\epsilon}$ become identifiable with a limited dynamic range.

\begin{figure}[htb!]
    \centering
    \includegraphics[width=0.4\textwidth]{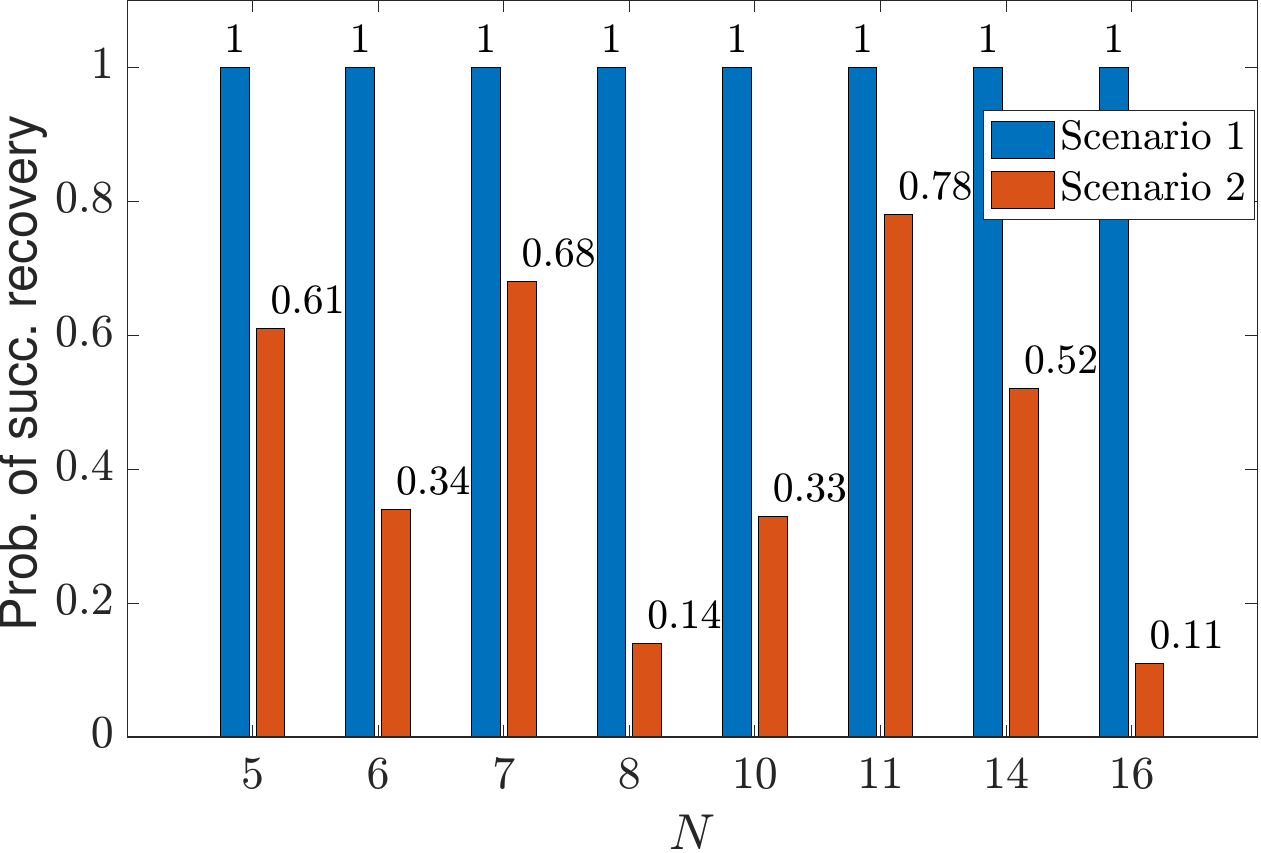}
    \caption{Probabilities of successful recovery of the proposed recovery algorithm over $300$ Monte Carlo trials.}
    \label{Simu1}
\end{figure}

\subsection{A necessary condition for Vandermonde sensing matrices}
Let $\alpha_0,\alpha_1,\cdots,\alpha_{N-1}\in\mathbb{C}$ be the nodes which are used to generate the Vandermonde matrix $\mathbf{V}\in\mathbb{C}^{N\times N}$, i.e., the $(i,j)$th element of $\mathbf{V}$ is $\alpha_j^i$, $i=0,1,\cdots,N-1$ and $j=0,1,\cdots,N-1$. Note that the DFT matrix multiplied by $\sqrt{N}$, i.e., $\sqrt{N}\mathbf{F}$, is a special case of a Vandermonde matrix where $\alpha_j = {\rm e}^{-{\rm j}\frac{2\pi j}{N}}, j=0,1,\cdots,N-1$. In this subsection, we provide a necessary condition for uniquely identifying the original signal from modulo measurements under Vandermonde sensing matrix. Similarly to model (\ref{decomsimple}) under DFT sensing matrix, the measurement model is 
\begin{align}\label{moduloequationVan}
\mathbf{z}=\mathbf{y}+\boldsymbol{\epsilon}=\mathbf{V}\mathbf{s}+\boldsymbol{\epsilon},
\end{align}
where the original signal $\mathbf{s}$ is constrained to $\mathbb{C}^{N}_{\mathcal{V}}$. Let $\mathcal{V}_{r}$ denote the set of all indices $j$ such that the nodes $\alpha_j$ is a Gaussian rational number for $j\in\mathcal{N} = \{0,1,\cdots,N-1\}$, that is, 
\begin{align}\label{DefSr}
    \mathcal{V}_{r}\triangleq\{j \mid \alpha_j\in\mathbb{Q}[{\rm j}], j\in\mathcal{N}\}.
\end{align}
A necessary condition to uniquely recover $\mathbf{s}$ from $\mathbf{z}$ is introduced in Theorem \ref{TheoNece}.
\begin{theorem}\label{TheoNece}
    If $\mathbf{s}\in\mathbb{C}^N_{\mathcal{V}}$ can be uniquely recovered from $\mathbf{z}$ in the measurement model (\ref{moduloequationVan}), then we have $\mathcal{V}_r\subseteq \mathcal{V}$, where $\mathcal{V}_r$ is the  set defined in (\ref{DefSr}) which records all indices $j$ such that the nodes $\alpha_j$ of $\mathbf{V}$ is a Gaussian rational number.
\end{theorem}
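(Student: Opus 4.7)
The plan is to prove the contrapositive: if $\mathcal{V}_r \not\subseteq \mathcal{V}$, then the model (\ref{moduloequationVan}) is not identifiable on $\mathbb{C}^N_{\mathcal{V}}$. Concretely, I will exhibit, for any candidate $\mathbf{s}\in\mathbb{C}^N_{\mathcal{V}}$, a second vector $\mathbf{s}'\in\mathbb{C}^N_{\mathcal{V}}$ with $\mathbf{s}'\neq\mathbf{s}$ whose image $\mathbf{V}\mathbf{s}'$ differs from $\mathbf{V}\mathbf{s}$ by a Gaussian integer vector, so the two signals are indistinguishable after applying $\mathscr{C}(\cdot)$.

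The construction reduces to finding a nonzero $\mathbf{d}\in\mathbb{C}^N_{\mathcal{V}}$ such that $\mathbf{V}\mathbf{d}\in\mathbb{Z}[{\rm j}]^N$, because then $\mathbf{s}'=\mathbf{s}+\mathbf{d}$ does the job. I would choose $\mathbf{d}$ to be supported on a single coordinate: pick some $j_0\in\mathcal{V}_r\setminus\mathcal{V}$ (which exists by assumption), let $d_{j_0}\neq 0$ and set all other entries of $\mathbf{d}$ to zero. Since $j_0\notin\mathcal{V}$, the resulting $\mathbf{d}$ lies in $\mathbb{C}^N_{\mathcal{V}}$, and the image is $\mathbf{V}\mathbf{d}=d_{j_0}[1,\alpha_{j_0},\alpha_{j_0}^2,\ldots,\alpha_{j_0}^{N-1}]^{\rm T}$. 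It then remains to choose $d_{j_0}\in\mathbb{Z}[{\rm j}]\setminus\{0\}$ that clears the denominators of all powers $\alpha_{j_0}^i$ for $i=0,1,\ldots,N-1$.

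This clearance step is where the hypothesis $\alpha_{j_0}\in\mathbb{Q}[{\rm j}]$ is used: I would write $\alpha_{j_0}=(a+b{\rm j})/c$ with $a,b\in\mathbb{Z}$ and $c\in\mathbb{Z}^{+}$, so that $\alpha_{j_0}^{i}=(a+b{\rm j})^{i}/c^{i}\in\mathbb{Q}[{\rm j}]$ for every $i$. Taking $d_{j_0}=c^{N-1}$ then makes $d_{j_0}\alpha_{j_0}^{i}=c^{N-1-i}(a+b{\rm j})^{i}\in\mathbb{Z}[{\rm j}]$ for all $0\leq i\leq N-1$, so $\mathbf{V}\mathbf{d}\in\mathbb{Z}[{\rm j}]^N$ as required. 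Consequently $\mathscr{C}(\mathbf{V}\mathbf{s}')=\mathscr{C}(\mathbf{V}\mathbf{s}+\mathbf{V}\mathbf{d})=\mathscr{C}(\mathbf{V}\mathbf{s})$, contradicting unique identifiability.

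There is no genuine obstacle here: the only subtle point is keeping the support of $\mathbf{d}$ inside $\bar{\mathcal{V}}$, which is precisely what the hypothesis $j_0\notin\mathcal{V}$ secures, and ensuring $d_{j_0}$ is a nonzero Gaussian integer that simultaneously clears all $N$ denominators, which is handled uniformly by the single choice $c^{N-1}$. The argument mirrors the necessity half of Proposition \ref{EquiCondi} (translating a nontrivial lattice collision in the measurement space back to a valid perturbation in the signal space), but rephrased for the Vandermonde matrix where irreducibility of cyclotomic factors plays no role---only the rationality of the nodes matters.
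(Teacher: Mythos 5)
Your proposal is correct and follows essentially the same route as the paper: both prove the contrapositive by perturbing $\mathbf{s}$ along the single coordinate $j_0\in\mathcal{V}_r\setminus\mathcal{V}$ by a nonzero integer that clears the denominators of all powers $\alpha_{j_0}^0,\ldots,\alpha_{j_0}^{N-1}$, so that the perturbation maps to a Gaussian integer vector absorbed by $\mathscr{C}(\cdot)$. The only difference is cosmetic: the paper asserts the existence of such an integer $I$, whereas you exhibit it explicitly as $c^{N-1}$.
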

\begin{proof}
We prove the contrapositive of this proposition. Let $i$ be an element that satisfies $i\in\mathcal{V}_r$ and $i\notin\mathcal{V}$. Let 
\begin{align}\label{defth2setA}
    \mathcal{A}_i\triangleq\{\alpha_i^n\mid n\in\mathcal{N}\},
\end{align}
which is the set that contains all elements of the $i$th column of $\mathbf{V}$. According to the definition of $\mathcal{V}_r$, we have that each element in $\mathcal{A}_i$ is a Gaussian rational number. Therefore, there exists a nonzero integer $I$ such that $I\mathcal{A}_i$ is a subset of Gaussian integers, i.e., $I\mathcal{A}_i\subseteq\mathbb{Z}[{\rm j}]$. Let $\mathbf{s}' = \mathbf{s} + I\mathbf{e}_i$ where $\mathbf{e}_i$ is the one-hot vector with the $i$th element being $1$ and the others being $0$. Note that we also have $\mathbf{s}'\in\mathbb{C}^N_{\mathcal{V}}$ as $i\notin\mathcal{V}$. Furthermore, let $\mathbf{y}' = \mathbf{V}\mathbf{s}'$ be the unfolded signal generated by $\mathbf{s}'$. We have $\mathbf{y}' = \mathbf{V}(\mathbf{s} + I\mathbf{e}_{i}) = \mathbf{y} + I\mathbf{V}_{i}$ based on the fact $\mathbf{y} = \mathbf{V}\mathbf{s}$, where $\mathbf{V}_i$ is the $i$th column of $\mathbf{V}$. Given that $\mathcal{A}_i$ (\ref{defth2setA}) is the set that contains all elements of the $i$th column of $\mathbf{V}$ and $I\mathcal{A}_i\subseteq\mathbb{Z}[{\rm j}]$, we conclude that $I\mathbf{V}_{i}\in\mathbb{Z}[{\rm j}]^N $ is a vector of Gaussian integers. Let $\mathbf{z}' = \mathscr{C}(\mathbf{y}')$ be the modulo measurements of $\mathbf{y}'$. Thus, we have $\mathbf{z}' = \mathscr{C}(\mathbf{y} + I\mathbf{V}_{i})= \mathscr{C}(\mathbf{y})$, which is equal to $\mathbf{z}$.
\end{proof}

A natural question is whether the necessary condition in Theorem \ref{TheoNece} can also serve as a sufficient condition. Unfortunately, the answer is negative. In the following, a counterexample is presented to demonstrate compliance with the necessary condition while remaining unidentifiable.
\newline
\emph{Counterexample}: We consider the case where $N=8$ and $\mathbf{V} = 2\sqrt{2}\mathbf{F}$, where $\mathbf{F}\in\mathbb{C}^{8\times 8}$ is a DFT matrix. The nodes of $\mathbf{V}$ are $\{\alpha_j = {\rm e}^{{\rm j}\frac{2\pi j}{8}}\}_{j=0}^{7}$, and $\alpha_0$, $\alpha_2$, $\alpha_4$, and $\alpha_6$ are Gaussian rational numbers. Thus, we have $\mathcal{V}_r=\{0,2,4,6\}$. Let $\mathcal{V} = \{0,2,3,4,6,7\}$ which satisfies the necessary condition in Theorem \ref{TheoNece}, i.e., $\mathcal{V}_r\subseteq\mathcal{V}$. We construct another vector $\mathbf{s}'\in \mathbb{C}^N_{\mathcal{V}}$ shown as 
\begin{align}
    s'_i = \begin{cases}
        0, \text{if } i\in{\mathcal{V}}~(i=0,2,3,4,6,7)\\
        s_i +  2\sqrt{2}, \text{if } i\in{\bar{\mathcal{V}}}~(i=1,5)
    \end{cases}.
\end{align}
Let $\mathbf{y}' = \mathbf{V}\mathbf{s}'$ be the unfolded measurement generated by $\mathbf{s}'$. Then, we have $\mathbf{y}' = \mathbf{F}\mathbf{s}+2\sqrt{2}\mathbf{F}_{1} + 2\sqrt{2}\mathbf{F}_{5} = \mathbf{y} + [2,0,-2{\rm j},0,-2,0,2{\rm j},0]^{\rm T}$. Let $\mathbf{z}' = \mathscr{C}(\mathbf{y}')$ be the modulo measurements of $\mathbf{y}'$. We have $\mathbf{z}' =  \mathscr{C}(\mathbf{y} + [2,0,-2{\rm j},0,-2,0,2{\rm j},0]^{\rm T}) = \mathscr{C}(\mathbf{y}) = \mathbf{z}$. Therefore, the model is unidentifiable.

\subsection{Proof of Theorem \ref{TheoComplex}}\label{proofTheo1}
Before proving Theorem \ref{TheoComplex}, we first introduce Lemma \ref{lemmaT2_1}. Let $f(x)$ be a polynomial in $\mathbb{Q}[{\rm j}][x]$ of degree $N$, where $\mathbb{Q}[{\rm j}][x]=\{\sum_{i=0}^{n}a_ix^i\mid a_i\in\mathbb{Q}[{\rm j}],n\in\mathbb{N}\}$ is the ring of polynomials over $\mathbb{Q}[{\rm j}]$, and let $\mathcal{R}_f\triangleq \{x\mid f(x) = 0\}$ be the set of roots of the polynomial $f(x)$. Lemma \ref{lemmaT2_1} shows that $f(x) = A(x^N-1)$, where $A$ is any nonzero Gaussian rational number if and only if $\forall k\in\{1,\cdots,K\}$, $\exists x\in\mathcal{R}_f$ such that $c_k(x) = 0$, where $c_k(x)$ is the $k$th factor of $C(x)= x^N-1$ defined in (\ref{factorCx}).

\begin{lemma}\label{lemmaT2_1}
$f(x) = A(x^N-1)$, where $A\in\mathbb{Q}[{\rm j}]$ is any nonzero Gaussian rational number, holds for all $x\in\mathbb{C}$ if and only if $\forall k\in\{1,\cdots,K\}$, $\exists x\in\mathcal{R}_f$ such that $c_k(x) = 0$.
\end{lemma}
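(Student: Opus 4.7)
The plan is to handle the two directions separately, with the forward implication being essentially immediate and the reverse implication being where the real content lies.

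For the forward direction ($\Rightarrow$), I would simply note that if $f(x) = A(x^N-1) = A\,c_1(x)\cdots c_K(x)$, then every root of every $c_k(x)$ is a root of $f(x)$, so $\mathcal{R}_f$ trivially contains at least one root of each $c_k(x)$.

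For the reverse direction ($\Leftarrow$), assume that for each $k \in \{1,\dots,K\}$ there exists $\xi_k \in \mathcal{R}_f$ with $c_k(\xi_k) = 0$. The key step is to show that $c_k(x)$ divides $f(x)$ in $\mathbb{Q}[{\rm j}][x]$ for every $k$. I would argue this by applying the polynomial remainder theorem (Proposition \ref{polyReTh}) to write $f(x) = c_k(x)q_k(x) + r_k(x)$ with $\deg(r_k) < \deg(c_k)$; evaluating at $\xi_k$ gives $r_k(\xi_k) = 0$, and since $c_k$ is irreducible over $\mathbb{Q}[{\rm j}]$ with $\xi_k$ as a root, Proposition \ref{CycPriP} forces $r_k \equiv 0$, so $c_k \mid f$.

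Next I would combine these divisibility relations. Since $c_1,\dots,c_K$ are distinct monic irreducible polynomials in $\mathbb{Q}[{\rm j}][x]$ (they are the distinct irreducible factors of $x^N - 1$), they are pairwise coprime, and by unique factorization (Proposition \ref{UniFacPoly}) their product $c_1(x)\cdots c_K(x) = x^N - 1$ must also divide $f(x)$. Since $\deg(f) = N = \deg(x^N - 1)$, the quotient must be a constant $A \in \mathbb{Q}[{\rm j}]$, and the constraint $\deg(f) = N$ (rather than something smaller) forces $A \neq 0$; hence $f(x) = A(x^N - 1)$ with $A$ a nonzero Gaussian rational, as required.

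The main obstacle, and the step where the preliminaries in Section \ref{SecPreli} are actually being used, is the passage from ``$\xi_k$ is a common root of the irreducible $c_k$ and of $f$'' to ``$c_k$ divides $f$ as polynomials over $\mathbb{Q}[{\rm j}]$''; this is exactly why irreducibility over $\mathbb{Q}[{\rm j}]$ (not just over $\mathbb{C}$) matters, since the roots $\xi_k$ are generally not themselves Gaussian rationals and a naive factor-theorem argument over $\mathbb{Q}[{\rm j}]$ does not apply. Everything else is bookkeeping with degrees and with unique factorization.
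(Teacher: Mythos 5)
Your proof is correct and follows essentially the same route as the paper's: both directions hinge on the polynomial remainder theorem (Proposition \ref{polyReTh}) together with Proposition \ref{CycPriP} to force each remainder to vanish, followed by a degree count to pin down the constant $A$. The only cosmetic difference is that you establish $c_k \mid f$ for every $k$ in parallel and then invoke pairwise coprimality of the distinct irreducible factors to conclude $(x^N-1)\mid f$, whereas the paper peels the factors off sequentially, dividing each successive quotient by the next $c_k$ and using that the $c_k$ have disjoint root sets; both variants are valid.
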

\begin{proof}
We first prove the necessary part, which can be easily obtained as $f(x) = A(x^N-1) = AC(x) = Ac_1(x)\cdots c_{K}(x)$ according to the fact $C(x) = c_1(x) \cdots c_K(x)$ (\ref{factorCx}). 

For the sufficient part, let $x_1,x_2,\cdots,x_K\in\mathcal{R}_f$ be the particular roots of $c_1(x),c_2(x),\cdots,c_K(x)$, respectively, meaning $c_k(x_k)=0$ for $k=1,\cdots,K$. Since $C(x) = x^N-1 = c_1(x) \cdots c_K(x)$ (\ref{factorCx}) does not have repeated roots, it follows that $x_i\neq x_j$ when $i\neq j$. According to the polynomial remainder theorem in Proposition \ref{polyReTh}, the division of $f(x)$ by $c_1(x)$ yields 
\begin{align}\label{Th2lemma1decom}
    f(x) = c_1(x)q_1(x) + r_1(x),
\end{align}
where $q_1(x)$ and $r_1(x)$ are polynomials in $\mathbb{Q}[{\rm j}][x]$ and the degree of $r_1(x)$ is less than that of $c_1(x)$. Recall that $x_1$ is a root of both $f(x)$ and $c_1(x)$, so we have $f(x_1) = c_1(x_1) = 0$. Thus, according to (\ref{Th2lemma1decom}), we have $r_1(x_1) = 0$. Since $c_1(x)$ is irreducible over $\mathbb{Q}[{\rm j}]$, based on Proposition \ref{CycPriP}, it follows that there is no polynomial in $\mathbb{Q}[{\rm j}][x]$ with degree less than that of $c_1(x)$ that has $x_1$ as a root. Therefore, the polynomial $r_1(x)$ must be zero. Consequently, based on (\ref{Th2lemma1decom}), we have 
\begin{align}\label{Th1lemma1fcq}
    f(x) = c_1(x)q_1(x).
\end{align}
Based on the equation $N = \sum_{k=1}^K{\rm deg}(c_k(x))$ obtained from $C(x) = x^N-1 = c_1(x) \cdots c_K(x)$, we have that ${\rm deg}(q_1(x)) = N - {\rm deg}(c_1(x))\geq {\rm deg}(c_2(x))$. Similarly to (\ref{Th2lemma1decom}), according to the polynomial remainder
theorem in Proposition \ref{polyReTh}, the division of $q_1(x)$ by $c_2(x)$ yields 
\begin{align}\label{Th2lemma1b}
    q_1(x) = c_2(x) q_2(x) + r_2(x),
\end{align}
where $q_2(x)$ and $r_2(x)$ are polynomials in $\mathbb{Q}[{\rm j}][x]$. As $f(x) = c_1(x)q_1(x)$ (\ref{Th1lemma1fcq}) and $x_2$ is a root of $f(x)$, we have $f(x_2) = c_1(x_2)q_1(x_2)=0$. In addition, we have $c_1(x_2)\neq 0$ since $x_2$ is the root of $c_2(x)$ and $C(x)$ does not have repeated roots. Therefore, we have $q_1(x_2) = 0$. Furthermore, we conclude that $r_2(x_2) = 0$ according to (\ref{Th2lemma1b}). Similarly to $c_1(x)$, there does not exist a polynomial in $\mathbb{Q}[{\rm j}][x]$ of degree less than $c_2(x)$ having a root at $x_2$ according to Proposition \ref{CycPriP}. Thus, the polynomial $r_2(x)$ is equal to $0$. Therefore, we obtain $q_1(x) = c_2(x)q_2(x)$ according to (\ref{Th2lemma1b}), and thus $f(x) = c_1(x)c_2(x)q_2(x)$ according to (\ref{Th2lemma1decom}). Continuing this process, we can finally express $f(x)$ as $f(x) = c_1(x)c_{2}(x)\cdots c_{K}(x)q_{K}(x)$. According to the equation $C(x) = x^N-1=c_1(x) \cdots c_K(x)$ (\ref{factorCx}), we have $f(x) = (x^N-1)q_K(x)$. Since $f(x)$ is an $N$-degree polynomial in $\mathbb{Q}[{\rm j}][x]$, we conclude that $q_K(x)$ is a constant. Therefore, we have $f(x) = A(x^N-1)$, where $A$ can be any nonzero number in $\mathbb{Q}[{\rm j}]$.
\end{proof}

Here, we reintroduce Theorem \ref{TheoComplex}: \emph{In the modulo-DFT sensing model (\ref{decomsimple}), the signal $\mathbf{s}\in\mathbb{C}^N_{\mathcal{V}}$ can be uniquely recovered from the modulo measurements $\mathbf{z}$ if and only if $\forall k\in\{1,\cdots,K\}$, $\mathcal{N}_k\cap\mathcal{V}\neq \emptyset$.} 

The proof of Theorem \ref{TheoComplex} is provided below, which is based on Lemma \ref{lemmaT2_1}.
\newline
\textbf{\emph{Proof of Theorem \ref{TheoComplex}:}} According to Proposition \ref{EquiCondi}, proving Theorem \ref{TheoComplex} is equivalent to proving: \emph{There does not exist a nonzero Gaussian integer polynomial $f(x)$ of maximum degree $N-1$ satisfying $f({\rm e}^{{\rm j}\frac{2\pi n}{N}}) = 0$ for all $n\in\mathcal{V}$ if and only if $\forall k\in\{1,\cdots,K\}$, $\mathcal{N}_k\cap\mathcal{V}\neq \emptyset$.}

We first prove the necessary part by proving the contrapositive, i.e, if $\exists k\in\{1,\cdots,K\}$ such that $\mathcal{N}_k\cap\mathcal{V}= \emptyset$, then there exists a nonzero Gaussian integer polynomial $f(x)$ of maximum degree $N-1$ satisfying $f({\rm e}^{{\rm j}\frac{2\pi n}{N}}) = 0$ for all $n\in\mathcal{V}$. Let $i$ be an integer in $\{1,2,\cdots, K\}$ such that $\mathcal{N}_i\cap\mathcal{V} = \emptyset$. When $K=1$, we have $i=1$ and $\mathcal{N}_i = \mathcal{N}$. Thus, we have $\mathcal{N} \cap \mathcal{V} = \emptyset$. Since $\mathcal{V} \subseteq \mathcal{N}$, we have $\mathcal{V} = \emptyset$. Therefore, any Gaussian integer polynomials $f(x)$ of maximum degree $N-1$, such as  $f(x) = A$, where $A \in \mathbb{Z}[\rm j]$, satisfies $f({\rm e}^{{\rm j}\frac{2\pi n}{N}}) = 0$ for all $n\in\mathcal{V}$. In the following, we consider the case that $K\geq 2$. Let $M(x)$ be a polynomial defined as  
\begin{align}\label{defMx}
    M(x) \triangleq \prod_{k=1,k\neq i}^Kc_k(x),
\end{align}
where $\{c_k(x)\}_{k=1}^K$ are factors of $C(x) = x^N-1$ (\ref{factorCx}) and are all polynomials with Gaussian rational coefficients. Therefore, $M(x)$ is a nonzero polynomial with Gaussian rational coefficients of maximum degree $N-1$. Consequently, there exists an integer $B$ such that $f(x) = B M(x)$ is a nonzero polynomial with Gaussian integer coefficients of maximum degree $N-1$. Below, we prove that $f(x)$ satisfies $f({\rm e}^{{\rm j}\frac{2\pi n}{N}}) = 0$ for all $n \in \mathcal{V}$. Since $\{c_k(x)\}_{k=1,k\neq i}^{K}$ are all factors of $M(x)$ (\ref{defMx}) and also $f(x) = B M(x)$, we have $\bar{\mathcal{C}_i} \triangleq \bigcup_{k=1,k\neq i}^K \mathcal{C}_k = \mathcal{C}\backslash\mathcal{C}_i$ are roots of $f(x)$, where $\mathcal{C}_k$ is the set of roots of $c_k(x)$. Furthermore, recall that $\mathcal{N}_k$ (\ref{defni}) is the set that records the powers of ${\rm e}^{{\rm j}\frac{2\pi}{N}}$ in $\mathcal{C}_k$. We have $f({\rm e}^{{\rm j}\frac{2\pi n}{N}}) = 0$ for all $n \in \bar{\mathcal{N}_i} \triangleq \mathcal{N} \backslash \mathcal{N}_i$. Since $\mathcal{N}_i \cap \mathcal{V} = \emptyset$, it follows that $\mathcal{V} \subseteq \bar{\mathcal{N}_i}$. Therefore, $f({\rm e}^{{\rm j}\frac{2\pi n}{N}}) = 0$ for all $n \in \mathcal{V}$.

Now, we prove the sufficient part, i.e., if $\forall k\in\{1,\cdots,K\}$, $\mathcal{N}_k\cap\mathcal{V}\neq \emptyset$, then there does not exist a nonzero Gaussian integer polynomial $f(x)$ of maximum degree $N-1$ satisfying $f({\rm e}^{{\rm j}\frac{2\pi n}{N}}) = 0$ for all $n\in\mathcal{V}$. We prove it by contradiction. Assume there exists a nonzero Gaussian integer polynomial $f(x)$ of maximum degree $N-1$ satisfying $f({\rm e}^{{\rm j}\frac{2\pi n}{N}}) = 0$ for all $n\in\mathcal{V}$. Let $g(x) \triangleq x^{N - \deg(f(x))} f(x)$, which is a polynomial with Gaussian integer coefficients of degree $N$ and also satisfies $g({\rm e}^{{\rm j}\frac{2\pi n}{N}}) = 0$ for all $n \in \mathcal{V}$. Since ${\rm deg}(f(x))\leq N-1$, $x$ is a factor of $g(x)$ and $g(0)=0$. Considering $\forall k\in\{1,\cdots,K\}$, $\mathcal{N}_k\cap\mathcal{V}\neq \emptyset$ and $g({\rm e}^{{\rm j}\frac{2\pi n}{N}}) = 0$ for all $n\in\mathcal{V}$, we have that $\forall k\in\{1,\cdots,K\}$, $\exists n\in\mathcal{N}_k$ such that $g({\rm e}^{{\rm j}\frac{2\pi n}{N}}) = 0$. In addition, since $\mathcal{N}_k$ (\ref{defni}) is the set that records the powers of ${\rm e}^{{\rm j}\frac{2\pi}{N}}$ in $\mathcal{C}_k$, we have $\forall k \in \{1, \cdots, K\}$, $\exists x \in \mathcal{C}_k$ such that $g(x) = 0$. This implies that $\forall k \in \{1, \cdots, K\}$, there exists a root of $g(x)$ which is also a root of $c_k(x)$. Thus, according to Lemma \ref{lemmaT2_1}, we have $g(x) = A(x^N-1)$, where $A$ is any nonzero number in $\mathbb{Z}[{\rm j}]$, and $g(0) = -A\neq 0$. However, this contradicts the previously derived result $g(0)=0$ as $x$ is a factor of $g(x)$.

\section{Identifiability of Periodic Bandlimited Signals}
In this section, we discuss the identifiability of PBL signals under MS which can be considered as the modulo-DFT sensing model with additional symmetric and conjugate constraints on the original signal when the sampling rate is greater than the Nyquist rate, and we provide a necessary and sufficient condition based solely on the number of measurements $N$ to uniquely identify the PBL signal from the modulo samples up to a constant factor.
\subsection{Problem setup}
A PBL signal with $P\geq 1$ harmonics is
\begin{align}\label{periodicModel1}
    g(t) = \sum\nolimits_{p=-P}^Pc_p{\rm e}^{{\rm j}2\pi pf_0 t},
\end{align}
where $c_p$ is the Fourier series coefﬁcient and $c_p = c^*_{-p}$, $p=0,1,\cdots,P$. 
In the context of MS with the sampling rate $f_s = Nf_0$, $N$ samples are obtained in one period which are 
\begin{align}\label{PBSmoduloModel}
    z_n = \mathscr{M}(y_n),~n=0,1,\cdots,N-1,
\end{align}
where 
\begin{align}
    y_n = g(nT_s)= \sum_{p=-P}^Pc_p{\rm e}^{{\rm j}\frac{2\pi np}{N}}
\end{align}
are unfolded samples, $\mathscr{M}(\cdot)$ is the modulo operator defined in (\ref{modulooperator}), and $T_s = 1/f_s$ is the sampling period. Similar to (\ref{decomsimple}), let $\mathbf{z} \triangleq [z_0,z_1,\cdots,z_{N-1}]^{\rm T}$ and $\mathbf{y} \triangleq [y_0,y_1,\cdots,y_{N-1}]^{\rm T}$, we have 
\begin{align}\label{PBSmoduloDecom}
    \mathbf{z} = \mathbf{y} + \boldsymbol{\epsilon},
\end{align}
where $\boldsymbol{\epsilon}\in\mathbb{Z}^N$ is an integer vector of length $N$.

For the PBL signal (\ref{periodicModel1}), when the number of samples in one period is less than or equal to $2P+1$, i.e., $N\leq 2P+1$, without considering the ambiguity on the DC component, i.e., $c_0$, the PBL signal can not be uniquely recovered from the modulo samples. This conclusion has been proved in \cite[Theorem 2]{mulleti2024modulo}. It is worth noting that while the PBL signal can be uniquely recovered from the unfolded samples $\mathbf{y}$ in the conventional sampling scheme when $N = 2P + 1$, the introduction of the modulo operation leads to an ambiguity that makes the PBL signal unidentifiable under MS for the same value of $N$. In this paper, we study the identifiability of the PBL signal (\ref{periodicModel1}) under MS when the number of samples in one period is greater than $2P+1$, i.e., $N> 2P+1$. In this scenario, the measurement model (\ref{PBSmoduloDecom}) can be further written as 
\begin{align}\label{periodicModel}
    \mathbf{z} = \mathbf{y} + \boldsymbol{\epsilon} = \mathbf{F}\mathbf{s} + \boldsymbol{\epsilon}
\end{align}
where $\mathbf{F}\in\mathbb{C}^{N\times N}$ is the DFT matrix, and 
\begin{align}
    {\mathbf{s}}= \sqrt{N}[c_0,c_{-1},\cdots,c_{-P},0,\cdots,0,c_{P},\cdots,c_1]^{\rm T}
\end{align}
is a complex vector generated by the Fourier series coefficients $c_{-P}, \cdots, c_0, \cdots, c_{P}$. We observe that the PBL signal under MS (\ref{periodicModel}) is also a modulo-DFT sensing model (\ref{moduloequation}) when $N> 2P+1$. However, compared to the general modulo-DFT sensing model (\ref{moduloequation}), several additional constraints are imposed on the signal $s$ for the measurement (\ref{periodicModel}). In detail, $\mathbf{s}$ is required to satisfy the symmetric and conjugate property, meaning $s_{n} = s^*_{N-n}$ for $n = 1, \cdots, N-1$ and $s_0 \in \mathbb{R}$ is a real number, which ensures that the modulo samples in the model (\ref{periodicModel}) are real numbers. In addition, given the number of measurements $N$, the set $\mathcal{P}$, which records the indices of the zero elements of $\mathbf{s}$ in the model (\ref{periodicModel}), is in the form of 
\begin{align}\label{Pdefine}
    \mathcal{P} = \{P+1, P+2, \cdots, N-P-1\},
\end{align}
whereas the set $\mathcal{V}$, which records the indices of the zero elements of $\mathbf{s}$ in the model (\ref{moduloequation}), is arbitrary. Therefore, for the PBL signal under MS (\ref{periodicModel}), the equivalent signal $\mathbf{s}$ belongs to the constraint set $\mathbb{S}^{N}_{\mathcal{P}}$ which is defined as 
\begin{align}
    \mathbb{S}^{N}_{\mathcal{P}}
    \triangleq\{\mathbf{s} \mid \mathbf{s}\in\mathbb{C}^N_{\mathcal{P}}, s_0\in\mathbb{R}, s_{n} = s^*_{N-n}, n=1,2,\cdots,N-1\}.\notag
\end{align} 
The set $\mathbb{S}^{N}_{\mathcal{P}}$ is a subspace of $\mathbb{C}^N_{\mathcal{P}}$ which is also closed under addition. In the following, we investigate the identifiability of PBL signals under MS (\ref{PBSmoduloModel}) when $N > 2P+1$, which is equivalent to the identifiability of the modulo-DFT sensing model (\ref{periodicModel}) for all $\mathbf{s}\in\mathbb{S}^{N}_{\mathcal{P}}$. 

\subsection{Identifiability}
A necessary and sufficient condition for uniquely recovering $\mathbf{y}$ from $\mathbf{z}$ up to an integer constant vector is introduced in Theorem \ref{theorem3}, where 
\begin{align}\label{DefD_N}
    \mathcal{D}\triangleq\left\{n\mid n\in\mathbb{Z}^{+},n\mid N\right\}
\end{align}
is the set that contains all positive integers that divide $N$ and $\Phi_{d}(x)$ is the $d$th cyclotomic polynomial. To make this paper self-contained, we introduce cyclotomic polynomials and their properties in Appendix \ref{AppenD}.
\begin{theorem}\label{theorem3}
     The samples of the PBL signal, i.e., $\mathbf{y}$, are uniquely identifiable up to an integer constant vector from modulo samples $\mathbf{z}$ if and only if $\forall d\in\mathcal{D}\setminus\{1\}, \exists n\in\mathcal{P}$ such that ${\rm e}^{{\rm j}\frac{2\pi n}{N}}$ is a root of $\Phi_{d}(x)$.
\end{theorem}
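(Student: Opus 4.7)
The plan is to adapt the polynomial machinery behind Theorem~\ref{TheoComplex} to the PBL setting, working with integer (rather than Gaussian integer) coefficients because real modulo samples force the ambiguity vector $\bar{\boldsymbol{\epsilon}}$ to lie in $\mathbb{Z}^N$, and allowing one extra degree of freedom corresponding to a constant integer vector (the DC ambiguity). First I would establish a variant of Proposition~\ref{EquiCondi} tailored to (\ref{periodicModel}): $\mathbf{y}$ is uniquely identifiable from $\mathbf{z}$ up to an integer constant vector if and only if every nonzero $f(x)\in\mathbb{Z}[x]$ of degree at most $N-1$ satisfying $f({\rm e}^{{\rm j}\frac{2\pi n}{N}})=0$ for all $n\in\mathcal{P}$ is of the form $c\,(1+x+\cdots+x^{N-1})$ with $c\in\mathbb{Z}$. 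The translation uses exactly the argument of Proposition~\ref{EquiCondi}: two distinct elements of $\mathbb{S}^N_{\mathcal{P}}$ produce identical modulo samples if and only if their difference equals $\mathbf{F}^{\rm H}\bar{\boldsymbol{\epsilon}}$ for some real integer $\bar{\boldsymbol{\epsilon}}$ with $[\mathbf{F}^{\rm H}\bar{\boldsymbol{\epsilon}}]_n=0$ for $n\in\mathcal{P}$; the required conjugate symmetry of the difference is automatic because $\bar{\boldsymbol{\epsilon}}$ is real, and $\bar{\boldsymbol{\epsilon}}$ is a constant vector exactly when its coefficient polynomial is a scalar multiple of $1+x+\cdots+x^{N-1}$.

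The second ingredient is the cyclotomic factorization $x^N-1=\prod_{d\in\mathcal{D}}\Phi_d(x)$ in $\mathbb{Z}[x]$, with each $\Phi_d$ irreducible over $\mathbb{Q}$. Because $\Phi_1(x)=x-1$, the ``allowed'' polynomial $1+x+\cdots+x^{N-1}$ equals $\prod_{d\in\mathcal{D}\setminus\{1\}}\Phi_d(x)$, so the cyclotomic decomposition aligns perfectly with the stated condition. For the sufficient direction, assume $\mathcal{D}_d\cap\mathcal{P}\neq\emptyset$ for each $d\in\mathcal{D}\setminus\{1\}$, where $\mathcal{D}_d\triangleq\{n\in\mathcal{N}:\Phi_d({\rm e}^{{\rm j}\frac{2\pi n}{N}})=0\}$. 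Given any nonzero $f(x)\in\mathbb{Z}[x]$ of degree at most $N-1$ vanishing on $\{{\rm e}^{{\rm j}\frac{2\pi n}{N}}:n\in\mathcal{P}\}$, the hypothesis supplies, for each $d\in\mathcal{D}\setminus\{1\}$, a common root of $f$ and $\Phi_d$; irreducibility of $\Phi_d$ over $\mathbb{Q}$ (invoked via Propositions~\ref{polyReTh}--\ref{CycPriP} exactly as in Lemma~\ref{lemmaT2_1}) forces $\Phi_d(x)\mid f(x)$ in $\mathbb{Q}[x]$, and coprimality of the distinct cyclotomic factors then gives $(1+x+\cdots+x^{N-1})\mid f(x)$. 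A degree count yields $f(x)=c\,(1+x+\cdots+x^{N-1})$ for some $c\in\mathbb{Q}$, and matching any coefficient against the corresponding integer coefficient of the monic divisor pins down $c\in\mathbb{Z}$.

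For necessity I would argue by contrapositive: if some $d^*\in\mathcal{D}\setminus\{1\}$ satisfies $\mathcal{D}_{d^*}\cap\mathcal{P}=\emptyset$, set $M(x)\triangleq\prod_{d\in\mathcal{D},\,d\neq d^*}\Phi_d(x)\in\mathbb{Z}[x]$, which has degree $N-\varphi(d^*)\leq N-1$. Since $\{\mathcal{D}_d\}_{d\in\mathcal{D}}$ partitions $\mathcal{N}$, every $n\in\mathcal{P}$ lies in some $\mathcal{D}_d$ with $d\neq d^*$, so $M$ vanishes at all the prescribed angles. However $\Phi_1=x-1$ appears as a factor of $M$ (because $d^*\neq 1$), hence $M(1)=0$, while $(1+x+\cdots+x^{N-1})\big|_{x=1}=N\neq 0$; therefore $M$ is not a scalar multiple of $1+x+\cdots+x^{N-1}$. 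By the equivalence of the first paragraph this exhibits a nonconstant real integer ambiguity vector, breaking identifiability up to an integer constant.

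The main obstacle I anticipate is setting up the equivalence in the first paragraph with care: the DC ambiguity must be tracked in parallel with the real symmetric-conjugate constraint defining $\mathbb{S}^N_{\mathcal{P}}$, and one must verify that the real coefficient structure (versus Gaussian integer in Theorem~\ref{TheoComplex}) is precisely what permits the cyclotomic factorization over $\mathbb{Q}$. Once that bookkeeping is in place, the rest of the argument is a clean transcription of the Gaussian-integer proof of Theorem~\ref{TheoComplex} to the integer/rational setting.
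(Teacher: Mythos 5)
Your proposal is correct and follows essentially the same route as the paper: the same reduction to integer polynomials vanishing on $\{{\rm e}^{{\rm j}\frac{2\pi n}{N}}\mid n\in\mathcal{P}\}$ modulo the set $\{A\sum_{n=0}^{N-1}x^n\mid A\in\mathbb{Z}\}$ (the paper's Proposition \ref{EquiCondiPBS}), the same cyclotomic factorization over $\mathbb{Q}$, and the same construction $M(x)=\prod_{d\in\mathcal{D},\,d\neq d^*}\Phi_d(x)$ for the necessity direction. The only cosmetic difference is in sufficiency, where the paper pads $f$ to the degree-$N$ polynomial $x^{\alpha}(x-1)f(x)$ and invokes its Lemma \ref{PBSlemma1}, whereas you divide $f$ directly by each $\Phi_d$ with $d\neq 1$; the underlying successive-division-by-irreducible-factors argument is identical.
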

\begin{proof}
    The proof is postponed to Sec. \ref{ProofPBSTh3}.
\end{proof}

\emph{Remark:} The necessary and sufficient condition for the unique identification of PBL signals under MS (\ref{periodicModel}) shown in Theorem \ref{theorem3} has a close relationship with the factorization of $C(x) = x^N - 1$ over the rational number field $\mathbb{Q}[\text{j}]$. According to Property \ref{FacC} and Property \ref{irre_poly} in Appendix \ref{AppenD}, $C(x)$ is factored as 
\begin{align}\label{CxfactorPBS}
    C(x) = x^N - 1 = \prod_{d \in \mathcal{D}} \Phi_d(x)
\end{align}
over the rational number field $\mathbb{Q}$. Therefore, the condition that $\forall d \in \mathcal{D} \setminus \{1\}, \exists n \in \mathcal{P}$ such that $\text{e}^{\text{j} \frac{2\pi n}{N}}$ is a root of $\Phi_d(x)$ in Theorem \ref{theorem3} is equivalent to the condition that for any factor of $C(x)$ over the rational number field $\mathbb{Q}$, except $\Phi_{1}(x) = x - 1$, there exists $n \in \mathcal{P}$ such that $\text{e}^{\text{j} \frac{2\pi n}{N}}$ is a root of that factor. Recall that for the unique identification of the modulo-DFT sensing model (\ref{decomsimple}) shown in Corollary \ref{CoroTheorem}, which is a corollary of Theorem \ref{TheoComplex}, the necessary and sufficient condition for unique identification is equivalent to $\forall k \in \{2, 3, \cdots, K\}, \exists n \in \mathcal{V}$ such that $\text{e}^{\text{j} \frac{2\pi n}{N}}$ is a root of $c_k(x)$, where $c_k(x)$ is the $k$th factor of the factorization of $C(x) = x^N - 1$ over the Gaussian rational number field $\mathbb{Q}[\text{j}]$ (\ref{factorCx}), according to the definition of $\mathcal{N}_k$ (\ref{defni}).\footnote{In Corollary \ref{CoroTheorem}, the original signal $\mathbf{s}_{1:N-1}$ can be uniquely recovered from $\mathbf{z}$ is equivalent to the unfolded samples $\mathbf{y}$ being uniquely recovered from $\mathbf{z}$ up to an integer constant vector.} Therefore, the only difference between the necessary and sufficient condition for unique identification in Theorem \ref{theorem3} and that in Corollary \ref{CoroTheorem} is that for PBL signals under MS (\ref{periodicModel}), the condition is derived from the factorization of $C(x) = x^N - 1$ over the rational number field $\mathbb{Q}$ since the modulo samples are real numbers, whereas for the modulo-DFT sensing model (\ref{decomsimple}), the condition is derived from the factorization of $C(x) = x^N - 1$ over the Gaussian rational number field $\mathbb{Q}[\text{j}]$ since the modulo samples are complex numbers.

Compared to $\mathcal{V}$, the set records the indices of the zero elements of the original signal $\mathbf{s}$ in the modulo-DFT sensing model (\ref{decomsimple}), which can be any subset of $\mathcal{N} = \{0,1,\cdots,N-1\}$, the corresponding set $\mathcal{P}$ (\ref{Pdefine}) for PBL signals under MS (\ref{periodicModel}) has a specific structure and is determined by $P$ and $N$. Surprisingly, after further study, we discovered that the necessary and sufficient condition for the unique identification in Theorem \ref{theorem3} can be transformed into a condition only with respect to the number of measurements $N$, and the result is illustrated in Lemma \ref{PBSlemma3}, where $H(N)$ is defined as 
\begin{align}\label{defineHN}
    H(N) = \frac{N}{\operatorname{max}_{d\in\mathcal{D}\setminus\{1\}}h(d)}.
\end{align}
Here, $h(d)$ is a function of integers greater than $1$ defined as
\begin{align}\label{definehn}
    h(d) = \begin{cases}
    2+\frac{2}{d-1},~2\nmid d;&({\rm Case~} 1)\\
    2,~d=2;&({\rm Case~} 2)\\
    2+\frac{8}{d-4},~2\mid d, d\neq 2\text{ and } 4\nmid d;&({\rm Case~} 3)\\
    2+\frac{4}{d-2},~4\mid d;&({\rm Case~} 4)\\
    \end{cases},
\end{align}
where $a\nmid b$ means that $a$ does not divide $b$. Therefore, based on Theorem \ref{theorem3} and Lemma \ref{PBSlemma3}, we have that the unfoled samples $\mathbf{y}$ are uniquely identifiable up to an integer constant vector from modulo samples $\mathbf{z}$ if and only if $H(N) \geq P+1$, and the conclusion is summarized in Theorem \ref{theorem2}.
\begin{lemma}\label{PBSlemma3}
    $\forall d\in\mathcal{D}\setminus\{1\}, \exists n\in\mathcal{P}$ such that ${\rm e}^{{\rm j}\frac{2\pi n}{N}}$ is a root of $\Phi_{d}(x)$ if and only if $H(N)\geq P+1$.
\end{lemma}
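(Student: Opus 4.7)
The plan is to translate the condition on roots of cyclotomic polynomials into a purely arithmetic statement about integers coprime to $d$, exploit a $k\leftrightarrow d-k$ symmetry to reduce it to a one-sided inequality, and then pinpoint the largest admissible $k$ for each of the four arithmetic cases in the definition (\ref{definehn}) of $h(d)$.

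First I would unfold the condition for a fixed $d\in\mathcal{D}\setminus\{1\}$ (so $d\mid N$). Since an $N$th root of unity ${\rm e}^{{\rm j}2\pi n/N}$ has order exactly $d$ iff $\gcd(n,N)=N/d$, the roots of $\Phi_d(x)$ lying in $\{{\rm e}^{{\rm j}2\pi n/N}:n\in\mathcal{N}\}$ are precisely those with $n=kN/d$ for some integer $k$ with $1\leq k\leq d-1$ and $\gcd(k,d)=1$. Hence the $d$-th clause of the condition becomes
\[
(\ast)_d:\quad \exists\, k\in\{1,\ldots,d-1\},\ \gcd(k,d)=1,\ P+1\leq \tfrac{kN}{d}\leq N-P-1.
\]
The involution $k\mapsto d-k$ preserves coprimality with $d$ and sends $kN/d$ to $N-kN/d$. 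Because $\mathcal{P}$ in (\ref{Pdefine}) is symmetric about $N/2$ and we are in the regime $N\geq 2P+2$, $(\ast)_d$ is equivalent to the existence of $k\leq\lfloor d/2\rfloor$ with $\gcd(k,d)=1$ and $kN/d\geq P+1$ (the upper bound is then automatic). Letting $k^*(d)$ denote the largest such $k$, this in turn reduces to $N\geq (P+1)\,d/k^*(d)$.

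Next I would identify $k^*(d)$ and verify $d/k^*(d)=h(d)$ in each of the four arithmetic cases of (\ref{definehn}). Case 1 ($d$ odd): $(d-1)/2$ is the largest integer $\leq d/2$, and $\gcd((d-1)/2,d)\mid \gcd(d-1,d)=1$, so $k^*=(d-1)/2$ and $d/k^*=2d/(d-1)$. Case 2 ($d=2$): the only candidate is $k=1$, giving $d/k^*=2$. Case 3 ($d\equiv 2\pmod 4$ with $d\geq 6$): $d/2$ is odd but divides $d$ (not coprime), $d/2\pm 1$ are even (not coprime to $d$), and $d/2-2$ is odd and coprime to $d$ since every odd prime $p\mid d$ satisfies $p\mid d/2$, hence $d/2-2\equiv -2\not\equiv 0\pmod p$; so $k^*=d/2-2$ and $d/k^*=2d/(d-4)$. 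Case 4 ($4\mid d$): $d/2$ is even (not coprime), while $d/2-1$ is odd and $\gcd(d/2-1,d)=\gcd(d/2-1,2)=1$ by a single Euclidean step, so $k^*=d/2-1$ and $d/k^*=2d/(d-2)$. Each expression matches the closed form of $h(d)$.

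Combining, the system $\{(\ast)_d\}_{d\in\mathcal{D}\setminus\{1\}}$ holds if and only if $N\geq h(d)(P+1)$ for every $d\in\mathcal{D}\setminus\{1\}$, equivalently $N\geq (P+1)\max_{d\in\mathcal{D}\setminus\{1\}}h(d)$, which by the definition (\ref{defineHN}) of $H(N)$ is exactly $H(N)\geq P+1$. The main obstacle is the case analysis pinning down $k^*(d)$: one must both propose the right candidate among $(d-1)/2$, $d/2-1$, and $d/2-2$, and rule out every larger integer $\leq d/2$ by exhibiting a shared prime factor with $d$, a step that splits further according to the parity of $d/2$ in Cases 3 and 4.
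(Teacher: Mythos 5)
Your proof is correct and follows essentially the same route as the paper's: both reduce the root condition on $\Phi_d$ to finding $k$ coprime to $d$ with $kN/d\in\mathcal{P}$, and both hinge on the same four-way case analysis of $d\bmod 4$ identifying which integers adjacent to $d/2$ are coprime to $d$. Your packaging via the symmetry $k\mapsto d-k$ and the extremal $k^*(d)$ merely fuses the paper's separate sufficiency (explicit witness $k=d/2+c$) and necessity (no coprime integer in $(\frac{d}{2}-c,\frac{d}{2}+c)$) arguments into a single equivalence $N\geq h(d)(P+1)$, with $h(d)=d/k^*(d)$.
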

\begin{proof}
    The proof is postponed to Sec. \ref{ProofTh3}.
\end{proof}
\begin{theorem}\label{theorem2}
    The samples of the PBL signal, i.e., $\mathbf{y}$, are uniquely identifiable up to an integer constant vector from modulo samples $\mathbf{z}$ if and only if $H(N) \geq P+1$.
\end{theorem}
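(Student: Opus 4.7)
The plan is to derive Theorem \ref{theorem2} as an immediate consequence of the two results that precede it. Specifically, I would invoke Theorem \ref{theorem3} to replace the identifiability claim by the equivalent combinatorial condition that for every $d \in \mathcal{D} \setminus \{1\}$ there exists some $n \in \mathcal{P}$ for which ${\rm e}^{{\rm j} 2\pi n / N}$ is a root of the $d$th cyclotomic polynomial $\Phi_{d}(x)$; then I would apply Lemma \ref{PBSlemma3} to rewrite that condition as the single numerical inequality $H(N) \geq P+1$. Stringing these two equivalences together yields the ``if and only if'' statement of Theorem \ref{theorem2}, so no fresh argument is required at this stage.

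Because the theorem reduces to a one-line chaining of previous results, the only genuine difficulty has already been absorbed into Lemma \ref{PBSlemma3}. If one wished to unpack that lemma rather than cite it, the main obstacle would be characterising, for each divisor $d > 1$ of $N$, the set of exponents $n \in \mathcal{N}$ for which ${\rm e}^{{\rm j} 2\pi n / N}$ is a primitive $d$th root of unity (these are the $n = (N/d)k$ with $1 \le k < d$ and ${\rm gcd}(k,d) = 1$), identifying the smallest such $n$, and then asking when this smallest value still lies inside $\mathcal{P} = \{P+1, \ldots, N-P-1\}$. The smallest valid $k$ varies with the parity and $2$-adic structure of $d$ (it is $1$ for odd $d$, $1$ for $d=2$, $3$ for $d$ even but not divisible by $4$, and so on), which is exactly what produces the four cases in the definition of $h(d)$; taking the maximum over all $d \in \mathcal{D} \setminus \{1\}$ converts the per-$d$ existence requirement into the uniform threshold $H(N) \ge P+1$. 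Once Lemma \ref{PBSlemma3} is in hand, however, Theorem \ref{theorem2} follows instantly and its proof amounts to citing Theorem \ref{theorem3} and Lemma \ref{PBSlemma3} in sequence.
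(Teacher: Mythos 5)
Your proposal is correct and matches the paper exactly: the paper derives Theorem \ref{theorem2} precisely by chaining Theorem \ref{theorem3} with Lemma \ref{PBSlemma3}, with no additional argument. (Your optional aside about unpacking Lemma \ref{PBSlemma3} is slightly imprecise --- the relevant question is whether some primitive $d$th root exponent $n=(N/d)k$ lands near $N/2$, i.e.\ one seeks $k$ coprime to $d$ closest to $d/2$, not the smallest $n$ --- but since you cite the lemma rather than reprove it, this does not affect the proof.)
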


According to Theorem \ref{theorem2}, the identifiability of model (\ref{periodicModel}) depends on $H(N)$ which is a function of $N$ and its factors. Directly calculating $H(N)$ is intuitively difficult. In the following, some specific examples are introduced.

\textbf{Case \RomanNumeralCaps{1}:} Similar to Sec. \ref{SecIdenti}, we first consider the case that $N = 2^M$ where $M\in\mathbb{Z}$ and $M\geq 2$. In this case, $\mathcal{D} = \{1,2,2^2,\cdots,2^M\}$. Thus, $\operatorname{max}_{d\in\mathcal{D}\setminus\{1\}}h(d) = 4$ and $H(N) = N/4 = 2^{M-2}$. Therefore, the sufficient and necessary condition is $2^{M-2}\geq P+1$, and the conclusion is summarized in Corollary \ref{PBScorollary1}.
\begin{corollary}\label{PBScorollary1}
When $N = 2^M$ where $M\in\mathbb{Z}$ and $M\geq 2$, then $\mathbf{y}$ are uniquely identifiable up to an integer constant vector from modulo samples $\mathbf{z}$ if and only if $2^{M-2} = N/4 \geq P+1$.
\end{corollary}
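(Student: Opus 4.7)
The plan is to apply Theorem \ref{theorem2} directly and reduce Corollary \ref{PBScorollary1} to a finite computation of $H(N)$ when $N=2^M$. By Theorem \ref{theorem2}, it suffices to show that $H(2^M)=2^{M-2}$, since the stated identifiability condition $2^{M-2}\geq P+1$ then follows immediately from $H(N)\geq P+1$.

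First I would identify the divisor set. Since $N=2^M$, every positive divisor of $N$ is itself a power of $2$, so
\begin{align}
\mathcal{D}\setminus\{1\}=\{2,\,4,\,8,\,\ldots,\,2^M\}.
\end{align}
Next I would evaluate $h(d)$ on this set using the piecewise definition in \eqref{definehn}. The value $d=2$ falls into Case $2$, giving $h(2)=2$. Every $d=2^m$ with $m\geq 2$ satisfies $4\mid d$, so Case $4$ applies and $h(2^m)=2+\tfrac{4}{2^m-2}$. Cases $1$ and $3$ never occur here because no divisor of $N$ is odd or is $\equiv 2\pmod 4$ with $d\neq 2$.

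Then I would locate the maximum of $h$ over $\mathcal{D}\setminus\{1\}$. On the Case $4$ branch, the function $2+\tfrac{4}{2^m-2}$ is strictly decreasing in $m$ for $m\geq 2$, so its largest value on $\{2^2,2^3,\ldots,2^M\}$ is attained at $d=4$, yielding $h(4)=4$. Comparing with $h(2)=2$ gives $\max_{d\in\mathcal{D}\setminus\{1\}}h(d)=h(4)=4$. Plugging into the definition \eqref{defineHN} yields
\begin{align}
H(N)=\frac{N}{4}=2^{M-2},
\end{align}
and this is where the assumption $M\geq 2$ is used, so that $4\in\mathcal{D}$ and the expression $h(4)$ is meaningful. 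Finally, invoking Theorem \ref{theorem2} shows that $\mathbf{y}$ is uniquely identifiable up to an integer constant vector from $\mathbf{z}$ if and only if $2^{M-2}\geq P+1$, which is exactly the claim of Corollary \ref{PBScorollary1}.

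The argument is essentially a bookkeeping exercise on top of Theorem \ref{theorem2} and Lemma \ref{PBSlemma3}; the only point that requires a moment of care is verifying that $d=4$, rather than some larger power of $2$, realizes the maximum of $h$, which I would justify by the monotonicity of $2+\tfrac{4}{2^m-2}$ in $m$. No other obstacle is expected.
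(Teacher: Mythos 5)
Your proposal is correct and follows essentially the same route as the paper: the paper's Case I discussion preceding the corollary likewise notes $\mathcal{D}=\{1,2,2^2,\cdots,2^M\}$, computes $\max_{d\in\mathcal{D}\setminus\{1\}}h(d)=4$ (attained at $d=4$), concludes $H(N)=N/4=2^{M-2}$, and invokes Theorem \ref{theorem2}. Your only addition is the explicit monotonicity check on the Case $4$ branch, which the paper leaves implicit.
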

    
\textbf{Case \RomanNumeralCaps{2}:} We now consider the case where $N$ is an odd number. Let $a$ be the smallest factor of $N$ other than $1$. In this case, all the elements in $\mathcal{D}$ are odd numbers. Thus, $\operatorname{max}_{d\in\mathcal{D}\setminus\{1\}}h(d) = 2+\frac{2}{a-1}$ and $H(N) = N/(2+\frac{2}{a-1}) = \frac{N(a-1)}{2a}$. Therefore, the sufficient and necessary condition is $\frac{N(a-1)}{2a}\geq P+1$. The result is summarized in Corollary \ref{PBScorollary2}. 
\begin{corollary}\label{PBScorollary2}
When $N$ is an odd number, and $a$ is the smallest factor of $N$ other than $1$, then $\mathbf{y}$ are uniquely identifiable up to an integer constant vector from modulo samples $\mathbf{z}$ if and only if $\frac{N(a-1)}{2a}\geq P+1$.
\end{corollary}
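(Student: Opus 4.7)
The plan is to invoke Theorem \ref{theorem2} and then explicitly compute $H(N)$ for odd $N$, using the monotonicity of $h$ on the relevant divisors. By Theorem \ref{theorem2}, unique identifiability (up to an integer constant vector) is equivalent to $H(N) \geq P+1$, where $H(N) = N / \max_{d \in \mathcal{D}\setminus\{1\}} h(d)$ and $h(d)$ is defined by the four cases in (\ref{definehn}). So the whole task is to evaluate the maximum $\max_{d \in \mathcal{D}\setminus\{1\}} h(d)$ when $N$ is odd.

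First I would observe that since $N$ is odd, every positive divisor of $N$ is odd, so for every $d \in \mathcal{D}\setminus\{1\}$ the integer $d$ is odd and satisfies $2\nmid d$. Consequently only Case $1$ of (\ref{definehn}) applies, giving $h(d) = 2 + \frac{2}{d-1}$ uniformly for all $d \in \mathcal{D}\setminus\{1\}$. Cases 2–4 never arise, which eliminates the piecewise nature of $h$ on this set of divisors.

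Next I would note the elementary monotonicity fact: on integers $d \geq 2$, the function $d \mapsto 2 + \frac{2}{d-1}$ is strictly decreasing, since $\frac{2}{d-1}$ strictly decreases as $d$ grows. Therefore the maximum of $h$ over $\mathcal{D}\setminus\{1\}$ is attained at the smallest element of $\mathcal{D}\setminus\{1\}$, which is exactly $a$ by definition. Hence
\begin{align*}
\max_{d\in\mathcal{D}\setminus\{1\}} h(d) \;=\; h(a) \;=\; 2 + \frac{2}{a-1} \;=\; \frac{2a}{a-1}.
\end{align*}
Substituting into the definition (\ref{defineHN}) yields $H(N) = \dfrac{N(a-1)}{2a}$, and Theorem \ref{theorem2} then turns the identifiability criterion $H(N)\geq P+1$ into $\dfrac{N(a-1)}{2a}\geq P+1$, which is the claim.

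The proof is essentially a direct computation once Theorem \ref{theorem2} is in hand, and there is no serious obstacle; the only thing that needs to be explicitly checked is the monotonicity argument guaranteeing that the maximum of $h$ over the divisors is realized at the smallest nontrivial divisor $a$. The key enabling observation is that the odd-$N$ hypothesis collapses the four-case definition of $h$ to a single explicit formula on $\mathcal{D}\setminus\{1\}$, which is what makes the resulting criterion so clean.
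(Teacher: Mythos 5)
Your proposal is correct and follows essentially the same route as the paper: both reduce the claim to Theorem \ref{theorem2}, observe that all nontrivial divisors of an odd $N$ fall into Case $1$ of (\ref{definehn}), and use the monotonic decrease of $h(d)=2+\frac{2}{d-1}$ to place the maximum at the smallest divisor $a$, yielding $H(N)=\frac{N(a-1)}{2a}$. Your write-up merely makes the monotonicity step more explicit than the paper's terse computation.
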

In \cite{mulleti2024modulo}, the case that $N$ is a prime number is discussed, which can be viewed as a special case of Case \RomanNumeralCaps{2} when $N\neq 2$. In this case, the necessary and sufficient condition is $N\geq 2P+3$ since $N$ only has two factors $1$ and $N$. Therefore, when $P$ is known, setting the number of measurements in one period as a prime number greater than $2P+1$, the PBL signal under MS (\ref{PBSmoduloModel}) is identifiable up to an integer constant. However, when $P$ becomes large, finding a prime number larger than $2P+1$ becomes increasingly difficult as the primes become farther apart. In this paper, according to Corollary \ref{PBScorollary2}, a more general result is obtained. In detail, given a prime number $a > 2$, when the number of measurements $N \geq \frac{2a}{a-1}(P+1)$ and the smallest factor of $N$ (other than $1$) is $a$, then the PBL signal under MS (\ref{PBSmoduloModel}) is identifiable up to an integer constant. This conclusion provides a trade-off in determining the number of samples $N$. On one hand, when $a$ is small, it is easy to find an integer with the minimum factor (other than $1$) greater than or equal to $a$, while the lower bound of the number of samples, i.e., $\frac{2a}{a-1}(P+1)$ is relatively large. On the other hand, when $a$ is large, the lower bound of the number of samples, i.e., $\frac{2a}{a-1}(P+1)$ is relatively small, while finding an integer with the minimum factor (other than $1$) greater than or equal to $a$ is difficult. 

In the following, we consider a general conclusion on the identifiability of PBL signals with respect to the oversampling factor. Given a PBL signal defined in (\ref{periodicModel1}), the Nyquist frequency is $f_{\rm Nyq}=2Pf_0$. Thus, the oversampling factor is $\gamma = \frac{f_s}{f_{\rm Nyq}} = \frac{N}{2P}$, where $f_s = Nf_0$ is the sampling rate. Therefore, the necessary and sufficient condition in Theorem \ref{theorem2}, that is, $H(N)\geq P+1$, is equivalent to $\gamma \geq \frac{1}{2}\operatorname{max}_{d\in\mathcal{D}\setminus\{1\}}h(d)(1+\frac{1}{P})$. We first consider the maximum value of function $h(d)$ where $d\in\mathbb{Z}$ and $d\geq 2$. Note that $h(d)$ decreases monotonically with $d$ in Case $1$, Case $3$, and Case $4$. We have $\max_{d\in\mathbb{Z}, d\geq 2}h(d) = \max\{h(3),h(2),h(6),h(4)\} = \max\{3,2,6,4\} = 6$. In summary, the maximum value of the function $h(d)$ is $6$ at $d=6$. Thus, we have $\frac{N}{6}\geq H(N)$ for any $N$ where $N\geq 4$. Additionally, with Theorem \ref{theorem2} and the inequality $\frac{N}{6}\geq H(N)$, we can deduce that if $N \geq 6(P+1)$, the samples of the PBL signal, i.e., $\mathbf{y}$, are uniquely identifiable up to an integer constant vector from modulo samples $\mathbf{z}$. Given $\gamma =\frac{N}{2P}$, it can be further deduced that when $\gamma\geq3(1+\frac{1}{P})$, then $\mathbf{y}$ are uniquely identifiable up to an integer constant vector from modulo samples $\mathbf{z}$, and the result is shown in Corollary \ref{PBScorollary3}.
\begin{corollary}\label{PBScorollary3}
    If $\gamma\geq3(1+\frac{1}{P})$, then the samples of the PBL signal, i.e., $\mathbf{y}$, are uniquely identifiable up to an integer constant vector from modulo samples $\mathbf{z}$. 
\end{corollary}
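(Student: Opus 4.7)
The plan is to derive Corollary~\ref{PBScorollary3} as a direct sufficient-condition corollary of Theorem~\ref{theorem2}, using a case-by-case upper bound on the function $h(d)$ defined in (\ref{definehn}). Theorem~\ref{theorem2} states that unique identifiability (up to an integer constant vector) is equivalent to $H(N) \geq P+1$, where $H(N) = N/\max_{d \in \mathcal{D}\setminus\{1\}} h(d)$. Since $\mathcal{D}\setminus\{1\} \subseteq \{d \in \mathbb{Z} : d \geq 2\}$, it suffices to upper bound $h(d)$ uniformly over all integers $d \geq 2$, because weakening the maximum only strengthens the sufficient condition.

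I will analyze $h(d)$ in its four defining cases. In Case~1 ($d$ odd, $d\geq 3$) the expression $2+\tfrac{2}{d-1}$ is decreasing in $d$, so its maximum is attained at $d=3$ with $h(3)=3$. Case~2 gives $h(2)=2$. In Case~3 ($d$ even, $d\neq 2$, $4\nmid d$), the smallest admissible $d$ is $6$, and $2+\tfrac{8}{d-4}$ is decreasing for $d\geq 6$; hence the maximum is $h(6)=6$. In Case~4 ($4\mid d$), the smallest admissible $d$ is $4$, and $2+\tfrac{4}{d-2}$ is decreasing for $d\geq 4$; hence the maximum is $h(4)=4$. Taking the overall maximum gives $\max_{d \geq 2} h(d) = 6$, attained at $d=6$.

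Consequently $\max_{d\in\mathcal{D}\setminus\{1\}} h(d) \leq 6$ for every $N$, so $H(N) \geq N/6$. The hypothesis $\gamma \geq 3(1+1/P)$ rewrites, via $\gamma = N/(2P)$, as $N \geq 6(P+1)$, which yields $H(N) \geq N/6 \geq P+1$. An application of Theorem~\ref{theorem2} then concludes that $\mathbf{y}$ is uniquely identifiable up to an integer constant vector from $\mathbf{z}$.

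The argument contains no real obstacle; the only subtlety is recognizing that $h$ must be maximized over all integers $d\geq 2$ rather than over the specific divisor set $\mathcal{D}\setminus\{1\}$ (which depends on $N$), so that the resulting bound $H(N)\geq N/6$ is uniform in $N$ and can be translated cleanly into the oversampling threshold $3(1+1/P)$. Correspondingly, the bound is tight precisely when $6\in\mathcal{D}$, i.e.\ when $6\mid N$; for other $N$ the sufficient condition can be sharpened by the finer analysis encapsulated in Theorem~\ref{theorem2}.
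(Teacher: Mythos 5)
Your proof is correct and follows essentially the same route as the paper: bound $h(d)$ over all integers $d\geq 2$ by its maximum value $6$ (attained at $d=6$), deduce $H(N)\geq N/6$, translate $\gamma\geq 3(1+1/P)$ into $N\geq 6(P+1)$, and apply Theorem~\ref{theorem2}. You even state the inequality $H(N)\geq N/6$ with the correct orientation, whereas the paper's text writes it reversed as $\frac{N}{6}\geq H(N)$, which is evidently a typo given how it is then used.
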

\subsection{Simulation}
In this subsection, we present the probabilities of successful recovery of PBL signals from modulo samples over $300$ Monte Carlo trials. Similar to the method utilized in Sec. \ref{Simulation1}, we also use the professional optimization software Gurobi to solve the integer linear equations (\ref{equationsolver3}) and estimate the unfolded samples $\mathbf{y}$. Note that for PBL signals, $\mathbf{z}$ and $\boldsymbol{\epsilon}$ are integer vectors instead of Gaussian integer vectors. Both the real and imaginary parts of the nonzero elements of $c_p, p = 1, 2, \cdots, P$, are obtained from a uniform distribution over the interval $[-1, 1]$, $c_0=0$, and the dynamic range of the ADC is $[-0.5, 0.5]$. For the recovery algorithm, each element of $\boldsymbol{\epsilon}$ are constrained to $\{-7,-6,\cdots,6,7\}$. The results for different $P$ and $N$ are shown in Fig. \ref{SimuPBL}. Fig. \ref{SimuPBL}(a) shows the theoretical results of Theorem \ref{theorem2}. The black area includes all cases where $H(N) < P + 1$, indicating that the model (\ref{periodicModel}) is unidentifiable. In contrast, the white area includes all cases where $H(N) \geq P + 1$, indicating that the model (\ref{periodicModel}) is identifiable. The critical red line represents the boundary between these two regions. The probabilities of successful recovery by the algorithm are shown in Fig. \ref{SimuPBL}(b). For cases where $H(N) < P + 1$, the probabilities are almost zero, whereas for cases where $H(N) \geq P + 1$, the probabilities are equal to $1$. These results validate Theorem \ref{theorem2}. The blue line denotes the critical line of the oversampling factor that ensures the identifiability of the model for any $N$, as introduced in Corollary \ref{PBScorollary3}. It can be seen in Fig. \ref{SimuPBL}(b) that when the oversampling factor is larger than $3\left(1+\frac{1}{P}\right)$, the probabilities of successful recovery are equal to $1$, which validates our conclusion.

\begin{figure}[htb!]
    \centering
    \includegraphics[width=0.45\textwidth]{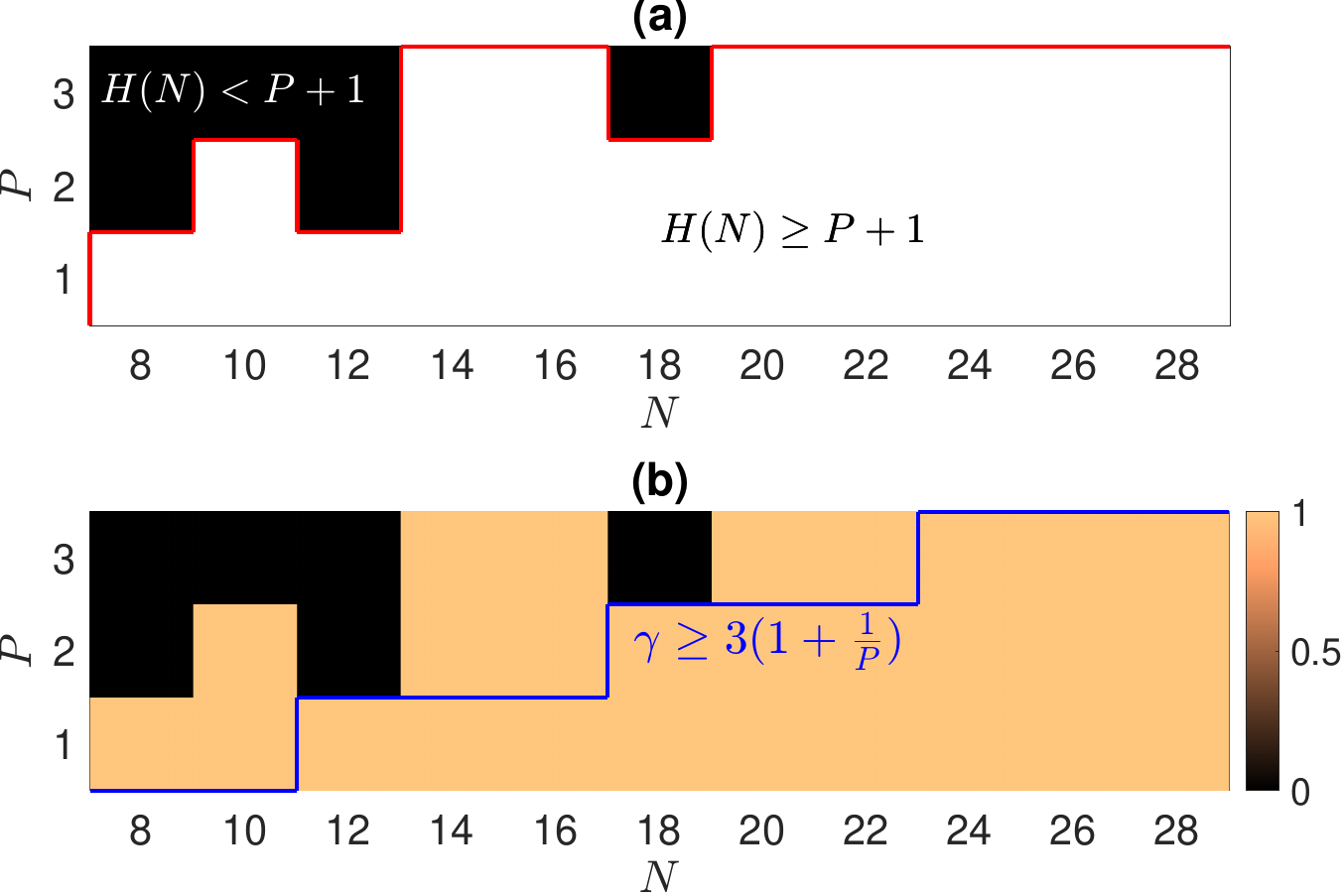}
    \caption{(a) The black part is the region that $H(N)<P+1$, indicating that the model (\ref{periodicModel}) is unidentifiable, and the white part is the region that $H(N)\geq P+1$, indicating that the model (\ref{periodicModel}) is identifiable. (b) The probabilities of successful recovery of PBL signals from modulo samples over $300$ Monte Carlo trials for different $P$ and $N$ of the proposed recovery algorithm.}
    \label{SimuPBL}
\end{figure}

\subsection{Proof of Theorem \ref{theorem3}}\label{ProofPBSTh3}
First, we transform the proposition that the PBL signal under MS (\ref{periodicModel}) is identifiable up to a constant factor into a proposition related to polynomials with integer coefficients, which is introduced in Proposition \ref{EquiCondiPBS}. In addition, to make the proof of Theorem \ref{theorem3} more brief, we prove Lemma \ref{PBSlemma1} in advance. Let $f(x)$ be a polynomial with rational coefficients of degree $N$ and let $\mathcal{R}_f$ be the set that contains all roots of $f(x)$. Lemma \ref{PBSlemma1} shows that $f(x) = A(x^N-1)$ holds for all $x\in\mathbb{C}$ if and only if $\forall d\in\mathcal{D}, \exists x\in\mathcal{R}_f$ such that $\Phi_d(x) = 0$, where $A$ is any nonzero number in $\mathbb{Q}$. Lemma \ref{PBSlemma1} is similar to Lemma \ref{lemmaT2_1}, and the only difference is that Lemma \ref{PBSlemma1} considers the rational number field $\mathbb{Q}$ while Lemma \ref{lemmaT2_1} considers the Gaussian rational number field $\mathbb{Q}[{\rm j}]$.
\begin{proposition}\label{EquiCondiPBS}
The samples of the PBL signal, i.e., $\mathbf{y}$, are uniquely identifiable up to an integer constant vector from modulo samples $\mathbf{z}$ if and only if there is no integer polynomial $f(x)$ of maximum degree $N-1$ not in the set $\{A\sum_{n=0}^{N-1}x^n\mid A\in\mathbb{Z}\}$ such that $f({\rm e}^{{\rm j}\frac{2\pi n}{N}}) = 0$ for all $n\in\mathcal{P}$.
\end{proposition}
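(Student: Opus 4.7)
The plan is to imitate the proof of Proposition \ref{EquiCondi}, using the same bijection $\bar{\boldsymbol{\epsilon}}\leftrightarrow f(x) = \sum_{k=0}^{N-1}\bar{\epsilon}_{k}x^{k}$ between length-$N$ coefficient vectors and polynomials of degree at most $N-1$, together with the identity $f({\rm e}^{{\rm j}\frac{2\pi n}{N}}) = \sqrt{N}(\mathbf{F}^{\rm H}\bar{\boldsymbol{\epsilon}})_{n}$. Two differences from the setup of Proposition \ref{EquiCondi} must be tracked: coefficients are now in $\mathbb{Z}$ rather than $\mathbb{Z}[{\rm j}]$ because the modulo samples of a real PBL signal are real, and a nontrivial DC ambiguity must be factored out. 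The translation of the DC ambiguity is that $\bar{\boldsymbol{\epsilon}} = A\mathbf{1}$ corresponds to $f(x) = A\sum_{n=0}^{N-1}x^{n}$ and, by a direct computation using that the first column of $\mathbf{F}$ is $\frac{1}{\sqrt{N}}\mathbf{1}$, one finds $\mathbf{F}^{\rm H}(A\mathbf{1}) = A\sqrt{N}\mathbf{e}_{0}$, i.e., a pure shift of the DC coefficient $c_{0}$; so the excluded polynomial set in the statement corresponds exactly to the constant-vector kernel direction. Both directions will be proved by contrapositive.

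For the necessary direction, I would take $f(x)$ as in the statement, set $\bar{\boldsymbol{\epsilon}} = (a_{0},\dots,a_{N-1})^{\rm T}\in\mathbb{Z}^{N}$, and define $\mathbf{s}' \triangleq \mathbf{s} - \mathbf{F}^{\rm H}\bar{\boldsymbol{\epsilon}}$. The vanishing hypothesis gives $\mathbf{F}^{\rm H}_{\mathcal{P}}\bar{\boldsymbol{\epsilon}} = \mathbf{0}$, so entries of $\mathbf{F}^{\rm H}\bar{\boldsymbol{\epsilon}}$ indexed by $\mathcal{P}$ vanish, and the reality of $\bar{\boldsymbol{\epsilon}}$ forces $(\mathbf{F}^{\rm H}\bar{\boldsymbol{\epsilon}})_{n} = \overline{(\mathbf{F}^{\rm H}\bar{\boldsymbol{\epsilon}})_{N-n}}$ together with a real $0$-th entry; combined with the closure of $\mathbb{S}^{N}_{\mathcal{P}}$ under addition, this yields $\mathbf{s}'\in\mathbb{S}^{N}_{\mathcal{P}}$. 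Then $\mathbf{y}' = \mathbf{F}\mathbf{s}' = \mathbf{y} - \bar{\boldsymbol{\epsilon}}$ produces the same modulo samples as $\mathbf{y}$, and because the exclusion on $f(x)$ translates to $\bar{\boldsymbol{\epsilon}}\notin\{A\mathbf{1}\mid A\in\mathbb{Z}\}$, the difference $\mathbf{y} - \mathbf{y}'$ is a non-constant integer vector, breaking identifiability up to an integer constant. Conversely, if identifiability-up-to-constant fails, I would pick $\mathbf{s},\mathbf{s}'\in\mathbb{S}^{N}_{\mathcal{P}}$ producing the same modulo samples with $\bar{\boldsymbol{\epsilon}}\triangleq\mathbf{y} - \mathbf{y}' \in \mathbb{Z}^{N}$ not constant, multiply by $\mathbf{F}^{\rm H}$ to obtain $\mathbf{s} - \mathbf{s}' = \mathbf{F}^{\rm H}\bar{\boldsymbol{\epsilon}}$, restrict to $\mathcal{P}$ (where both $\mathbf{s}$ and $\mathbf{s}'$ vanish) to get $\mathbf{F}^{\rm H}_{\mathcal{P}}\bar{\boldsymbol{\epsilon}} = \mathbf{0}$, and then read off the required polynomial $f(x) = \sum_{k}\bar{\epsilon}_{k}x^{k}$.

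The main bookkeeping obstacle, and essentially the only place the argument departs from that of Proposition \ref{EquiCondi}, is verifying that a real-coefficient perturbation $\mathbf{F}^{\rm H}\bar{\boldsymbol{\epsilon}}$ automatically respects the Hermitian-symmetric, real-DC structure packaged into $\mathbb{S}^{N}_{\mathcal{P}}$, and checking that the excluded polynomial family in the statement lines up exactly with the constant-vector kernel $\{A\mathbf{1}\mid A\in\mathbb{Z}\}$ through the identity $\mathbf{F}^{\rm H}(A\mathbf{1}) = A\sqrt{N}\mathbf{e}_{0}$. Once these two items are in place, the proof is a direct translation of the Gaussian-integer argument used for Proposition \ref{EquiCondi}.
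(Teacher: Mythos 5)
Your proposal is correct and takes essentially the same route as the paper, which itself proves this proposition by declaring it "similar to Proposition \ref{EquiCondi}" and omitting the details; you have simply filled in those details, correctly tracking the two points of departure (real rather than Gaussian integer coefficients, and the identification of the excluded polynomial family $\{A\sum_{n=0}^{N-1}x^n\}$ with the constant-vector kernel $A\mathbf{1}$ via $\mathbf{F}^{\rm H}(A\mathbf{1})=A\sqrt{N}\mathbf{e}_0$). The verification that $\mathbf{F}^{\rm H}\bar{\boldsymbol{\epsilon}}$ preserves the Hermitian-symmetric, real-DC structure of $\mathbb{S}^{N}_{\mathcal{P}}$ is exactly the extra check the paper's terse proof leaves implicit, and you handle it correctly.
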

\begin{proof}
    First, we have that $\mathbf{s}_{1:N-1}$ can be uniquely recovered from $\mathbf{z}$ if and only if there is no integer polynomial $f(x)$ of maximum degree $N-1$ not in $\{A\sum_{n=0}^{N-1}x^n\mid A\in\mathbb{Z}\}$ such that $f({\rm e}^{{\rm j}\frac{2\pi n}{N}}) = 0$ for all $n\in\mathcal{P}$. The proof is similar to that of Proposition \ref{EquiCondi} and is omitted here. Moreover, it can be easily shown that the unique recovery of $\mathbf{s}_{1:N-1}$ from $\mathbf{z}$ is equivalent to unfolded samples $\mathbf{y}$ in (\ref{periodicModel}) being uniquely identifiable from $\mathbf{z}$ up to an integer constant vector.
\end{proof}
\begin{lemma}\label{PBSlemma1}
For an integer coefficient polynomial $f(x)$ with degree $N$, the equation $f(x) = A(x^N-1)$, where $A$ is any nonzero number in $\mathbb{Q}$, holds for all $x\in\mathbb{C}$ if and only if $\forall d\in\mathcal{D}, \exists x\in\mathcal{R}_f$ such that $\Phi_d(x) = 0$.
\end{lemma}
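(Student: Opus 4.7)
The plan is to mirror the proof of Lemma \ref{lemmaT2_1} almost verbatim, substituting the Gaussian rational field $\mathbb{Q}[{\rm j}]$ with the rational field $\mathbb{Q}$ and replacing the factorization (\ref{factorCx}) by the cyclotomic factorization $x^N-1 = \prod_{d\in\mathcal{D}}\Phi_d(x)$ over $\mathbb{Q}$ (Property \ref{FacC}), in which each $\Phi_d(x)$ is irreducible over $\mathbb{Q}$ (Property \ref{irre_poly}). The proof will be a successive polynomial-division argument driven by Proposition \ref{polyReTh} and Proposition \ref{CycPriP}.

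The necessary direction is immediate: if $f(x) = A(x^N-1)$ with $A\in\mathbb{Q}\setminus\{0\}$, then $f(x) = A\prod_{d\in\mathcal{D}}\Phi_d(x)$, so for every $d\in\mathcal{D}$ all roots of $\Phi_d(x)$ automatically lie in $\mathcal{R}_f$, and in particular there exists some $x\in\mathcal{R}_f$ with $\Phi_d(x)=0$.

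For the sufficient direction, enumerate $\mathcal{D} = \{d_1,\ldots,d_L\}$ and, for each $d_\ell$, pick a root $x_{d_\ell}\in\mathcal{R}_f$ with $\Phi_{d_\ell}(x_{d_\ell})=0$; since $x^N-1$ has no repeated roots, these $x_{d_\ell}$ are pairwise distinct and each belongs to the zero set of exactly one $\Phi_{d_j}$. Applying Proposition \ref{polyReTh} to divide $f$ by $\Phi_{d_1}$ in $\mathbb{Q}[x]$ gives $f(x) = \Phi_{d_1}(x)q_1(x) + r_1(x)$ with $\deg(r_1) < \deg(\Phi_{d_1})$; evaluating at $x_{d_1}$ forces $r_1(x_{d_1})=0$, and the irreducibility of $\Phi_{d_1}$ over $\mathbb{Q}$ combined with Proposition \ref{CycPriP} forces $r_1\equiv 0$, so $\Phi_{d_1}\mid f$. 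Now divide $q_1$ by $\Phi_{d_2}$: since $\Phi_{d_1}(x_{d_2})\neq 0$, the identity $0 = f(x_{d_2}) = \Phi_{d_1}(x_{d_2})q_1(x_{d_2})$ forces $q_1(x_{d_2})=0$, and the same remainder-plus-irreducibility argument yields $\Phi_{d_2}\mid q_1$. Iterating this procedure $L$ times I obtain $f(x) = \Phi_{d_1}(x)\cdots\Phi_{d_L}(x)\, q_L(x) = (x^N-1)q_L(x)$; matching degrees ($\deg f = N$) forces $q_L$ to be a nonzero constant $A\in\mathbb{Q}$, and thus $f(x) = A(x^N-1)$, completing the argument.

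The main obstacle is not a deep one: since this lemma is the $\mathbb{Q}$-analogue of Lemma \ref{lemmaT2_1}, the essential content is verifying that the supporting machinery transfers cleanly. Concretely, I need (i) that the cyclotomic factorization over $\mathbb{Q}$ plays the same role as the factorization (\ref{factorCx}) over $\mathbb{Q}[{\rm j}]$, and (ii) that Proposition \ref{CycPriP}, stated for a generic number field $\mathbb{F}$, applies with $\mathbb{F}=\mathbb{Q}$ once irreducibility of each $\Phi_d$ over $\mathbb{Q}$ is invoked. Both are recorded in Appendix \ref{AppenD}, so the proof is essentially a direct translation with no new ideas required.
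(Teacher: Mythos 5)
Your proposal is correct and matches the paper's intent exactly: the paper's own proof of this lemma consists of the single remark that it is obtained from the proof of Lemma \ref{lemmaT2_1} by replacing $\mathbb{Q}[{\rm j}]$ with $\mathbb{Q}$ and the factorization (\ref{factorCx}) with the cyclotomic factorization of Property \ref{FacC}, which is precisely the translation you carry out. No issues.
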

\begin{proof}
    The proof is similar to that of Lemma \ref{lemmaT2_1}. The only difference is that the Gaussian rational number field $\mathbb{Q}[{\rm j}]$ is replaced by the rational number field $\mathbb{Q}$.
\end{proof}

The proof of Theorem \ref{theorem3} is provided below, which is based on Proposition \ref{EquiCondiPBS} and Lemma \ref{PBSlemma1}.

\textbf{\emph{Proof of Theorem \ref{theorem3}:}} 
According to Proposition \ref{EquiCondiPBS}, proving Theorem \ref{theorem3} is equivalent to proving the following statement: \emph{There is no integer polynomial $f(x)$ of maximum degree $N-1$ not in the set $\{A\sum_{n=0}^{N-1}x^n\mid A\in\mathbb{Z}\}$ such that $f({\rm e}^{{\rm j}\frac{2\pi n}{N}}) = 0$ for all $n\in\mathcal{P}$ if and only if $\forall d\in\mathcal{D}\setminus\{1\}, \exists n\in\mathcal{P}$ such that ${\rm e}^{{\rm j}\frac{2\pi n}{N}}$ is a root of $\Phi_{d}(x)$}.

We first prove the necessary part by proving the contrapositive. Let $i$ be an integer belonging to the set $\mathcal{D}\setminus\{1\}$, such that ${\rm e}^{{\rm j}\frac{2\pi n}{N}}$ is not a root of $\Phi_{i}(x)$ for any $n\in\mathcal{P}$. Below, we prove that there exists an integer polynomial $f(x)$ of maximum degree $N-1$ not in the set $\{A\sum_{n=0}^{N-1}x^n\mid A\in\mathbb{Z}\}$ such that $f({\rm e}^{{\rm j}\frac{2\pi n}{N}}) = 0$ for all $n\in\mathcal{P}$. The two cases, where $N$ is a prime number and where $N$ is a composite number, are investigated separately. When $N$ is a prime number, the set of all positive integers that divide $N$ is $\mathcal{D} = \{1, N\}$. Hence, we have $i=N$ and $\Phi_i(x) = \Phi_N(x) = x^{N-1} + x^{N-2} + \cdots + 1$. Since ${\rm e}^{{\rm j} \frac{2\pi n}{N}}, n \in \mathcal{P}$ are not roots of $\Phi_N(x)$, whereas $\{{\rm e}^{{\rm j} \frac{2\pi n}{N}}\}_{n=1}^{N-1}$ are roots of $\Phi_N(x)$, it follows that $\mathcal{P}$ (\ref{Pdefine}) can only be the empty set. Therefore, any integer polynomial of maximum degree $N-1$ satisfies $f({\rm e}^{{\rm j}\frac{2\pi n}{N}}) = 0$ for all $n\in\mathcal{P}$. In the following, we consider the case that $N$ is a composite number. Let $M(x)$ be a polynomial defined as  
\begin{align}\label{defMxPBS}
    M(x) \triangleq \prod_{d\in\mathcal{D},d\neq i}\Phi_{d}(x).
\end{align}
$M(x)$ is a monic integer polynomial because all cyclotomic polynomials are monic integer polynomials according to Property \ref{MonI}. Furthermore, since $i\neq 1$, we have $\Phi_i(x)\neq x-1$. Additionally, based on Property \ref{FacC}, it follows $x^N-1 = \prod_{d\in\mathcal{D}}\Phi_{d}(x) = (x-1)(\sum_{n=0}^{N-1}x^n)$. Therefore, $M(x)\neq \sum_{n=0}^{N-1}x^n$. Consequently, $M(x)$ is an integer polynomial of maximum degree $N-1$ that does not belong to the set $\{A\sum_{n=0}^{N-1}x^n\mid A\in\mathbb{Z}\}$. Below, we prove $M({\rm e}^{{\rm j}\frac{2\pi n}{N}}) = 0$ for all $n\in\mathcal{P}$. Since $x^N-1=\prod_{d\in\mathcal{D}}\Phi_{d}(x) = M(x)\Phi_i(x)$ and $\{{\rm e}^{{\rm j}\frac{2\pi n}{N}}\}_{n=0}^{N-1}$ are roots of $x^N-1=0$, ${\rm e}^{{\rm j}\frac{2\pi n}{N}}$ is a root of either $M(x)$ or $\Phi_i(x)$ for all $n\in\mathcal{P}$. Because ${\rm e}^{{\rm j}\frac{2\pi n}{N}}$ is not a root of $\Phi_{i}(x)$ for all $n\in\mathcal{P}$, ${\rm e}^{{\rm j}\frac{2\pi n}{N}}$ must be a root of $M(x)$ for all $n\in\mathcal{P}$, and the proof is completed.

Now, we prove the sufficient part by contradiction. Assume there exists an integer polynomial $f(x)$ of maximum degree $N-1$ not in $\{A\sum_{n=0}^{N-1}x^n\mid A\in\mathbb{Z}\}$ satisfying $f({\rm e}^{{\rm j}\frac{2\pi n}{N}}) = 0$ for all $n\in\mathcal{P}$. Let $g(x)\triangleq x^{\alpha}(x-1)f(x)$ where $\alpha = N - {\rm deg}(f(x))-1$, which is a polynomial with integer coefficients of degree $N$. Since $f( {\rm e}^{{\rm j} \frac{2\pi n}{N}}) = 0$ for all $n \in \mathcal{P}$, and for every $d \in \mathcal{D} \setminus \{1\}$, there exists $n \in \mathcal{P}$ such that ${\rm e}^{{\rm j} \frac{2\pi n}{N}}$ is a root of $\Phi_d(x)$, we can conclude that for all $d \in \mathcal{D} \setminus \{1\}$, there exists $x \in \mathcal{R}_g$ such that $\Phi_d(x) = 0$. Here, $\mathcal{R}_g$ denotes the set of roots of $g(x)$. Additionally, since $x - 1$ is a factor of $g(x)$, $\exists x \in \mathcal{R}_g$ such that $\Phi_1(x) = 0$. Thus, according to Lemma \ref{PBSlemma1}, we have $g(x) = A(x^N - 1)$, where $A$ is any nonzero integer. Therefore, we have $x^{\alpha} f(x) = \frac{g(x)}{x-1} = \frac{A(x^N - 1)}{x - 1} = A \sum_{n=0}^{N-1} x^n$. This implies that $\alpha = 0$ and $f(x) = A \sum_{n=0}^{N-1} x^n$, which contradicts the assumption.
\subsection{Proof of Lemma \ref{PBSlemma3}}\label{ProofTh3}
According to the definition of cyclotomic polynomials in Definition \ref{CycP}, $\frac{dn}{N}\in\mathbb{Z}$ and ${\rm gcd}(d,\frac{dn}{N})=1$ is equivalent to ${\rm e}^{{\rm j}\frac{2\pi n}{N}}$ is a root of $\Phi_{d}(x)$ as ${\rm e}^{{\rm j}\frac{2\pi n}{N}} = {\rm e}^{{\rm j}\frac{2\pi n d/N}{d}}$. Therefore, proving Lemma \ref{PBSlemma3} is equivalent to proving the following statement: \emph{$\forall d\in\mathcal{D}\setminus\{1\}, \exists n\in\mathcal{P}$ such that $\frac{dn}{N}\in\mathbb{Z}$ and ${\rm gcd}(d,\frac{dn}{N})=1$ if and only if $H(N)\geq P+1$.}

First, we prove the sufficient part. According to the definition of $h(d)$ (\ref{definehn}), we discuss $d$ in four different cases. For Case $1$ where $2\nmid d$, let $n = \frac{(d+1)N}{2d}$. As $d\mid N$ and $d$ is an odd number, $n$ is an integer. In addition, according to the definition of $H(N)$ (\ref{defineHN}), we have $\frac{N}{h(d)}\geq H(N)$. As $H(N)\geq P+1$ and $h(d) = 2+\frac{2}{d-1}$, we have $\frac{(d-1)N}{2d}\geq P+1$. Furthermore, we have $n = \frac{(d+1)N}{2d}\geq \frac{(d-1)N}{2d}\geq P+1$ and $n = \frac{(d+1)N}{2d} = N - \frac{(d-1)N}{2d}\leq N- P-1$. Since set $\mathcal{P}$ contains integers from $P+1$ to $N-P-1$ according to (\ref{Pdefine}), we have $n\in\mathcal{P}$. Finally, the greatest common divisor of $d$ and $\frac{dn}{N}$ is ${\rm gcd}(d,\frac{dn}{N}) = {\rm gcd}(d,\frac{d+1}{2}) = 1$ as ${\rm gcd}(d,d+1) = 1$. Similarly, for Case $2$ where $d=2$, Case $3$ where $2\mid d, d\neq 2\text{ and } 4\nmid d$, and Case $4$ where $4\mid d$, let $n = \frac{N}{2}$, $n = \frac{(d+4)N}{2d}$, and $n = \frac{(d+2)N}{2d}$, respectively. First, it can be easily proved that $n$ are integers for all three cases. Furthermore, according to $H(N)\geq P+1$ and $\frac{N}{h(d)}\geq H(N)$ where $h(d)=2$ in Case $2$, $h(d) = 2+\frac{8}{d-4}$ in Case $3$, and $h(d)=2+\frac{4}{d-2}$ in Case $4$, we have $\frac{N}{2}\geq P+1$ in Case $2$, $N/(2+\frac{8}{d-4})\geq P+1$ in Case $3$, and $N/(2+\frac{4}{d-2})\geq P+1$ in Case $4$. Thus, it can be verified that $n\in\mathcal{P}$ for $n = \frac{N}{2}$ in Case $2$, $n = \frac{(d+4)N}{2d}$ in Case $3$, and $n = \frac{(d+2)N}{2d}$ in Case $4$. Finally, we have ${\rm gcd}(d,\frac{dn}{N}) = {\rm gcd}(2,1) = 1$ for Case $2$ in which $d=2$ and $n=\frac{N}{2}$. For Case $3$ in which $2\mid d$, $d\neq 2$, $4\nmid d$ and $n = \frac{(d+4)N}{2d}$, we have ${\rm gcd}(d,\frac{dn}{N}) = {\rm gcd}(d, \frac{d+4}{2}) = 1$ as ${\rm gcd}(d, d+4) = 2$. For Case $4$ in which $4\mid d$ and $n = \frac{(d+2)N}{2d}$, we have ${\rm gcd}(d,\frac{dn}{N}) = {\rm gcd}(d, \frac{d+2}{2}) = 1$ as ${\rm gcd}(d,d+2) = 2$.

We now prove the necessary part by proving the contrapositive, i.e., if $H(N)<P+1$, then $\exists d_1\in\mathcal{D}\setminus\{1\}$, there does not exist an $n\in\mathcal{P}$ satisfying $\frac{d_1n}{N}\in\mathbb{Z}$ and ${\rm gcd}(d_1,\frac{d_1n}{N}) = 1$. According to the definition of $H(N)$ (\ref{defineHN}), $H(N)<P+1$ is equivalent to $\exists d_2\in\mathcal{D}\setminus\{1\}, \frac{N}{h(d_2)}< P+1$. Therefore, we need to prove that $\exists d_2\in\mathcal{D}\setminus\{1\}$ such that $\frac{N}{h(d_2)}< P+1$, then $\exists d_1\in\mathcal{D}\setminus\{1\}$, there does not exist an $n\in\mathcal{P}$ satisfying $\frac{d_1n}{N}\in\mathbb{Z}$ and ${\rm gcd}(d_1,\frac{d_1n}{N}) = 1$. We discuss $d_2$ in four different cases according to the definition of $h(d)$ (\ref{definehn}). For Case $1$ in which $2\nmid d_2$, $h(d_2) = 2+\frac{2}{d_2-1}$. Then, $\frac{N}{h(d_2)}< P+1$ is equivalent to $\frac{d_2(P+1)}{N}>\frac{d_2-1}{2}$. According to the inequality $\frac{d_2(P+1)}{N}>\frac{d_2-1}{2}$ which is also equivalent to $\frac{d_2(N-P-1)}{N} < \frac{d_2+1}{2}$ and the definition of $\mathcal{P}$, i.e., $ \mathcal{P} = \{P+1,P+2,\cdots,N-P-1\}$ (\ref{Pdefine}), we have $\frac{d_2-1}{2}<\frac{d_2n}{N}<\frac{d_2+1}{2}$ for all $n\in\mathcal{P}$. Since $d_2$ is an odd number, there is no integer in the range $(\frac{d_2-1}{2}, \frac{d_2+1}{2})$. Let $d_1 = d_2$, we have that there does not exist an $n\in\mathcal{P}$ satisfying $\frac{d_1n}{N}\in\mathbb{Z}$. For Case $2$ in which $d_2 = 2$, we have $h(d_2) = 2$. Then, $\frac{N}{h(d_2)}< P+1$ is equivalent to $\frac{2(P+1)}{N}>1$ (or equivalently, $\frac{2(N -P - 1)}{N} < 1$). Similar to Case $1$, it can be derived that $1  < \frac{2(P+1)}{N} < \frac{d_2n}{N} = \frac{2 n}{N} < \frac{2(N - P - 1)}{N} < 1$ for all $n\in\mathcal{P}$. Let $d_1 = d_2$, we also have that there does not exist an $n\in\mathcal{P}$ satisfying $\frac{d_1n}{N}\in\mathbb{Z}$. For Case $3$ where $2\mid d_2$, $d_2\neq 2$, $4\nmid d_2$ and $h(d_2) = 2+\frac{8}{d_2-4}$, $\frac{N}{h(d_2)}< P+1$ is equivalent to $\frac{d_2(P+1)}{N}>\frac{d_2-4}{2}$ (or equivalently, $\frac{d_2(N - P - 1)}{N} < \frac{d_2+4}{2}$). Similarly, we have $\frac{d_2-4}{2}<\frac{d_2n}{N}<\frac{d_2+4}{2}$ for all $n\in\mathcal{P}$. Let $d_1 = d_2$ and there are three integers in the range $(\frac{d_1-4}{2},\frac{d_1+4}{2})$ which are $\frac{d_1}{2}-1,\frac{d_1}{2},\frac{d_1}{2}+1$. As $\frac{d_1}{2}-1$ and $\frac{d_1}{2}+1$ are even integers, therefore $\frac{d_1}{2}-1,\frac{d_1}{2}+1$ are not coprime to $d_1$. In addition, since $d_1 > 2$, $\frac{d_1}{2}$ is not coprime to $d_1$. Therefore, there does not exist an $n\in\mathcal{P}$ satisfying $\frac{d_1n}{N}\in\mathbb{Z}$ and ${\rm gcd}(d_1,\frac{d_1n}{N}) =  1$. For Case $4$ where $4 \mid d_2$ and $h(d_2) = 2+\frac{4}{d_2-2}$, $\frac{N}{h(d_2)} < P+1$ is equivalent to $\frac{d_2(P+1)}{N}>\frac{d_2-2}{2}$ (or equivalently, $\frac{d_2(N - P - 1)}{N} < \frac{d_2+2}{2}$). Therefore, we have $\frac{d_2-2}{2}<\frac{d_2n}{N}<\frac{d_2+2}{2}$ for all $n\in\mathcal{P}$. Let $d_1 = d_2$, $\frac{d_1}{2}$ is the only integer in the range $(\frac{d_1-2}{2},\frac{d_1+2}{2})$ which is also an even number since $4\mid d_1$. Thus, $\frac{d_1}{2}$ is not coprime to $d_1$. Therefore, there does not exist an $n\in\mathcal{P}$ satisfying $\frac{d_1n}{N}\in\mathbb{Z}$ and ${\rm gcd}(d_1,\frac{d_1n}{N}) = 1$.

\section{Conclusion}
This paper investigates the identifiability of the modulo-DFT sensing model. We present a necessary and sufficient condition for the unique recovery of the original signal from modulo measurements. Furthermore, we demonstrate that when any original signal has at least two elements (including the first one) being zero and the number of measurements is a prime number, the modulo-DFT sensing model is identifiable. Additionally, we examine the identifiability of PBL signals. We introduce the necessary and sufficient condition regarding the number of measurements for the unique identification of the PBL signals. Moreover, we establish that the PBL signal can be uniquely identified when the oversampling rate exceeds $3(1+1/P)$ from the modulo samples, where $P$ is the number of harmonics including the fundamental component in the positive frequency part. Finally, we propose an algorithm to estimate the original signal by solving integer linear equations, and we perform simulations to verify our conclusions.

\section{Appendix}
\subsection{Proof of Property \ref{LemGauRat}}\label{AppenA}
Before proving Property \ref{LemGauRat}, several definitions and lemmas are introduced first. First, Definition \ref{DefGauPri} introduces the definition of Gaussian primes. Similarly to prime numbers, a Gaussian prime is a Gaussian integer that is irreducible over $\mathbb{Z}[{\rm j}]$. The units of $\mathbb{Z}[{\rm j}]$ are precisely the Gaussian integers with norm $1$, that are, $1$, $-1$, $-{\rm j}$ and ${\rm j}$. In addition, given a Gaussian prime $p$, the prime ideal is defined in Definition \ref{DefPriIde}, which is a subset of $\mathbb{Z}[{\rm j}]$ that contains all multiples of $p$. For a prime number $p>2$, Lemma \ref{GaussPri} reveals that if $p$ is congruent to $3$ modulo 4, then $p$ is also a Gaussian prime; otherwise, $p$ is the product of a Gaussian prime by its conjugate.
Furthermore, for a prime number $p$, Lemma \ref{BinoPri} shows that $p \mid \binom{p}{k},k=1,2,\cdots,p-1$, where $\binom{p}{k}$ are binomial coefficients.

\begin{definition}(Gaussian prime)\label{DefGauPri}
    Let $p\in\mathbb{Z}[{\rm j}]$ be a Gaussian integer. $p$ is a Gaussian prime if and only if it is irreducible over $\mathbb{Z}[{\rm j}]$, that is, it is not the product of two non-units in $\mathbb{Z}[{\rm j}]$.
\end{definition}

\begin{definition}\cite{WikiPrId}(Prime ideal)\label{DefPriIde}
    Let $p\in\mathbb{Z}[{\rm j}]$ be a Gaussian prime. The prime ideal for $p$ denoted as $\mathbb{P}$ is $\mathbb{P}\triangleq\{pi \mid i\in\mathbb{Z}[{\rm j}]\}$, which is the set that contains all multiples of $p$.
\end{definition}

\begin{lemma}\cite{NeilDonaldson}\label{GaussPri}
    Let $p\in\mathbb{Z}$ be a prime number and $p>2$. If $p$ is congruent to $3$ modulo $4$, then $p$ is a Gaussian prime. If $p$ is congruent to $1$ modulo $4$, then $p$ is the product of a Gaussian prime by its conjugate.
\end{lemma}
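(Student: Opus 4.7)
The plan is to leverage the multiplicative norm $N(a+b{\rm j}) = a^2+b^2$ on $\mathbb{Z}[{\rm j}]$, under which the only units are the elements of norm $1$ (that is, $\pm 1$ and $\pm{\rm j}$). Since the norm is multiplicative, any proper factorization $p=\alpha\beta$ in $\mathbb{Z}[{\rm j}]$ with neither factor a unit must satisfy $N(\alpha)N(\beta) = p^2$ with both norms strictly greater than $1$, forcing $N(\alpha)=N(\beta)=p$. Each case therefore reduces to deciding whether $a^2+b^2 = p$ admits integer solutions.

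\textbf{Case $p\equiv 3\pmod 4$:} I would note that every integer square is congruent to $0$ or $1$ modulo $4$, so $a^2+b^2\pmod 4\in\{0,1,2\}$. Consequently no integer solution to $a^2+b^2=p$ exists, no proper factorization of $p$ in $\mathbb{Z}[{\rm j}]$ is possible, and $p$ is a Gaussian prime per Definition~\ref{DefGauPri}.

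\textbf{Case $p\equiv 1\pmod 4$:} The critical step is to exhibit an integer $x$ with $x^2\equiv -1\pmod p$. Granting this, one obtains $p\mid x^2+1 = (x+{\rm j})(x-{\rm j})$ in $\mathbb{Z}[{\rm j}]$. If $p$ were a Gaussian prime, it would have to divide one of the two factors on the right, but this is impossible since the imaginary parts $\pm 1$ are not divisible by $p$. Hence $p$ is reducible, and by the norm argument above, any proper factorization has the form $p = \alpha\beta$ with $N(\alpha)=N(\beta)=p$. Writing $\alpha = a+b{\rm j}$ then gives $p = a^2+b^2 = \alpha\bar\alpha$, so $p$ is the product of $\alpha$ with its conjugate. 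Finally, $\alpha$ is itself irreducible because $N(\alpha)=p$ is a rational prime, so any nontrivial factorization of $\alpha$ would force one of the factors to have norm $1$, contradicting non-unit-ness.

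The main obstacle is establishing a square root of $-1$ modulo $p$ when $p\equiv 1\pmod 4$. The cleanest route I would take is via the cyclic structure of $(\mathbb{Z}/p\mathbb{Z})^{\times}$: since $4\mid p-1$, this group contains an element $g$ of order $4$, and $g^2$ must then be the unique element of order $2$, namely $-1$; equivalently, Euler's criterion yields $(-1)^{(p-1)/2}=1$. An alternative route exploits the previously introduced Lemma~\ref{BinoPri} together with the Frobenius identity in characteristic $p$ to analyze $\mathbb{Z}[{\rm j}]/(p)\cong \mathbb{F}_p[x]/(x^2+1)$, which fails to be a field precisely when $x^2+1$ splits over $\mathbb{F}_p$, recovering the same dichotomy through the prime-ideal perspective of Definition~\ref{DefPriIde}.
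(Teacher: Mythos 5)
The paper does not prove this lemma at all: it is imported verbatim from the cited reference and used as a black box, so there is no in-paper argument to compare against. Judged on its own, your proof is the standard and correct one (the Gaussian-integer proof of Fermat's two-square theorem): the multiplicative-norm reduction to solvability of $a^2+b^2=p$ is right, the mod-$4$ obstruction settles the case $p\equiv 3\pmod 4$, and the case $p\equiv 1\pmod 4$ goes through once a square root of $-1$ modulo $p$ is produced, which your order-$4$-element (or Euler's criterion) argument does correctly. The final observation that $\alpha$ is irreducible because $N(\alpha)=p$ is a rational prime is also right and is needed for the conclusion that $p$ is a product of a Gaussian \emph{prime} and its conjugate.

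One step deserves to be made explicit: when you argue that ``if $p$ were a Gaussian prime, it would have to divide one of the two factors $(x+{\rm j})(x-{\rm j})$,'' you are invoking that irreducible elements of $\mathbb{Z}[{\rm j}]$ satisfy Euclid's lemma, i.e.\ that irreducible implies prime in the ideal-theoretic sense of Definition~\ref{DefPriIde}. This is true because $\mathbb{Z}[{\rm j}]$ is a Euclidean domain (with respect to the very norm you are using), hence a PID and a UFD, but since Definition~\ref{DefGauPri} defines a Gaussian prime only by irreducibility, the implication is not automatic from the definition and should be stated, with a one-line justification via the Euclidean property of the norm. With that sentence added, the proof is complete and self-contained, which is arguably an improvement over the paper's reliance on an external citation.
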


\begin{lemma}\cite{fine1947binomial}\label{BinoPri}
    Let $p\in\mathbb{Z}$ be a prime number. Then, we have $p \mid \binom{p}{k}, k=1,2,\cdots,p-1$,
where $\binom{p}{k}$ are binomial coefficients.
\end{lemma}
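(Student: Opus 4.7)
The plan is to give the standard arithmetic proof based on the factorial formula $\binom{p}{k} = \frac{p!}{k!\,(p-k)!}$ combined with Euclid's lemma (the fact that a prime dividing a product divides one of the factors). The statement is elementary, so no cyclotomic or polynomial machinery is needed: the proof will live entirely in $\mathbb{Z}$.

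First I would clear denominators to obtain the integer identity
\begin{equation*}
k!\,(p-k)!\,\binom{p}{k} \;=\; p! \;=\; p\cdot(p-1)!,
\end{equation*}
which makes it manifest that $p$ divides the left-hand side. The goal is then to conclude that $p$ divides the factor $\binom{p}{k}$ rather than the factor $k!\,(p-k)!$.

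Next I would argue that $\gcd\bigl(p,\,k!\,(p-k)!\bigr)=1$ for every $k$ in the range $1\le k\le p-1$. The reason is that $k!\,(p-k)!$ is a product of positive integers each strictly less than $p$ (since both $k$ and $p-k$ lie in $\{1,\dots,p-1\}$), and since $p$ is prime, none of those factors is divisible by $p$. By repeated application of Euclid's lemma to the prime $p$, no product of such factors is divisible by $p$ either.

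Finally, combining the two observations and invoking Euclid's lemma one more time on the identity $p \mid k!\,(p-k)!\,\binom{p}{k}$ with $\gcd(p,k!\,(p-k)!)=1$ yields $p \mid \binom{p}{k}$, which is the claim. I do not anticipate any real obstacle here; the only care point is to make the coprimality step fully explicit, since it is the hinge on which the conclusion rests.
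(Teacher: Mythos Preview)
Your argument is correct and complete: clearing denominators in $\binom{p}{k}=\frac{p!}{k!(p-k)!}$ and invoking Euclid's lemma is exactly the standard elementary proof. The paper does not supply its own proof of this lemma at all---it simply cites it as a known result from \cite{fine1947binomial}---so there is nothing to compare against, and your self-contained derivation is entirely appropriate.
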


Let $f(x)$ be a polynomial with Gaussian integer coefficients. Lemma \ref{GaussLemma} introduces the Gauss's lemma which shows that if $f(x)$ is reducible over $\mathbb{Q}[{\rm j}]$, then $f(x)$ is also reducible over $\mathbb{Z}[{\rm j}]$. In other words, we have that if $f(x)$ is irreducible over $\mathbb{Z}[{\rm j}]$, then $f(x)$ is also irreducible over $\mathbb{Q}[{\rm j}]$. In addition, Lemma \ref{EisenCrit} introduces the Eisenstein's criterion, which can be used to determine the irreducibility of $f(x)$ over $\mathbb{Z}$[{\rm j}]. Furthermore, Lemma \ref{PrimeIrrLe} shows that $f(x) = x^{p-1}+x^{p-2} + \cdots+1$ is irreducible over $\mathbb{Z}[{\rm j}]$, where $p$ is a prime number. 

\begin{lemma}\cite[Chapter 9, Proposition 5]{dummit2004abstract}(Gauss's lemma)\label{GaussLemma}
    Let $f(x)$ be a polynomial with Gaussian integer coefficients. if $f(x)$ is reducible over $\mathbb{Q}[{\rm j}]$ then $f(x)$ is reducible over $\mathbb{Z}[{\rm j}]$.
\end{lemma}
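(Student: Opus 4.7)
The plan is to adapt the classical proof of Gauss's lemma (over $\mathbb{Z}$) to the ring $\mathbb{Z}[{\rm j}]$, exploiting the fact that $\mathbb{Z}[{\rm j}]$ is a Euclidean domain with respect to the norm $N(a+b{\rm j}) = a^2+b^2$ and therefore a unique factorization domain. The central notion is the \emph{content} of a polynomial: for $f(x) = \sum_{i=0}^n a_i x^i \in \mathbb{Z}[{\rm j}][x]$, define $c(f)$ to be a greatest common divisor in $\mathbb{Z}[{\rm j}]$ of the coefficients $a_0,\dots,a_n$, well defined up to multiplication by a Gaussian unit (an element of $\{1,-1,{\rm j},-{\rm j}\}$). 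Call $f$ \emph{primitive} when $c(f)$ is a unit.

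First I would establish the key auxiliary statement: if $f,g \in \mathbb{Z}[{\rm j}][x]$ are both primitive, then so is $fg$. The argument is by contradiction. If $fg$ is not primitive then some Gaussian prime $\pi$ divides $c(fg)$, hence every coefficient of $fg$. Let $i^{\star}$ and $j^{\star}$ be the smallest indices with $\pi \nmid a_{i^{\star}}$ and $\pi \nmid b_{j^{\star}}$, which exist by primitivity of $f$ and $g$. The coefficient of $x^{i^{\star}+j^{\star}}$ in $fg$ can be written as
\begin{align}
a_{i^{\star}} b_{j^{\star}} \;+\; \sum_{\substack{k+\ell = i^{\star}+j^{\star}\\(k,\ell)\neq (i^{\star},j^{\star})}} a_k b_\ell,
\end{align}
and by minimality of $i^{\star},j^{\star}$, every term of the latter sum is divisible by $\pi$. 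Hence $\pi \mid a_{i^{\star}} b_{j^{\star}}$, and since $\mathbb{Z}[{\rm j}]$ is a UFD every Gaussian prime is a prime element in the ring-theoretic sense, so $\pi \mid a_{i^{\star}}$ or $\pi \mid b_{j^{\star}}$, a contradiction.

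Next, given a factorization $f = g h$ with $g, h \in \mathbb{Q}[{\rm j}][x]$ nonconstant, I would clear denominators: writing the Gaussian-rational coefficients of $g$ and $h$ over common Gaussian-integer denominators $\beta$ and $\delta$ and then extracting contents, one obtains $g = (\alpha/\beta)\, g_0$ and $h = (\gamma/\delta)\, h_0$ with $\alpha,\gamma \in \mathbb{Z}[{\rm j}]$ and $g_0, h_0 \in \mathbb{Z}[{\rm j}][x]$ primitive. Then
\begin{align}
\beta\delta\, f \;=\; \alpha\gamma\, g_0 h_0.
\end{align}
Taking contents on both sides and invoking the primitivity step ($g_0 h_0$ is primitive), one gets $\beta\delta\, c(f) = \alpha\gamma$ up to a unit. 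Thus $\alpha\gamma/(\beta\delta) \in \mathbb{Z}[{\rm j}]$, and absorbing this Gaussian integer into one of the factors produces the desired nontrivial factorization of $f$ in $\mathbb{Z}[{\rm j}][x]$ whose factors have the same degrees as $g$ and $h$.

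The main obstacle will be the primitivity-preservation step, because it quietly uses that every Gaussian prime $\pi$ generates a prime ideal $\mathbb{P}$ of $\mathbb{Z}[{\rm j}]$, equivalently that $\mathbb{Z}[{\rm j}]/\mathbb{P}$ is an integral domain. This in turn rests on $\mathbb{Z}[{\rm j}]$ being a UFD, which follows from the Euclidean algorithm on the norm $N(a+b{\rm j})=a^2+b^2$; once this foundational fact is in place, the rest of the argument is routine content bookkeeping.
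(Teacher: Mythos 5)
Your proof is correct: it is the standard content/primitivity argument for Gauss's lemma over a UFD, applied to the Euclidean domain $\mathbb{Z}[{\rm j}]$, and every step (primitivity of products via a minimal-index coefficient computation, clearing denominators, and comparing contents) goes through as you describe. The paper does not prove this lemma but cites it directly from Dummit and Foote, and your argument is essentially the proof given in that reference, so there is nothing to reconcile.
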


\begin{lemma}\cite[Chapter $9$, Proposition $13$]{dummit2004abstract}(Eisenstein's criterion)\label{EisenCrit}
Let $\mathbb{P}$ be a prime ideal of $\mathbb{Z}[{\rm j}]$ and let $f(x)=x^n+a_{n-1} x^{n-1}+\cdots+a_1 x+a_0$ be a polynomial over $\mathbb{Z}[{\rm j}]$ ($n \geq 1$). Suppose $a_{n-1}, \ldots, a_1, a_0$ are all elements of $\mathbb{P}$ and suppose $a_0$ is not an element of $\mathbb{P}^2$ where $\mathbb{P}^2 = \{ab\mid a\in\mathbb{P}, b\in \mathbb{P}\}$, then $f(x)$ is irreducible over $\mathbb{Z}[{\rm j}]$.
\end{lemma}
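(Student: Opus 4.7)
The plan is to prove Eisenstein's criterion by contradiction via reduction modulo the prime ideal $\mathbb{P}$. First I would assume, for contradiction, that $f(x)$ factors nontrivially in $\mathbb{Z}[{\rm j}][x]$ as $f(x) = g(x) h(x)$ with $\deg g \geq 1$ and $\deg h \geq 1$. Because $f$ is monic, the leading coefficients of $g$ and $h$ are units in $\mathbb{Z}[{\rm j}]$ whose product is $1$; by absorbing such a unit into one of the factors, I may arrange both $g$ and $h$ to be monic.

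The central tool is the quotient ring $R \triangleq \mathbb{Z}[{\rm j}]/\mathbb{P}$, which is an integral domain because $\mathbb{P}$ is a prime ideal in $\mathbb{Z}[{\rm j}]$. Coefficient-wise reduction modulo $\mathbb{P}$ defines a ring homomorphism $\mathbb{Z}[{\rm j}][x] \to R[x]$; let $\bar f, \bar g, \bar h$ denote the images of $f, g, h$. The hypothesis $a_{n-1},\ldots,a_0 \in \mathbb{P}$ forces $\bar f(x) = x^n$, so applying the homomorphism to the factorization yields $\bar g(x)\,\bar h(x) = x^n$ in $R[x]$.

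The next step is to show that both $\bar g$ and $\bar h$ must be pure powers of $x$. Since $R$ is an integral domain, $R[x]$ is too, and $x$ is a prime element of $R[x]$ because $R[x]/(x) \cong R$ is a domain. Combined with the monicness of $g$ and $h$ (whose leading coefficient $1$ survives the reduction because $1 \notin \mathbb{P}$), the only way $\bar g \bar h = x^n$ can hold is $\bar g(x) = x^k$ and $\bar h(x) = x^{n-k}$ with $1 \leq k \leq n-1$. Consequently, the constant terms reduce to $0 \bmod \mathbb{P}$, i.e.\ $g(0), h(0) \in \mathbb{P}$, and therefore $a_0 = f(0) = g(0)\,h(0) \in \mathbb{P}^2$, contradicting the hypothesis $a_0 \notin \mathbb{P}^2$.

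I expect the only slightly delicate step to be the monic reduction: one must check that multiplying $g$ and $h$ by units in $\mathbb{Z}[{\rm j}]$ (whose unit group is $\{\pm 1,\pm{\rm j}\}$) suffices to make them both monic while preserving the factorization, which is clean since $f$ is monic and the product of the leading coefficients of $g$ and $h$ equals $1$. Everything else is a routine application of unique factorization in a polynomial ring over an integral domain, plus the characterization that the factor ring modulo a prime ideal is itself an integral domain.
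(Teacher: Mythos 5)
The paper does not prove this lemma at all---it is quoted directly from Dummit and Foote as a known result---so the only fair comparison is with the standard textbook argument, and your reduction-modulo-$\mathbb{P}$ proof is exactly that argument, carried out correctly: making $g$ and $h$ monic preserves their degrees under reduction, $x$ is prime in $(\mathbb{Z}[{\rm j}]/\mathbb{P})[x]$, so $\bar g=x^k$, $\bar h=x^{n-k}$ with $1\leq k\leq n-1$, and $a_0=g(0)h(0)\in\mathbb{P}^2$ gives the contradiction. The one step worth stating explicitly is that for a \emph{monic} $f$ any factorization into two non-units automatically has both factors nonconstant (a constant factor would divide the leading coefficient $1$ and hence be a unit), which is what licenses your opening assumption $\deg g\geq 1$ and $\deg h\geq 1$ and makes ``not a product of two nonconstant polynomials'' equivalent to the irreducibility claimed in the lemma.
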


\begin{lemma}\label{PrimeIrrLe}
    Let $f(x) = x^{p-1}+x^{p-2}+\cdots+1$, where $p\in\mathbb{Z}$ is a prime number. $f(x)$ is irreducible over $\mathbb{Z}[{\rm j}]$.
\end{lemma}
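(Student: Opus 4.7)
The plan is to mimic the classical proof that the $p$th cyclotomic polynomial is irreducible over $\mathbb{Z}$, but carried out inside $\mathbb{Z}[{\rm j}]$. Specifically, I will perform the substitution $x \mapsto x+1$, obtain
\begin{align*}
f(x+1) = \frac{(x+1)^p - 1}{x} = \sum_{k=1}^{p}\binom{p}{k} x^{k-1},
\end{align*}
which is a monic polynomial of degree $p-1$ with constant term $p$ and with all other non-leading coefficients equal to $\binom{p}{k}$ for some $1\le k\le p-1$. Since irreducibility of $f(x+1)$ in $\mathbb{Z}[{\rm j}][x]$ implies irreducibility of $f(x)$ in $\mathbb{Z}[{\rm j}][x]$ (any factorization of $f(x)$ translates into a nontrivial factorization of $f(x+1)$ by substitution), it suffices to apply the Eisenstein-type criterion of Lemma \ref{EisenCrit} to $f(x+1)$ for a suitable prime ideal $\mathbb{P}$ of $\mathbb{Z}[{\rm j}]$.

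To select $\mathbb{P}$, I split on $p \pmod 4$, using the description of Gaussian primes in Lemma \ref{GaussPri}. The case $p=2$ gives $f(x) = x+1$, which is already irreducible, so nothing needs to be shown. For $p \equiv 3 \pmod 4$, the rational prime $p$ remains a Gaussian prime, so I take $\mathbb{P} = \{p i \mid i\in\mathbb{Z}[{\rm j}]\}$; by Lemma \ref{BinoPri} each coefficient $\binom{p}{k}$ with $1 \le k \le p-1$ is a multiple of $p$ and hence lies in $\mathbb{P}$, while the constant term $p$ itself lies in $\mathbb{P}$ but not in $\mathbb{P}^2$ (otherwise $p = p^2 u$ for some $u\in\mathbb{Z}[{\rm j}]$, which is impossible by a norm argument). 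For $p \equiv 1 \pmod 4$, Lemma \ref{GaussPri} gives a factorization $p = \pi\bar\pi$ with $\pi$ a Gaussian prime, and I take $\mathbb{P} = \{\pi i \mid i \in \mathbb{Z}[{\rm j}]\}$; again $\binom{p}{k} \in \mathbb{P}$ because $p \in \mathbb{P}$, while the constant term $p = \pi\bar\pi$ is in $\mathbb{P}$.

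The main obstacle I anticipate is the verification that $p \notin \mathbb{P}^2$ in the case $p \equiv 1 \pmod 4$. Because $\mathbb{P}^2 = \{\pi^2 i \mid i\in\mathbb{Z}[{\rm j}]\}$, having $p \in \mathbb{P}^2$ would force $\pi \mid \bar\pi$ in $\mathbb{Z}[{\rm j}]$, i.e., $\bar\pi$ would be an associate of $\pi$. I will rule this out by noting that if $p = a^2 + b^2$ with $\pi = a + b{\rm j}$, then $p \equiv 1 \pmod 4$ forces both $a$ and $b$ to be nonzero and of opposite parity, so $\bar\pi = a - b{\rm j}$ is not equal to any of $\pm\pi, \pm{\rm j}\pi$; hence $\pi$ and $\bar\pi$ are non-associate Gaussian primes, and $p \notin (\pi^2)$. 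Once this is in place, Lemma \ref{EisenCrit} applies in both non-trivial cases, showing $f(x+1)$ is irreducible over $\mathbb{Z}[{\rm j}]$, and therefore $f(x)$ is irreducible over $\mathbb{Z}[{\rm j}]$, completing the proof.
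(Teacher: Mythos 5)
Your proposal is correct and follows essentially the same route as the paper: the shift $x\mapsto x+1$, Eisenstein's criterion (Lemma \ref{EisenCrit}) with the prime ideal generated by $p$ itself when $p\equiv 3\pmod 4$ and by a Gaussian prime factor $\pi$ of $p$ when $p\equiv 1\pmod 4$, and a check that the constant term $p$ avoids $\mathbb{P}^2$. Your verification that $\pi$ and $\bar\pi$ are non-associate (via the opposite parity of $a$ and $b$) is just a repackaging of the paper's argument that $|m|=|n|$ would force $p=2m^2$ to be even.
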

\begin{proof}
    First, when $p=2$ we have $f(x) = x+1$. It is obvious that $f(x)$ is irreducible over $\mathbb{Z}[{\rm j}]$. 
    
    In the following, we consider the case when $p>2$. Note that $f(x) = \frac{x^p-1}{x-1}$. By substituting $x$ with $x+1$ in $f(x)$, we have $f(x+1) = \frac{(x+1)^p-1}{x} = \sum_{i=1}^p \binom{p}{i}x^{i-1}$, where $\binom{p}{i}$ are binomial coefficients. In addition, let $a_{i-1} = \binom{p}{i}, i=1,2,\cdots,p-1$, we have $f(x+1) = x^{p-1} + \sum_{i=0}^{p-2} a_{i}x^{i}$. It is worth noting that if $f(x+1)$ is irreducible over $\mathbb{Z}[{\rm j}]$ then $f(x)$ is also irreducible over $\mathbb{Z}[{\rm j}]$. As $p$ is a prime number greater than $2$, it follows that $p$ is congruent to either $1$ or $3$ modulo $4$. We first consider the case that $p$ is congruent to $3$ modulo $4$. According to Lemma \ref{GaussPri}, $p$ is a Gaussian prime. Let $\mathbb{P}_1 \triangleq \{pi \mid i\in\mathbb{Z}[{\rm j}]\}$ be the prime ideal generated by $p$ over $\mathbb{Z}[{\rm j}]$. According to Lemma \ref{BinoPri}, we have $a_{0},a_{1},\cdots,a_{p-2}\in\mathbb{P}_1$. In addition, it is obvious that $a_{0}=p\notin\mathbb{P}_1^2$, where $\mathbb{P}_1^2 = \{ab\mid a\in\mathbb{P}_1, b\in \mathbb{P}_1\}$. Therefore, according to Eisenstein’s criterion, i.e., Lemma \ref{EisenCrit}, $f(x+1)$ is irreducible over $\mathbb{Z}[{\rm j}]$, which implies that $f(x)$ is also irreducible over $\mathbb{Z}[{\rm j}]$. We now consider the case that $p$ is congruent to $1$ modulo $4$. According to Lemma \ref{GaussPri}, $p$ can be represented as $p=p_1p_1^*$, where $p_1$ and $p_1^*$ are Gaussian primes. Let $p_1 = m+n{\rm j}$ where $m,n\in\mathbb{Z}$. Since $p$ is a prime number that cannot be expressed as the square of an integer, we have $m\neq 0$ and $n\neq 0$ \footnote{When $m=0$, we have $p=p_1p_1^* = n{\rm j}(-n{\rm j}) = n^2$.}. Let $\mathbb{P}_2\triangleq \{p_1i \mid i\in\mathbb{Z}[{\rm j}]\}$ be the prime ideal generated by $p_1$ over $\mathbb{Z}[{\rm j}]$. Note that $\mathbb{P}_2$ contains all multiples of $p$ as $p=p_1p_1^*$. Thus, according to Lemma \ref{BinoPri}, we have $a_{0},a_{1},\cdots,a_{p-2}\in\mathbb{P}_2$. Below, we prove $a_0\notin\mathbb{P}^2_2$ by contradiction. Each nonzero element in $\mathbb{P}_2^2$ can be represented as $p_1^2q$, where $q\in \mathbb{Z}[{\rm j}]$. Assuming $a_0\in \mathbb{P}_2^2$, we have $p_1p_1^* = p_1^2q$ for some $q\in \mathbb{Z}[{\rm j}]$. Therefore, we have $p_1^* = p_1q$. Since $p_1^* = m - n{\rm j}$ is a Gaussian prime, $q$ can only be the units of $\mathbb{Z}[{\rm j}]$, i.e., $q\in\{1,{\rm j},-1,-{\rm j}\}$. In addition, for $m\neq 0$ and $n\neq 0$, $q$ is either equal to ${\rm j}$ or equal to $-{\rm j}$ \footnote{If $q=1$, we have $n=0$ since $p_1^* = p_1q$.}. Thus, we have $q=1$ or $q=-1$, which implies $|m| = |n|$. Therefore, $p=p_1p_1^* = (m+n{\rm j})(m-n{\rm j}) = m^2+n^2 = 2m^2 = 2n^2$, which is an even number. This contradicts the fact that $p$ is a prime number greater than $2$. Consequently, we can infer that $a_0\notin \mathbb{P}_2^2$. Similarly, according to Eisenstein’s criterion, i.e., Lemma \ref{EisenCrit}, we deduce that $f(x+1)$ is irreducible over $\mathbb{Z}[{\rm j}]$ when $p>2$, which implies that $f(x)$ is also irreducible over $\mathbb{Z}[{\rm j}]$. 
\end{proof}
\textbf{\emph{Proof of Property \ref{LemGauRat}:}} By Lemma \ref{PrimeIrrLe}, $f(x) = x^{p-1}+x^{p-2}+\cdots+1$ is irreducible over $\mathbb{Z}[{\rm j}]$. Furthermore, according to Gauss's lemma in Lemma \ref{GaussLemma}, $f(x)$ is also irreducible over $\mathbb{Q}[{\rm j}]$.

\subsection{Proof of Property \ref{LemPowGau}}\label{AppenB}
Before proving Property \ref{LemPowGau}, we first introduce Lemma \ref{BinoPri2} and Lemma \ref{PowIrrLe}. Let $m$ be a positive integer. Lemma \ref{BinoPri2} shows that $2\mid \binom{2^m}{k}$ for any integer $k$ satisfying $1\leq k < 2^m$. Furthermore, Lemma \ref{PowIrrLe} reveals that $f(x) = x^{2^m} - {\rm j}$ or $f(x) = x^{2^m} + {\rm j}$ is irreducible over $\mathbb{Z}[{\rm j}]$.

\begin{lemma}\cite{fine1947binomial}\label{BinoPri2}
    Let $m$ and $k$ be positive integers. Then, we have $2 \mid \binom{2^m}{k}, k=1,2,\cdots,2^m-1$.
\end{lemma}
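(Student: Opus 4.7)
The plan is to establish $2 \mid \binom{2^m}{k}$ for every $1 \leq k \leq 2^m - 1$ by tracking $2$-adic valuations. The main tool will be the elementary absorption identity
\[
k \binom{2^m}{k} \;=\; 2^m \binom{2^m-1}{k-1},
\]
which follows immediately from the factorial definition of the binomial coefficients and requires no combinatorial machinery.

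The first step is to record this identity and then apply $v_2(\cdot)$ to both sides, which yields
\[
v_2(k) + v_2\!\left(\binom{2^m}{k}\right) \;=\; m + v_2\!\left(\binom{2^m-1}{k-1}\right).
\]
The second step is to exploit the range constraint: since $1 \leq k \leq 2^m - 1$, we have $k < 2^m$, which forces $v_2(k) \leq m-1$ (if instead $v_2(k) \geq m$ then $2^m \mid k$, contradicting $k < 2^m$). Combining this with the trivial inequality $v_2\!\left(\binom{2^m-1}{k-1}\right) \geq 0$ gives
\[
v_2\!\left(\binom{2^m}{k}\right) \;\geq\; m - v_2(k) \;\geq\; 1,
\]
which is exactly the desired divisibility.

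As a cross-check, an alternative route using Legendre's formula $v_2(n!) = n - s_2(n)$ (where $s_2$ denotes the binary digit sum) gives the sharper identity
\[
v_2\!\left(\binom{2^m}{k}\right) \;=\; s_2(k) + s_2(2^m - k) - 1,
\]
and since both $k$ and $2^m - k$ lie in $\{1,\ldots,2^m-1\}$, each contributes at least one nonzero binary digit, so the valuation is at least $1$. There is essentially no obstacle in this argument: the result is classical, and the only real content is the elementary bound $v_2(k) \leq m - 1$ for $0 < k < 2^m$. If a fully self-contained proof were undesired, one could alternatively cite Kummer's theorem (number of carries in base-$2$ addition of $k$ and $2^m - k$) to finish in one line.
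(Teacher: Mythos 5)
Your proof is correct. Note that the paper does not actually prove this lemma: it is stated with a citation to Fine's classical result on binomial coefficients modulo a prime, so there is no in-paper argument to compare against. Your absorption-identity route is sound and self-contained: the identity $k\binom{2^m}{k}=2^m\binom{2^m-1}{k-1}$ is immediate from the factorial definition, the bound $v_2(k)\leq m-1$ for $1\leq k\leq 2^m-1$ is exactly the right observation, and the nonnegativity of $v_2\bigl(\binom{2^m-1}{k-1}\bigr)$ closes the argument, giving $v_2\bigl(\binom{2^m}{k}\bigr)\geq m-v_2(k)\geq 1$. The Legendre/Kummer cross-check is also correct (indeed it proves the stronger statement $v_2\bigl(\binom{2^m}{k}\bigr)=s_2(k)+s_2(2^m-k)-1$, which recovers the exact power of $2$ dividing the coefficient, more than the lemma needs). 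Either of your two arguments would serve as a valid replacement for the external citation; the first is the more elementary of the two since it avoids Legendre's formula entirely.
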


\begin{lemma}\label{PowIrrLe}
    Let $f(x) = x^{2^m}-{\rm j}$ or $f(x) = x^{2^m}+{\rm j}$ where $m$ is a positive integer. $f(x)$ is irreducible over $\mathbb{Z}[{\rm j}]$.
\end{lemma}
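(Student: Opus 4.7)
The plan is to apply Eisenstein's criterion (Lemma \ref{EisenCrit}) to a shifted version of $f(x)$, using a Gaussian prime that lies above the rational prime $2$. Since neither $x^{2^m}-\mathrm{j}$ nor $x^{2^m}+\mathrm{j}$ has a non-leading coefficient in a nontrivial prime ideal as written (the constant $\pm\mathrm{j}$ is a unit), a direct application of Eisenstein fails, so the substitution $x \mapsto x+1$ is the key preparatory step. The natural prime ideal to use is $\mathbb{P} \triangleq (1+\mathrm{j})\mathbb{Z}[\mathrm{j}]$, because $1+\mathrm{j}$ is a Gaussian prime and $2 = -\mathrm{j}(1+\mathrm{j})^2$, so $\mathbb{P}^2 = 2\mathbb{Z}[\mathrm{j}]$.

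First I would expand $f(x+1)$ via the binomial theorem to obtain
\begin{align}
f(x+1) = x^{2^m} + \sum_{k=1}^{2^m-1}\binom{2^m}{k}x^{k} + (1 \mp \mathrm{j}),
\end{align}
where the sign in the constant term matches the sign in the definition of $f(x)$. Next I would verify the three hypotheses of Eisenstein's criterion with respect to $\mathbb{P}$. The leading coefficient is $1$, which causes no issue. For the middle coefficients, Lemma \ref{BinoPri2} gives $2 \mid \binom{2^m}{k}$ for $1 \le k \le 2^m-1$, and since $2 \in \mathbb{P}^2 \subseteq \mathbb{P}$, every such $\binom{2^m}{k}$ lies in $\mathbb{P}$. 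For the constant term, I would check that $1-\mathrm{j} = -\mathrm{j}(1+\mathrm{j})$ and $1+\mathrm{j}$ both belong to $\mathbb{P}$, while neither belongs to $\mathbb{P}^2 = 2\mathbb{Z}[\mathrm{j}]$ since $(1\pm\mathrm{j})/2 \notin \mathbb{Z}[\mathrm{j}]$.

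With these verifications in place, Eisenstein's criterion yields that $f(x+1)$ is irreducible over $\mathbb{Z}[\mathrm{j}]$, and a factorization of $f(x)$ would induce one of $f(x+1)$ of the same shape, so $f(x)$ itself is irreducible over $\mathbb{Z}[\mathrm{j}]$. The same argument applies verbatim to both signs in $f(x) = x^{2^m} \pm \mathrm{j}$.

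The main obstacle is choosing the right prime ideal and the right substitution: one needs to recognize that $2$ ramifies in $\mathbb{Z}[\mathrm{j}]$ as $(1+\mathrm{j})^2$ up to a unit, so that the binomial coefficients (known to be divisible by $2$) land in $\mathbb{P}$, and simultaneously that the Gaussian integers $1 \pm \mathrm{j}$ are exact associates or translates of the uniformizer $1+\mathrm{j}$ — precisely the $\mathbb{P}$-adic valuation $1$ condition needed for the constant term. Once this number-theoretic picture is in focus, the computation is short and parallel to the proof of Lemma \ref{PrimeIrrLe}.
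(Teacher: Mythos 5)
Your proposal is correct and follows essentially the same route as the paper: substitute $x \mapsto x+1$, then apply Eisenstein's criterion at the prime ideal above $2$, using Lemma \ref{BinoPri2} for the middle coefficients and the fact that the constant term $1\mp{\rm j}$ is an associate of the uniformizer. The only cosmetic difference is that you generate the ideal with $1+{\rm j}$ while the paper uses $1-{\rm j}$; these are associates, so the ideals coincide and the arguments are identical.
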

\begin{proof}
    We only provide the proof of the case that $f(x) = x^{2^m}-{\rm j}$. For the other case where $f(x) = x^{2^m}+{\rm j}$, the proof is similar and is omitted here. By substituting $x$ with $x+1$ in $f(x)$, we have $f(x+1) = (x+1)^{2^m}-{\rm j} = \sum_{i=1}^{2^m} \binom{2^m}{i}x^{i} + 1-{\rm j}$. In addition, let $a_{i} = \binom{2^m}{i}, i=1,2,\cdots,2^m-1$ and $a_0 = 1-{\rm j}$, we have $f(x+1) = x^{2^m} + \sum_{i=0}^{2^m-1} a_{i}x^{i}$. Note that if $f(x+1)$ is irreducible over $\mathbb{Z}[{\rm j}]$ then $f(x)$ is also irreducible over $\mathbb{Z}[{\rm j}]$. Let $p = 1-{\rm j}$, which is a Gaussian prime, and let $\mathbb{P}$ be the prime ideal generated by $p$ over $\mathbb{Z}[{\rm j}]$. According to Lemma \ref{BinoPri2}, we have $a_{1},\cdots,a_{2^m-1}\in\mathbb{P}$ as $2 = (1-{\rm j})(1+{\rm j}) = pp^*$. In addition, we also have $a_0\in\mathbb{P}$ as $a_0 = (1-{\rm j}) = p$. Moreover, it is obvious that $a_{0}=p\notin\mathbb{P}^2$, where $\mathbb{P}^2 = \{ab\mid a\in\mathbb{P}, b\in \mathbb{P}\}$. Thus, according to Eisenstein’s criterion, i.e., Lemma \ref{EisenCrit}, we have that $f(x+1)$ is irreducible over $\mathbb{Z}[{\rm j}]$, which implies that $f(x)$ is irreducible over $\mathbb{Z}[{\rm j}]$. 
\end{proof}
\textbf{\emph{Proof of Property \ref{LemPowGau}:}} According to Lemma \ref{PowIrrLe}, $x^{2^m} - {\rm j}$ and $ x^{2^m} + {\rm j}$ are irreducible over $\mathbb{Z}[{\rm j}]$. Furthermore, based on Gauss's lemma in Lemma \ref{GaussLemma}, $x^{2^m} - {\rm j}$ and $ x^{2^m} + {\rm j}$ are irreducible over $\mathbb{Q}[{\rm j}]$.
\subsection{Proof of Proposition \ref{CycPriP}}\label{AppenC}
Let $g(x)\in\mathbb{F}[x]$ be a polynomial of the minimum degree having a root at $\xi$. According to the polynomial remainder theorem in Proposition \ref{polyReTh}, we have $f(x) = g(x)q(x) + r(x)$, where $q(x)$ and $r(x)$ are polynomials over $\mathbb{F}$, and the degree of $r(x)$ is less than that of $g(x)$. Because $f(\xi) = g(\xi)q(\xi) + r(\xi) =0$ and $g(\xi)=0$, we have $r(\xi) = 0$. As $g(x)$ is a polynomial of the minimum degree having a root at $\xi$ and the degree of $r(x)$ is less than that of $g(x)$, we have $r(x)= 0$. Therefore, we have $f(x) = g(x)q(x)$. Since $f(x)$ is irreducible over $\mathbb{F}$, $q(x)$ is a constant. Consequently, $g(x)$ has the same degree as $f(x)$. 
\subsection{Cyclotomic polynomial}\label{AppenD}
Let $\xi_{n,k}\triangleq {\rm e}^{{\rm j}\frac{2\pi k}{n}}$ be an $n$th root of unity, where $k$ and $n$ are positive integers and $k \leq n$. In Definition \ref{PriRU}, we show that $\xi_{n,k}$ is a primitive $n$th root of unity if and only if ${\rm gcd}(k,n)=1$, where ${\rm gcd}(k,n)$ denotes the greatest common divisor of $k$ and $n$. In addition, in Definition \ref{CycP}, we show that the $n$th cyclotomic polynomial is a polynomial with its roots being primitive $n$th roots of unity. Let $\Phi_n(x)$ be the $n$th cyclotomic polynomial. Property \ref{FacC} shows that the product of $\Phi_d(x)$ for all $d\in\mathcal{D}$ equals to $x^N-1$, where $\mathcal{D}$ is the set that contains all positive integers that divide $N$ (\ref{DefD_N}). In addition, Property \ref{MonI} shows that $\Phi_n(x)$ is a monic polynomial with integer coefficients. Finally, Property \ref{irre_poly} shows that $\Phi_n(x)$ is irreducible over the rational number field $\mathbb{Q}$. 

\begin{definition}\label{PriRU}(Primitive $n$th root of unity)
Let $\xi_{n,k} \triangleq{\rm e}^{{\rm j} \frac{2\pi k}{n}}$ which is an $n$th root of unity where $k$ and $n$ are both positive integers and $k\leq n$. $\xi_{n,k}$ is a primitive $n$th root of unity if and only if ${\rm gcd}(k,n)=1$.
\end{definition}

\begin{definition}\label{CycP}(Cyclotomic polynomial)
Let $\mathcal{T}$ be the set of primitive $n$th roots of unity, i.e., $\mathcal{T}\triangleq\{{\rm e}^{{\rm j} \frac{2 \pi k}{n}}\mid k\in\{1,2,\cdots,n\}\text{ and } {\rm gcd}(k,n)=1\}$. The $n$th cyclotomic polynomial denoted as $\Phi_n(x)$ is defined as 
\begin{align}
    \Phi_n(x) =\prod_{x_s \in \mathcal{T}}\left(x-x_s\right).
\end{align}
\end{definition}

\begin{property}\cite[Lemma 5.11]{aluffi2021algebra}\label{FacC}
Let $\mathcal{D}$ be the set that contains all positive integers that divide $N$, we have
\begin{align}
    x^N-1 = \prod_{d\in\mathcal{D}}\Phi_d(x).
\end{align}
\end{property}

\begin{property}\cite[Chapter 13, Lemma 40]{dummit2004abstract}\label{MonI}
Let $n$ be any positive integer. The $n$th cyclotomic polynomial $\Phi_n(x)$ is a monic polynomial with integer coefficients.
\end{property}

\begin{property}\cite[Proposition 5.14]{aluffi2021algebra}\label{irre_poly}
Let $n$ be any positive integer, the $n$th cyclotomic polynomial $\Phi_n(x)$ is irreducible over the rational number field $\mathbb{Q}$.
\end{property}
\bibliography{mybib}
\bibliographystyle{ieeetr}
\end{document}